\documentclass[reqno]{amsart}
\usepackage{amsmath}
\usepackage{amsfonts}

\usepackage{amsmath,amssymb}
\usepackage{graphicx}
\usepackage{color}


\setlength{\topmargin}{-0.1in} \setlength{\textheight}{8.3in}
\setlength{\oddsidemargin}{0.1 in} \setlength{\evensidemargin}{0.1 in} \setlength{\textwidth}{6.2 in}



\newtheorem{theorem}{Theorem}
\newtheorem{lemma}{Lemma}
\newtheorem{corollary}{Corollary}

\newtheorem*{conclusion*}{Conclusion}
\newtheorem*{definition*}{Definition}
\newtheorem*{remark*}{Remark}

\newtheorem{example}{Example}

\makeatletter

\newcommand{\Rmnum}[1]{\expandafter\@slowromancap\romannumeral #1@}
\makeatother


\newcommand{\ls}[1]
    {\dimen0=\fontdimen6\the\font\lineskip=#1\dimen0
     \advance\lineskip.5\fontdimen5\the\font
     \advance\lineskip-\dimen0
     \lineskiplimit=0.9\lineskip
     \baselineskip=\lineskip
     \advance\baselineskip\dimen0
     \normallineskip\lineskip\normallineskiplimit\lineskiplimit
     \normalbaselineskip\baselineskip
     \ignorespaces}


\begin{document}

\bibliographystyle{abbrv}

\title{Cyclic codes from  the second class two-prime Whiteman's generalized cyclotomic sequence with order 6}
\author{Pramod Kumar Kewat and Priti Kumari}
\address{Department of Applied Mathematics, Indian School of Mines, Dhanbad 826 004, India}
\email{kewat.pk.am@ismdhanbad.ac.in, priti.jsr13@gmail.com}        
\subjclass{}
\keywords{Cyclic codes, finite fields, cyclotomic sequences}
\date{}
 \maketitle
\markboth{P.K. Kewat and Priti Kumari}{Cyclic codes from  the second class two-prime WGCS with order 6}

\thispagestyle{plain} \setcounter{page}{1}

\begin{abstract}
Let $n_1=ef+1$ and  $n_2=ef'+1$ be two distinct odd primes with positive integers $e,\ f,\ f'.$ In this paper, the two-prime Whiteman's generalized cyclotomic sequence of order $e=6$
is employed to construct several classes of cyclic codes over $\mathrm{GF}(q)$ with length $n_1n_2$. The lower bounds on the minimum distance
of these cyclic codes are obtained.\\

\end{abstract}

\ls{1.5}
\section{Introduction}\label{section 1}
Let $q$ be a power of a prime $p$. An $[n,k,d]$ linear code $C$ over a 
finite field $\mathrm{GF}(q)$ is a $k-$dimensional subspace of the vector space $\mathrm{GF}(q)^n$ with the minimum distance $d$.
A linear code $C$ is a cyclic code if the cyclic shift of a codeword in $C$ is again a codeword in $C$, i.e., 
if $ (c_0,\cdots,c_{n-1})\in C$ then $(c_{n-1},c_0\cdots,c_{n-2})\in C$.
Let $\mathrm{gcd}(n, q)= 1$. We consider the univarivate polynomial ring $\mathrm{GF}(q)[x]$ and the ideal 
$I=\langle x^n-1\rangle$ of $\mathrm{GF}(q)[x].$ We denote by $R$ the ring  $\mathrm{GF}(q)[x]/I$.
We can consider a cyclic code of length $n$ over $\mathrm{GF}(q)$ as an ideal in $R$ via the following correspondence
\begin{align}
 \mathrm{GF}(q)^n \rightarrow  R,\ \ \ \ \ (c_0,c_1,\cdots,c_{n-1}) \mapsto c_0+c_1x+\cdots+c_{n-1}x^{n-1}.\nonumber
\end{align}
 Then, a linear code $C$ over $\mathrm{GF}(q)$ is a cyclic code if and only if $C$ is an ideal of $R$. Since $R$ is a principal ideal ring, if $C$ is not trivial, there exists a unique monic polynomial
 $g(x)$ dividing $x^n-1$ in $\mathrm{GF}(q)[x]$ and $C=\langle g(x) \rangle$. The polynomials $g(x)$ and $h(x)=(x^n-1)/g(x)$ are called
 the generator polynomial and the parity-check polynomial of $C$ respectively.
 If the dimension of the code $C$ is $k$, the generator polynomial has degree $n-k.$  An $[n,k,d]$ cyclic code $C$ is capable of encoding $q-$ary messages of length $k$ and requires $n-k$ 
 redundancy symbols.
 
 The total number of cyclic codes over $\mathrm{GF}(q)$ and their construction are closely related to the cyclotomic cosets modulo $n$. 
 One way to construct cyclic codes over $\mathrm{GF}(q)$ with length $n$ is to use the generator polynomial
\begin{align}
 \frac{x^n-1}{\mathrm{gcd}(x^n-1,S(x))}\label{genpol},
\end{align}
where $S(x)=\sum\limits_{i=0}^{n-1}s_ix^i \in \mathrm{GF}(q)[x]$ and $s^\infty={(s_i)}_{i=0}^{\infty}$ is a sequence of period $n$ over $\mathrm{GF}(q).$
The cyclic code $C_s$ generated by the polynomial in Eq. (\ref{genpol}) is called the cyclic code defined by the sequence $s^{\infty},$ and the sequence $s^{\infty}$ is called the defining sequence of the cyclic code $C_s.$ 

Cyclic codes have been studied in a series of papers and a lot of progress have been accomplished 
 (see, for example \cite{Betti06}, \cite{Ding15}, \cite{Eupen93},  \cite{Macwilliams77} and \cite{Lint86}).
 The Whiteman generalized cyclotomy was introduced by Whiteman and its properties were studied in \cite{whiteman62}.
 The two-prime Whiteman's generalized cyclotomic sequence(WGCS) was introduced by Ding \cite{Ding97} and its coding properties were studied in \cite{ding12} and \cite{white04}.
 Ding \cite{ding12} and Sun et al.\cite{white04} constructed number of classes of
cyclic codes over $\mathrm{GF}(q)$ with length $n=n_1n_2$ from the two-prime Whiteman's generalized cyclotomic sequences of order $2$ and $4$ respectively and gave the lower bounds
on the minimum weight of these cyclic codes under certain conditions. Li et al. \cite{Linearcom} gave a lower bound on linear complexity of WGCS of order six, which indicates
the linear complexity is large. The autocorrelation values of WGCS were determined in \cite{Auto13}.
Inspired by the construction of above two papers ( \cite{ding12} and \cite{white04}), in this paper, we employ the two-prime
Whiteman's generalized cyclotomic sequences with order $6$ to construct several classes of cyclic codes over $\mathrm{GF}(q)[x].$  We also obtain lower bounds 
on the minimum weight of these cyclic codes.
 \section{Preliminaries}\label{section 2}
 
\subsection{Linear complexity and minimal polynomial}\label{subsection 2.1}

If ${(s_i)}_{i=0}^{\infty}$ is a sequence over a finite field $\mathrm{GF}(q)$ and $f(x)$ is a polynomial with coefficients in $\mathrm{GF}(q)$ given by
\begin{align}
 f(x)=c_0+c_1x+\cdots+c_{L-1}x^{L-1},\nonumber
\end{align}
then we define 
\begin{align}
 f(E)s_j=c_0s_j+c_1s_{j-1}+\cdots+c_{L-1}s_{j-L+1},\nonumber
\end{align}
where $E$ is a left shift operator defined by $Es_i=s_{i-1}$ for $i\geq 1.$
Let $s^n$ be a sequence $s_0s_1\cdots s_{n-1}$ of length $n$ over a finite field $\mathrm{GF}(q)$. For a finite sequence, the $n$ is finite; for a semi-infinite sequence, the $n$ is $\infty$.
A polynomial $f(x)\in \mathrm{GF}(q)[x]$ of degree $\leqslant l$ with $c_0\neq 0$ is called a characteristic polynomial of the sequence $s^n$ if $f(E)s_j=0$ for all $j$ with $j\geq l.$
For every characteristic polynomial there is a least $l\geq \mathrm{deg}(f)$ such that the above equation hold. The smallest $l$  is called the associate recurrence length of $f(x)$
with respect to the sequence $s^n$. The characteristic polynomial with smallest length is known as minimal polynomial of the sequence $s^n$ and the associated recurrence 
length is called the linear span or linear complexity of the sequence $s^n$.\\ If a semi-infinite sequence $s^{\infty}$ is periodic, then its minimal polynomial is unique 
if $c_0=1.$ The linear complexity of a periodic sequence is equal to the degree of its minimal polynomial. For the periodic sequences,
there are few ways to determine their linear spans and minimal polynomials. One of them is given in the following lemma.
\begin{lemma}\label{spanmin}
\cite{finite} Let $s^\infty$ be a sequence of a period $n$ over $GF(q)$. Define
 \begin{align}
 S^{n}(x)=s_0+s_{1}x +\cdots+s_{n-1}x^{n-1}\in \mathrm{GF} (q) [x].\nonumber
 \end{align}
 Then the minimal polynomial $m_s$ of $s^\infty$ is given by
 \begin{align}
  \frac{x^n-1}{\mathrm{gcd}(x^n-1,S^n(x))}.\label{minpol}
 \end{align}
 Consequently, the linear span $L_s$ of $s^\infty$ is given by
 \begin{align}
  L_s=n-deg(\mathrm{gcd}(x^n-1,S^n(x))).\label{linearspan}
 \end{align}
\end{lemma}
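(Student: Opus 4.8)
The plan is to convert the recurrence condition that defines a characteristic polynomial into a polynomial divisibility condition by means of generating functions, and then to identify the minimal polynomial as the polynomial of least degree meeting that condition.

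\textbf{Step 1: a generating-function reformulation.} Because $s^\infty$ has period $n$, a polynomial $f(x)=c_0+c_1x+\cdots+c_lx^l$ of degree $l$ satisfies $f(E)s_j=0$ for all $j\geq l$ if and only if the same identity holds for every integer $j$ (reading $s^\infty$ as a two-sided $n$-periodic sequence). Introduce the generating function $S(x)=\sum_{i\geq 0}s_ix^i\in\mathrm{GF}(q)[[x]]$; periodicity yields $(1-x^n)S(x)=S^n(x)$, i.e. $S(x)=S^n(x)/(1-x^n)$. Comparing coefficients, the coefficient of $x^j$ in the product $f(x)S(x)$ equals $\sum_{i=0}^{l}c_is_{j-i}=f(E)s_j$ for every $j\geq l$. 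Hence a polynomial $f$ with $c_0\neq 0$ is a characteristic polynomial of $s^\infty$ exactly when $f(x)S(x)$ is a polynomial of degree less than $\mathrm{deg}(f)$, which, after multiplying by $1-x^n$, is equivalent to $(1-x^n)\mid f(x)S^n(x)$.

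\textbf{Step 2: extracting the minimal polynomial.} Put $d(x)=\mathrm{gcd}(x^n-1,S^n(x))$, chosen monic, and write $x^n-1=d(x)g(x)$ and $S^n(x)=d(x)S_1(x)$ with $\mathrm{gcd}(g(x),S_1(x))=1$. Then $(1-x^n)\mid f(x)S^n(x)$ holds iff $g(x)\mid f(x)S_1(x)$, and since $\mathrm{gcd}(g,S_1)=1$ this is equivalent to $g(x)\mid f(x)$. Taking $f=g$: a short computation gives $g(x)S(x)=-S_1(x)$, a polynomial of degree $\mathrm{deg}(S^n)-\mathrm{deg}(d)\leq n-1-\mathrm{deg}(d)<\mathrm{deg}(g)$, so $g$ is itself a characteristic polynomial whose associated recurrence length is exactly $\mathrm{deg}(g)=n-\mathrm{deg}(d)$, and $g(0)\neq 0$ because $d(0)g(0)=-1$. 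Conversely, every characteristic polynomial $f$ is a multiple of $g$, hence has degree at least $\mathrm{deg}(g)$ and recurrence length at least $\mathrm{deg}(f)\geq\mathrm{deg}(g)$. Thus $g$, normalized to be monic, is the characteristic polynomial of least recurrence length, so $m_s(x)=g(x)=(x^n-1)/\mathrm{gcd}(x^n-1,S^n(x))$ and consequently $L_s=\mathrm{deg}(m_s)=n-\mathrm{deg}(\mathrm{gcd}(x^n-1,S^n(x)))$; uniqueness is clear since the characteristic polynomials are precisely the multiples of $g$.

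\textbf{Main difficulty.} There is no deep obstacle; the delicate points are bookkeeping. One must carefully match the paper's index convention for $f(E)s_j$ with the coefficients of $f(x)S(x)$, verify that the degree bound ``less than $\mathrm{deg}(f)$'' is automatically met once $g\mid f$ (so that the divisibility condition alone captures the characteristic-polynomial property), and confirm that ``least degree'' and ``least recurrence length'' single out the same polynomial $g$. The edge case $S^n(x)=0$ (the all-zero sequence) gives $m_s=1$, $L_s=0$, consistent with the formula.
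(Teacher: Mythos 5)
Your proof is correct. Note that the paper offers no proof of this lemma at all: it is quoted from the literature (the reference cited as \cite{finite}) as a standard fact about periodic sequences, so there is no internal argument to compare against; what you have written is the classical generating-function proof, correctly adapted to the paper's convention $f(E)s_j=c_0s_j+c_1s_{j-1}+\cdots$ with $c_0\neq 0$ and the left shift $E$. The chain of equivalences is sound: $(1-x^n)S(x)=S^n(x)$, so $f$ is a characteristic polynomial iff $f(x)S(x)$ is a polynomial iff $(x^n-1)\mid f(x)S^n(x)$ iff $g(x)\mid f(x)$ with $g(x)=(x^n-1)/\gcd(x^n-1,S^n(x))$; and since $\deg S^n\le n-1$, any polynomial value of $f(x)S^n(x)/(1-x^n)$ automatically has degree $<\deg f$, which is precisely what makes the recurrence length of any characteristic polynomial of a periodic sequence equal to its degree, so that ``least recurrence length'' and ``least degree'' select the same $g$ --- you flag this point and it does hold. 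Your verifications that $g(0)\neq0$ (from $d(0)g(0)=-1$) and of the degenerate case $S^n(x)=0$ are also in order. The only cosmetic mismatch is normalization: the paper's earlier uniqueness remark fixes $c_0=1$, whereas the formula in the lemma (and your proof) produce the monic generator $(x^n-1)/\gcd(x^n-1,S^n(x))$; that inconsistency sits in the paper's own exposition, not in your argument.
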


\subsection{The Whiteman's generalized cyclotomic sequences and its construction}\label{subsection 2.2}
An integer $a$ is called a primitive root of modulo $n$ if the multiplicative order of $a$ modulo $n,$ denoted by $ \mathrm{ord}_n(a)$, is equal to $\phi(n),$ where $\phi(n)$ is the Euler phi function and $\mathrm{gcd}(a,n)=1.$ \\
Let $n_1$ and $n_2$ be two distinct odd primes, define $n=n_1n_2,\ d=\mathrm{gcd}(n_1-1,n_2-1)$ and $e=(n_1-1)(n_2-1)/d$. From the Chinese Remainder theorem
, there are common primitive roots of both $n_1$ and $n_2$. Let $g$ be a fixed common primitive root of both $n_1$ and $n_2$. Let $u$ be an integer satisfying
\begin{align}
 u\equiv g \ (\mathrm{mod}\ n_1),~~~~~  u\equiv 1 \ (\mathrm{mod}\ n_2).\label{defu}
\end{align}
Whiteman \cite{whiteman62} proved that
\begin{align}
 \mathbb{Z}_{n}^{\ast}=\{g^{s}u^{i}:s=0,1,\cdots,e-1;\ i=0,1,2,\cdots,d-1\}.\nonumber
\end{align}
where $\mathbb{Z}_{n}^{\ast}$ denotes the set of all invertible elements of the residue class ring $\mathbb{Z}_{n}$ and $e$ is the order of $g$ modulo $n$.
The Whiteman's generalized cyclotomic classes $W_i$  of order $d$ are defined by
\begin{align}
 W_i=\{g^su^i ~(\mathrm{mod}\ n) :s=0,1,\cdots,e-1\},i=0,1,\cdots,d-1.\nonumber
\end{align}
The classes $W_i$, $1\leq i\leq d-1$ give a partition of $ \mathbb{Z}_{n}^{\ast},$ i.e.,
 $\mathbb{Z}_{n}^{\ast}=\cup_{i=0}^{d-1} {W_i},\ W_{i}\cap W_{j}=\emptyset \ \mathrm{for}\ i\neq j.$ \\
Let
$$P=\{n_{1},2n_{1},3n_{1},\cdots,(n_{2}-1)n_{1}\},\ Q=\{n_{2},2n_{2},3n_{2},\cdots,(n_{1}-1)n_{2}\},$$
$$C_{0}=\{0\}\cup Q\cup \bigcup\limits_{i=0}^{\frac{d}{2}-1}W_{i},\ C_{1}=P\cup \bigcup\limits_{i=\frac{d}{2}}^{d-1}W_{i},$$
$$C_{0}^{\ast}=\{0\}\cup Q\cup \bigcup\limits_{i=0}^{\frac{d}{2}-1}W_{2i}\ and \ \ C_{1}^{\ast}=P\cup \bigcup\limits_{i=0}^{\frac{d}{2}-1}W_{2i+1}.$$
It is easy to see that if $d > 2$, then $C_0 \neq C_0^{\ast}$ and $C_1 \neq C_1^{\ast}$ . 
Now, we introduce two kinds of Whiteman's generalized cyclotomic sequences of order $d$ (see \cite{defwhiteman}).\\
\begin{definition*} The two-prime Whiteman's generalized cyclotomic sequence $(s^*)^\infty= (s_i^*)_{i=0}^{n-1} $ of order $d$ and period $n$, 
which is called two-prime WGCS-I, is defined by
\begin{equation}
s_{i}^*=\left\{
\begin{array}{ll}
0,\ \ \mathrm{if}\ i\in C^*_{0},\\
1,\ \ \mathrm{if}\ i\in C^*_{1}.\nonumber
\end{array}
\right.
\end{equation}
The two-prime Whiteman's generalized cyclotomic sequence $s^\infty=(s_i)_{i=0}^{n-1}$ of order $d$ and period $n$, 
which is called two-prime WGCS-II, is defined by
\begin{equation}
s_{i}=\left\{
\begin{array}{ll}
0,\ \ \mathrm{if}\ i\in C_{0},\\
1,\ \ \mathrm{if}\ i\in C_{1}.\label{si}
\end{array}
\right.
\end{equation}
\end{definition*}
The cyclotomic numbers corresponding to these cyclotomic classes are defined as
$$
(i,j)_{d}=|(W_{i}+1)\cap W_{j}|, \nonumber where~~ 0\leq i,j\leq d-1.
$$
Additionally, for any $t\in \mathbb{Z}_{n}$, we define
\begin{align}
d(i,j;t)=|(W_{i}+t)\cap W_{j}|,\nonumber
\end{align}
where $W_{i}+t=\{w+t|w\in W_{i}\}$.\\

\section{A class of cyclic codes over $\mathrm{GF}(q)$ defined by the two-prime WGCS}\label{section 3}

\subsection{Properties of the Whiteman's cyclotomy of order d}\label{section 3.1}
In this subsection, we summarize number of properties of the Whiteman's generalized cyclotomy of order $d=\mathrm{gcd}(n_1-1,n_2-1).$
The following Lemma follows from the Theorem 4.4.6 of \cite{Streamcipher}.
\begin{lemma}\label{value d(i,j,t)}

Let the notations be same as before and $t\neq0$. We have
\begin{align}
d(i,j;t)=\left\{
\begin{array}{llll}
\frac{(n_{1}-1)(n_{2}-1)}{d^2},\ \ \ \ \ \ \ \ \ \ \ \ \ \ \ \ \  i\neq j,\ t \in  P \cup Q,  \nonumber\\
\frac{(n_{1}-1)(n_{2}-1-d)}{d^2},\ \ \ \ \ \ \ \ \ \ \ \ \ \ i=j,\ t\in P,\ t \notin Q,\nonumber \\
\frac{(n_{1}-1-d)(n_{2}-1)}{d^2},\ \ \ \ \ \ \ \ \ \ \ \ \ \ i=j,\ t\in Q,\ t \notin P,\nonumber\\
(i',j')_d \ for\  some\ i',j',\ \ \ \ \ otherwise.
\end{array}
\right.
\end{align}
\end{lemma}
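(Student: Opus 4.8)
The plan is to transport the entire computation through the Chinese Remainder isomorphism $\mathbb{Z}_{n}\cong\mathbb{Z}_{n_1}\times\mathbb{Z}_{n_2}$ and to describe the classes $W_i$ by indices. Let $\mathrm{ind}$ denote the index with respect to the common primitive root $g$, so that $\mathrm{ind}(a)$ is well defined modulo $n_1-1$ for $a\in\mathbb{Z}_{n_1}^{\ast}$ and modulo $n_2-1$ for $a\in\mathbb{Z}_{n_2}^{\ast}$. First I would establish, using $u\equiv g\pmod{n_1}$, $u\equiv1\pmod{n_2}$, the relation $W_i=u^iW_0$, and a cardinality count (both sides have $e=(n_1-1)(n_2-1)/d$ elements), the description
\[
W_i\ \leftrightarrow\ \bigl\{(a,b)\in\mathbb{Z}_{n_1}^{\ast}\times\mathbb{Z}_{n_2}^{\ast}\ :\ \mathrm{ind}(a)-\mathrm{ind}(b)\equiv i\pmod{d}\bigr\}.
\]
Under the same isomorphism $0\leftrightarrow(0,0)$, $P\leftrightarrow\{0\}\times\mathbb{Z}_{n_2}^{\ast}$ and $Q\leftrightarrow\mathbb{Z}_{n_1}^{\ast}\times\{0\}$; in particular $P\cap Q=\emptyset$, so for $t\neq0$ exactly one of $t\in\mathbb{Z}_{n}^{\ast}$, $t\in P$, $t\in Q$ holds, and these three possibilities will produce the four lines of the claimed formula.

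For $t\in\mathbb{Z}_{n}^{\ast}$ the argument is immediate: $t$ lies in a unique $W_k$, and multiplication by $t^{-1}$ is a bijection of $\mathbb{Z}_n$ that fixes $0$, $P$ and $Q$ setwise and sends $W_\ell$ to $W_{\ell-k}$. It therefore carries $(W_i+t)\cap W_j$ bijectively onto $(W_{i-k}+1)\cap W_{j-k}$, whence $d(i,j;t)=(i-k,j-k)_d$, a cyclotomic number — this is the ``otherwise'' line.

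The substantive case is $t\in P$, say $t\leftrightarrow(0,\tau)$ with $\tau\in\mathbb{Z}_{n_2}^{\ast}$. Writing $w\leftrightarrow(a,b)$, the conditions $w\in W_i$ and $w+t\in W_j$ become $a\in\mathbb{Z}_{n_1}^{\ast}$, $b,b+\tau\in\mathbb{Z}_{n_2}^{\ast}$ and $\mathrm{ind}(a)\equiv i+\mathrm{ind}(b)\equiv j+\mathrm{ind}(b+\tau)\pmod{d}$. For each $b$ making the last two residues agree, $\mathrm{ind}(a)$ is forced into one residue class modulo $d$, so there are exactly $(n_1-1)/d$ choices of $a$; hence
\[
d(i,j;t)=\frac{n_1-1}{d}\cdot\#\bigl\{b\in\mathbb{Z}_{n_2}^{\ast}:\ b+\tau\in\mathbb{Z}_{n_2}^{\ast},\ \mathrm{ind}(b+\tau)-\mathrm{ind}(b)\equiv i-j\pmod{d}\bigr\}.
\]
I would then show that the substitution $b=\tau c$ makes this cardinality independent of $\tau$, and that $c\mapsto z=(c+1)/c=1+c^{-1}$ is a bijection of $\{c\in\mathbb{Z}_{n_2}^{\ast}:c\neq-1\}$ onto $\mathbb{Z}_{n_2}\setminus\{0,1\}$ under which the defining condition turns into $z\in D_{i-j}$, where $D_r$ denotes the order-$d$ cyclotomic class modulo $n_2$ of index $\equiv r$. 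Consequently the cardinality equals $|D_{i-j}\setminus\{1\}|$, which is $(n_2-1)/d$ when $i\neq j$ and $(n_2-1)/d-1$ when $i=j$ (because $1\in D_0$); multiplying by $(n_1-1)/d$ gives the first and second lines. The case $t\in Q$ is handled identically with $n_1$ and $n_2$ interchanged, producing the third line.

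I expect the two nonroutine points to be the index description of $W_i$ (everything else rests on it) and the reduction of the $P$- and $Q$-cases to the single one-variable count, i.e.\ the $\tau$-independence together with the substitution $z=1+c^{-1}$; the unit case and the final arithmetic are bookkeeping. The whole statement is in fact Theorem~4.4.6 of \cite{Streamcipher}.
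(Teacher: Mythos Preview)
Your argument is correct: the CRT description of $W_i$ via indices, the scaling by $t^{-1}$ for $t\in\mathbb{Z}_n^{\ast}$, and the one-variable reduction in the $P$ (and symmetric $Q$) case with the substitution $z=1+c^{-1}$ all go through as you describe and yield exactly the four lines of the lemma. The paper itself does not give a proof at all --- it simply cites Theorem~4.4.6 of \cite{Streamcipher}, the same reference you invoke at the end --- so your proposal is strictly more detailed than, and entirely consistent with, what the paper provides.
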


\begin{lemma}\label{even}
Let the symbols be defined as before. The following four statements are equivalent:\\
\rm{(1)} $-1 \in W_{\frac{d}{2}}.$\\
\rm{(2)}  $\frac{(n_1-1)(n_2-1)}{d^2}$ is even.\\
\rm{(3)} One of the following sets of equations are satisfied:\\
 $
 \left\{
\begin{array}{ll}
 n_1 \equiv 1 \ (\mathrm{mod}\ 2d)\nonumber\\
 n_2 \equiv d+1 \ (\mathrm{mod}\ 2d),\nonumber
\end{array}
\right.
 \left\{
\begin{array}{ll}
 n_1 \equiv d+1 \ (\mathrm{mod}\ 2d)\nonumber\\
 n_2 \equiv 1 \ (\mathrm{mod}\ 2d).\nonumber
\end{array}
\right.
$\\
\rm{(4)} $ n_1n_2 \equiv d+1 \ (\mathrm{mod}\ 2d).$
\end{lemma}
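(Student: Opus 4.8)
The plan is to prove the cycle of implications $(3)\Rightarrow(4)\Rightarrow(2)\Rightarrow(1)\Rightarrow(3)$, together with the basic observation that since $g$ is a common primitive root of $n_1$ and $n_2$, every element of $\mathbb{Z}_n^\ast$ is uniquely $g^s u^i$ and $-1\in\mathbb{Z}_n^\ast$ lies in exactly one class $W_j$. First I would pin down $j$: because $-1\equiv g^{(n_1-1)/2}\pmod{n_1}$ and $-1\equiv g^{(n_2-1)/2}\pmod{n_2}$, writing $-1=g^s u^i$ and reducing mod $n_1$ and mod $n_2$ (recalling $u\equiv g\pmod{n_1}$, $u\equiv 1\pmod{n_2}$) gives the congruences $s+i\equiv (n_1-1)/2\pmod{n_1-1}$ and $s\equiv (n_2-1)/2\pmod{n_2-1}$. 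Since $e=(n_1-1)(n_2-1)/d$ and $i$ runs over $0,\dots,d-1$, one solves these simultaneous congruences via CRT-type bookkeeping to find $i\bmod d$ explicitly; the upshot is that $-1\in W_{d/2}$ exactly when a certain parity/divisibility condition on $(n_1-1)/d$ and $(n_2-1)/d$ holds. This is the computational heart of $(1)\Leftrightarrow$ the arithmetic conditions, and it is where I expect the bookkeeping to be fiddly — one must be careful that $d=\gcd(n_1-1,n_2-1)$ so the two moduli $n_1-1,n_2-1$ are not coprime, and the reduction of the index modulo $d$ has to be done by hand rather than by a naive CRT.

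For $(3)\Leftrightarrow(4)$ I would argue purely by elementary modular arithmetic modulo $2d$: if $n_1\equiv 1$ and $n_2\equiv d+1\pmod{2d}$ (or the symmetric case), then $n_1n_2\equiv d+1\pmod{2d}$ directly; conversely, since $n_1,n_2$ are odd primes and $d\mid n_1-1$, $d\mid n_2-1$, each of $n_1,n_2$ is $\equiv 1$ or $\equiv d+1\pmod{2d}$ (these being the two residues that are $\equiv 1\pmod d$ and odd, using that $d$ is even — which it is, since $d=\gcd(n_1-1,n_2-1)$ with both $n_i-1$ even forces nothing a priori, but $(2)$'s context and Lemma statement make $d$ even; alternatively note $d\mid n_i-1$ gives $n_i\equiv 1+td\pmod{2d}$ with $t\in\{0,1\}$). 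Then checking the four combinations of $(n_1,n_2)\bmod{2d}$ against the target $n_1n_2\equiv d+1$ isolates exactly the two cases in $(3)$. I would present this as a short case analysis.

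For $(2)\Leftrightarrow(4)$: write $N_i=(n_i-1)/d$, so $(n_1-1)(n_2-1)/d^2=N_1N_2$ and I must show $N_1N_2$ is even $\iff n_1n_2\equiv d+1\pmod{2d}$. Expanding $n_1n_2=(1+N_1 d)(1+N_2 d)=1+(N_1+N_2)d+N_1N_2 d^2$, so $n_1n_2-1\equiv (N_1+N_2)d\pmod{2d}$, i.e. $n_1n_2\equiv d+1\pmod{2d}\iff N_1+N_2$ is odd $\iff$ exactly one of $N_1,N_2$ is odd. But $n_1,n_2$ odd and $n_i-1=N_i d$ even means if $d$ is odd then $N_i$ even (so $N_1+N_2$ always even) while if $d$ is even $N_i$ can be either parity — so here one sees $(2)$ and $(4)$ force $d$ even, consistent with $(1)$ needing $d/2\in\mathbb{Z}$. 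The equivalence $N_1N_2$ even $\iff$ not both $N_i$ odd $\iff$ $N_1+N_2$ odd fails in general, so the real claim must be $N_1N_2$ even $\iff$ at least one $N_i$ even, and $n_1n_2\equiv d+1\iff$ exactly one $N_i$ odd; I would reconcile these by observing that when $d$ is even, "at least one $N_i$ even" and "exactly one $N_i$ odd" can differ only when both $N_i$ are even, in which case separately invoking the structure (e.g. that $d$ was defined as the full gcd forces $\gcd(N_1,N_2)=1$, so $N_1,N_2$ cannot both be even) closes the gap. This coprimality of $N_1,N_2$ is the clean fact that makes everything line up, and I would state it at the outset.

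Finally, $(1)\Rightarrow(2)$ is the short classical argument: if $-1\in W_{d/2}$ then multiplication by $-1$ maps $W_0$ onto $W_{d/2}$, and more generally pairs up $W_i$ with $W_{i+d/2}$; since $-1\notin W_0$ (as $W_0$ is the subgroup generated by $g$ and contains $1$, and $-1\in W_0$ would force $d=1$ or a separate small case), each nonzero $w\in W_0$ has $-w\neq w$ lying in $W_{d/2}$, but to get evenness of $|W_0|=e=(n_1-1)(n_2-1)/d$ one instead pairs elements within a single class: when $-1\in W_{d/2}$ with $d/2\neq 0$, no class is $-1$-invariant, which gives evenness of $d$ but not immediately of $|W_i|$ — so instead I would deduce $(2)$ from $(4)$ via the $(2)\Leftrightarrow(4)$ step already established, making the logical skeleton $(1)\Rightarrow(3)\Rightarrow(4)\Rightarrow(2)$ and $(2)\Rightarrow(4)\Rightarrow(3)\Rightarrow(1)$, i.e. I only need $(1)\Leftrightarrow(3)$ from the explicit index computation plus the elementary chain $(2)\Leftrightarrow(4)\Leftrightarrow(3)$. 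The main obstacle, as noted, is the explicit determination in $(1)\Leftrightarrow(3)$ of which $W_j$ contains $-1$, handling the non-coprime moduli $n_1-1$ and $n_2-1$ correctly; everything else is routine modular arithmetic once $\gcd(N_1,N_2)=1$ is recorded.
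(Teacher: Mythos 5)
Your handling of the purely arithmetic equivalences is sound and essentially coincides with the paper's: both arguments hinge on the fact that $f=(n_1-1)/d$ and $f'=(n_2-1)/d$ are coprime (your $\gcd(N_1,N_2)=1$), which is exactly what the paper uses to pass between ``$ff'$ even'', the congruences in (3), and $n_1n_2\equiv d+1\pmod{2d}$. (The paper proves $(2)\Rightarrow(3)$ by the same parity split of $f,f'$, notes $(3)\Rightarrow(2),(4)$ are obvious, and proves $(4)\Rightarrow(3)$ via $f+f'$ odd; your mod-$2d$ case analysis is the same computation, and your observation that coprimality is what reconciles ``$N_1N_2$ even'' with ``exactly one $N_i$ odd'' is the right point.)

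The genuine gap is the link to statement (1). The paper does not prove $(1)\Leftrightarrow(2)$ at all --- it cites (2.3) of Whiteman's paper --- whereas you propose to establish it directly, but you stop at ``CRT-type bookkeeping'': you never identify the condition this bookkeeping yields, nor check that it matches (2)/(3), and your alternative attempt at $(1)\Rightarrow(2)$ in the last paragraph is abandoned (it also conflates $(n_1-1)(n_2-1)/d^2$ with $|W_0|=e=(n_1-1)(n_2-1)/d$). Since your final skeleton is $(1)\Leftrightarrow(3)$ together with $(2)\Leftrightarrow(3)\Leftrightarrow(4)$, the unexecuted computation is the load-bearing step, so as written the proposal asserts rather than proves its central equivalence. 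For the record, the computation does go through from your setup: writing $-1=g^su^i$ and reducing modulo $n_1$ and $n_2$ gives $s+i\equiv\frac{n_1-1}{2}\pmod{n_1-1}$ and $s\equiv\frac{n_2-1}{2}\pmod{n_2-1}$; since the order of $g$ modulo $n$ is $\mathrm{lcm}(n_1-1,n_2-1)=e$, such an $s$ exists iff the two residues are compatible modulo $\gcd(n_1-1,n_2-1)=d$, i.e. iff $i\equiv\frac{n_1-n_2}{2}=(f-f')\frac{d}{2}\pmod d$. Hence $-1\in W_{d/2}$ iff $f-f'$ is odd, which by $\gcd(f,f')=1$ is equivalent to $ff'$ even, i.e. to (2) (and $-1\in W_0$ otherwise, recovering Lemma 3). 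Either carry this out explicitly or, as the paper does, cite Whiteman's (2.3).
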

\begin{proof} $(1) \Leftrightarrow (2)$  The result follows from (2.3) in \cite{whiteman62}.\\
$(2) \Rightarrow (3) $  Let $n_1-1=df,\  n_2-1=df'$ and $e=dff'$, where $f$ and $f'$ are integer. Since $(f,f')=1,\ f$ and $f'$ can not both be even. Here $ff'=\frac{(n_1-1)(n_2-1)}{d^2}$ is even. So, $f$ or $f'$ is even.
Let $f$ is even and $f'$ is odd. If $f$ is even, then $n_1-1 = d(2k_1)$, where $k_1$ is an integer. Therefore, $n_1 \equiv 1 \ ( \mathrm{mod}\ 2d )$.  If $f'$  is odd, then $n_1-1 = d(2k_2+1)$, where $k_2$ is an integer. Therefore, $n_2 \equiv d+1 \ ( \mathrm{mod}\ 2d ).$ 
Similarly, when $f$ is odd and $f'$ is even. We get $n_1 \equiv d+1 \ ( \mathrm{mod}\ 2d )$ and $n_2 \equiv 1 \ ( \mathrm{mod}\ 2d ).$\\
$(3) \Rightarrow (2)$  and $(3) \Rightarrow (4)$ are obvious.\\
$(4) \Rightarrow (3)$  Since, $\mathrm{gcd}(n_1-1,n_2-1)=d$, let $n_1-1=fd$ and $n_2-1=f'd$. We have $ n_1n_2 \equiv d+1\ (\mathrm{mod}\ 2d)$, this gives $fd+f'd\equiv d \ (\mathrm{mod}\ 2d).$
Thus, we have $f+f'=2k+1$ for an integer $k$. So, $n_1=2kd+(1-f')d+1$, this gives $n_1\equiv(1-f')d+1 \ (\mathrm{mod}\ 2d).$
If $f'$ is odd, then $n_1\equiv 1 \ (\mathrm{mod}\ 2d)$ and $n_2 \equiv d+1 \ (\mathrm{mod}\ 2d)$. If $f'$ is even, then $n_1\equiv d+1 \ (\mathrm{mod}\ 2d)$ and $n_2 \equiv 1 \ (\mathrm{mod}\ 2d).$ 
\end{proof}
\begin{lemma}\label{odd}
Let the symbols be defined as before. The following four statements are equivalent:\\
\rm{(1)} $-1 \in W_0.$\\
\rm{(2)}  $\frac{(n_1-1)(n_2-1)}{d^2}$ is odd.\\
\rm{(3)}  The following set of equation is satisfied:\\
 $
 \left\{
\begin{array}{ll}
n_1 \equiv d+1 \ (\mathrm{mod}\ 2d)\nonumber\\
n_2 \equiv d+1 \ (\mathrm{mod}\ 2d),\nonumber
\end{array}
\right.
$\\
\rm{(4)} $ n_1n_2 \equiv1 \ (\mathrm{mod}\ 2d).$
\end{lemma}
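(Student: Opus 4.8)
The plan is to mirror the proof of Lemma~\ref{even}: each of the four statements here is, under the standing hypotheses, the logical negation of the corresponding statement of Lemma~\ref{even}, so the cycle of implications is essentially the parity-complement of the one already carried out there. The only genuinely new bookkeeping point is to keep track of the fact that $d$ is even.

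For $(1)\Leftrightarrow(2)$ I would argue as in Lemma~\ref{even}, invoking $(2.3)$ of \cite{whiteman62}; equivalently, since $(-1)^2=1\in W_0$ and $\mathbb{Z}_{n}^{\ast}/\langle g\rangle$ has order $d$, the coset of $-1$ has order dividing $2$, so $-1\in W_0$ or $-1\in W_{d/2}$, and as the classes $W_i$ are disjoint, $(1)$ holds iff $-1\notin W_{d/2}$; by Lemma~\ref{even}$(1)\Leftrightarrow(2)$ this is the same as $\tfrac{(n_1-1)(n_2-1)}{d^2}$ being not even, i.e.\ odd. For $(2)\Leftrightarrow(3)$, write $n_1-1=df$ and $n_2-1=df'$; since $d=\mathrm{gcd}(n_1-1,n_2-1)$ we have $\mathrm{gcd}(f,f')=1$ and $\tfrac{(n_1-1)(n_2-1)}{d^2}=ff'$, which is odd iff both $f$ and $f'$ are odd. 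If $f$ is odd then $n_1-1=d(2k_1+1)$ for some integer $k_1$, so $n_1\equiv d+1\ (\mathrm{mod}\ 2d)$, and likewise $f'$ odd gives $n_2\equiv d+1\ (\mathrm{mod}\ 2d)$; the converse implication is immediate.

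The remaining implications $(3)\Rightarrow(4)$ and $(4)\Rightarrow(3)$ both rest on the observation that $d$ is even (because $n_1,n_2$ are odd primes, so $n_1-1$ and $n_2-1$, hence their gcd $d$, are even), which gives $d^2\equiv0\ (\mathrm{mod}\ 2d)$. Then $(3)$ yields $n_1n_2\equiv(d+1)^2=d^2+2d+1\equiv1\ (\mathrm{mod}\ 2d)$, which is $(4)$; conversely, writing $n_1=fd+1$ and $n_2=f'd+1$ we get $n_1n_2=ff'd^2+(f+f')d+1\equiv(f+f')d+1\ (\mathrm{mod}\ 2d)$, so $(4)$ forces $f+f'$ to be even, and since $\mathrm{gcd}(f,f')=1$ they cannot both be even, hence both are odd, which is $(3)$. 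There is no serious obstacle in this argument; the only things to be careful about are recording that $d$ is even (so the arithmetic modulo $2d$ collapses) and using $\mathrm{gcd}(f,f')=1$ to rule out the case that $f$ and $f'$ are both even, with everything else being routine.
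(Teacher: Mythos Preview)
Your proposal is correct and takes essentially the same approach as the paper, which simply says ``Similar to the proof of the above lemma'' and defers to Lemma~\ref{even}. You are in fact more explicit than the paper in isolating the fact that $d$ is even (so that $d^{2}\equiv0\ (\mathrm{mod}\ 2d)$), a point the paper's proof of Lemma~\ref{even} uses tacitly in the step $(4)\Rightarrow(3)$ but does not spell out.
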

\begin{proof} 
Similar to the proof of the above lemma.
\end{proof}

\subsection{Properties of Whiteman's cyclotomy of order 6}\label{section 3.2}
We recall the following lemmas (Lemma 1 and Lemma 2) from \cite{gong13}. 
\begin{lemma}\label{odd&even}
Let $\mathrm{gcd}(n_1-1,n_2-1) = 6$, i.e.,  $n_1 \equiv 1\ \mathrm{mod\ 6}$, $n_2 \equiv 1\ \mathrm{mod\ 6}$. 
Let $a,b,x,y,c$ and $d$ are integers. There are 10 possible different cyclotomic numbers of order 6 and they are given by the following relations:
\\ If $\frac{(n_{1}-1)(n_{2}-1)}{36}$ is odd, we have
 $$
 \begin{array}{lllll}
 &(0,0)_{6}=\frac{1}{72}(12M+32+6a-24x+2c),\\
 &(0,1)_{6}=(1,0)_{6}=(5,5)_{6}=\frac{1}{72}(12M+8+a+3b+8x+24y-c+9d),\\
 &(0,2)_{6}=(2,0)_{6}=(4,4)_{6}=\frac{1}{72}(12M+8-3a+9b-c-9d),\\
 &(0,3)_{6}=(3,0)_{6}=(3,3)_{6}=\frac{1}{72}(12M+8-2a+8x+2c),\\
 &(0,4)_{6}=(4,0)_{6}=(2,2)_{6}=\frac{1}{72}(12M+8-3a-c-9b+9d),\\
 &(0,5)_{6}=(5,0)_{6}=(1,1)_{6}=\frac{1}{72}(12M+8+a-3b+8x-24y-c-9d),\\
 &(1,2)_{6}=(2,1)_{6}=(4,5)_{6}=(5,4)_{6}=(5,1)_{6}=(1,5)_{6}=\frac{1}{72}(12M-4-2a-4x+2c),\\
 &(1,3)_{6}=(2,5)_{6}=(3,1)_{6}=(3,4)_{6}=(4,3)_{6}=(5,2)_{6}=\frac{1}{72}(12M-4+a+3b-4x-12y-c+9d),\\
 &(1,4)_{6}=(2,3)_{6}=(3,2)_{6}=(3,5)_{6}=(4,1)_{6}=(5,3)_{6}=\frac{1}{72}(12M-4+a-3b-4x+12y-c-9d) \ \mathrm{and} \\
 &(2,4)_{6}=(4,2)_{6}=\frac{1}{72}(12M-4+6a+12x+2c).
 \end{array}
 $$
 If $\frac{(n_{1}-1)(n_{2}-1)}{36}$ is even, we have
 $$
 \begin{array}{lllll}
 &(0,0)_{6}=(3,0)_{6}=(3,3)_{6}=\frac{1}{72}(12M+20-8x-2a+2c),\\
 &(0,1)_{6}=(2,5)_{6}=(4,3)_{6}=\frac{1}{72}(12M-4-3a-9b-c+9d),\\
 &(0,2)_{6}=(1,4)_{6}=(5,3)_{6}=\frac{1}{72}(12M-4-8x+a-c+24y-3b-9d),\\
 &(0,3)_{6}=\frac{1}{72}(12M-4+24x+6a+2c),\\
 &(0,4)_{6}=(1,3)_{6}=(5,2)_{6}=\frac{1}{72}(12M-4-8x+a-c-24y+3b+9d),\\
 &(0,5)_{6}=(2,3)_{6}=(4,1)_{6}=\frac{1}{72}(12M-4-3a-c+9b-9d),\\
 &(1,0)_{6}=(2,2)_{6}=(3,1)_{6}=(3,4)_{6}=(4,0)_{6}=(5,5)_{6}=\frac{1}{72}(12M+8+4x+a-c+12y+3b+9d),\\
 &(1,1)_{6}=(2,0)_{6}=(3,2)_{6}=(3,5)_{6}=(4,4)_{6}=(5,0)_{6}=\frac{1}{72}(12M+8+4x+a-c-12y-3b-9d),\\
 &(1,2)_{6}=(1,5)_{6}=(2,4)_{6}=(4,2)_{6}=(5,1)_{6}=(5,4)_{6}=\frac{1}{72}(12M-4+4x-2a+2c)\ \mathrm{and} \\
 &(2,1)_{6}=(4,5)_{6}=\frac{1}{72}(12M-4+6a-12x+2c).
 \end{array}
 $$
 Where $n_1n_2 = x^2+3y^2,\ M = \frac{1}{6}((n_1-2)(n_2-2)-1)$ and $4n_1n_2 = a^2+3b^2 = c^2+27d^2$. 
 \end{lemma}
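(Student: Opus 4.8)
We sketch the route, since these two tables are exactly Lemma~1 and Lemma~2 of \cite{gong13}; the full verification is long but essentially mechanical. The plan is to express the order-$6$ generalized cyclotomic numbers for the composite modulus $n=n_1n_2$ in terms of the classical order-$6$ cyclotomic numbers of the two primes $n_1$ and $n_2$, using the Chinese Remainder Theorem. Write $w\leftrightarrow(w_1,w_2)$ under $\mathbb{Z}_n\cong\mathbb{Z}_{n_1}\times\mathbb{Z}_{n_2}$ and let $\mathrm{ind}_{n_k}$ denote the index (discrete logarithm) with respect to the fixed common primitive root $g$. From $u\equiv g\ (\mathrm{mod}\ n_1)$, $u\equiv 1\ (\mathrm{mod}\ n_2)$ one sees that $g^su^i$ has index $s+i$ modulo $n_1$ and index $s$ modulo $n_2$, hence
\begin{align}
W_i=\{w\in\mathbb{Z}_n^{\ast}:\ \mathrm{ind}_{n_1}(w_1)-\mathrm{ind}_{n_2}(w_2)\equiv i\ (\mathrm{mod}\ 6)\}.\nonumber
\end{align}
An element of $(W_i+1)\cap W_j$ is a $w+1$ with $w,w+1\in\mathbb{Z}_n^{\ast}$, which under the isomorphism means $w_k,w_k+1\in\mathbb{Z}_{n_k}^{\ast}$ for $k=1,2$ (so no degenerate boundary terms arise); sorting by the residues modulo $6$ of the four indices $\mathrm{ind}_{n_k}(w_k)$ and $\mathrm{ind}_{n_k}(w_k+1)$ gives
\begin{align}
(i,j)_6=\sum_{\substack{\alpha_1-\alpha_2\equiv i\ (\mathrm{mod}\ 6)\\ \beta_1-\beta_2\equiv j\ (\mathrm{mod}\ 6)}}(\alpha_1,\beta_1)_6^{(n_1)}\,(\alpha_2,\beta_2)_6^{(n_2)},\nonumber
\end{align}
where $(\alpha,\beta)_6^{(p)}$ is the classical order-$6$ cyclotomic number of the prime $p\equiv1\ (\mathrm{mod}\ 6)$ taken with respect to $g$.

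Next I would substitute Dickson's evaluation of the classical order-$6$ cyclotomic numbers of a prime $p$, expressed through the representations $p=x_p^2+3y_p^2$ and $4p=a_p^2+3b_p^2=c_p^2+27d_p^2$ together with the parity of $(p-1)/6$, taken for $p=n_1$ and $p=n_2$, into the double sum above. The quadratic cross-terms then recombine, via norm-multiplicativity in $\mathbb{Z}[\sqrt{-3}]$ and in $\mathbb{Z}[\omega]$ ($\omega$ a primitive cube root of unity), into the analogous representations $n_1n_2=x^2+3y^2$ and $4n_1n_2=a^2+3b^2=c^2+27d^2$ of the product, while the linear terms collapse to the single quantity $M=\frac{1}{6}((n_1-2)(n_2-2)-1)$; one also has to pin down the signs and normalizations of $x,y,a,b,c,d$ (for instance by congruences modulo $3$) consistently with those of the prime factors. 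Finally one reduces the $36$ numbers $(i,j)_6$ to the $10$ listed values using the identity $(i,j)_6=(6-i,j-i)_6$, which always holds (bijection $w\mapsto w^{-1}$), together with $(i,j)_6=(j,i)_6$ when $-1\in W_0$ and $(i,j)_6=(j+3,i+3)_6$ when $-1\in W_3$ (bijection $w\mapsto -1-w$); by Lemmas~\ref{even} and \ref{odd}, and since $\gcd(n_1-1,n_2-1)=6$ forces exactly one of the two cases, these alternatives correspond precisely to $(n_1-1)(n_2-1)/36$ being odd or even, which is how the statement splits into its two tables.

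The main obstacle is computational rather than conceptual: one must carry out the full double summation of the two Dickson tables, apply the quadratic-form composition identities, and check that after the symmetry reductions each group of entries collapses to one of the ten stated closed forms. This bookkeeping is done in detail in \cite{gong13}.
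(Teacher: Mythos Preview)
The paper does not prove this lemma at all: it is introduced with the sentence ``We recall the following lemmas (Lemma~1 and Lemma~2) from \cite{gong13}'' and then simply stated. So there is no argument in the paper to compare your proposal against; any coherent sketch already goes further than what the authors provide.

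Your outline is the standard and correct route. The CRT description $W_i=\{w:\ \mathrm{ind}_{n_1}(w_1)-\mathrm{ind}_{n_2}(w_2)\equiv i\ (\mathrm{mod}\ 6)\}$ follows immediately from the definition of $u$, and the resulting convolution $(i,j)_6=\sum(\alpha_1,\beta_1)_6^{(n_1)}(\alpha_2,\beta_2)_6^{(n_2)}$ over $\alpha_1-\alpha_2\equiv i$, $\beta_1-\beta_2\equiv j$ is exactly how Whiteman-type two-prime cyclotomic numbers decompose. Feeding in Dickson's order-$6$ prime formulas and recombining the quadratic-form parameters by multiplicativity of norms in $\mathbb{Z}[\omega]$ is the right mechanism for producing the parameters $x,y,a,b,c,d$ attached to $n_1n_2$, and your symmetry reductions $(i,j)_6=(6-i,j-i)_6$ (via $w\mapsto w^{-1}$) together with $(i,j)_6=(j,i)_6$ or $(j+3,i+3)_6$ (via $w\mapsto -1-w$, according to Lemmas~\ref{even}--\ref{odd}) correctly account for the collapse to ten values and the odd/even split. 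The only point worth flagging is that the prime-level Dickson tables themselves split according to the parity of $(n_k-1)/6$, so the convolution a priori has four subcases before collapsing to the two displayed; you gesture at this but it is where most of the bookkeeping hides. As you say, the obstacle is purely computational, and deferring the verification to \cite{gong13} is exactly what the paper does.
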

 From [\cite{Streamcipher}, Theorem I.15 and page 118], we get following relation between the parameters $a,b,x,y,c$ and $d.$ 

 \begin{lemma}\label{relation}
  Let $k$ and $l$ be integer such that $k\equiv g \ \mathrm{mod}(n_1)$ and $ l\equiv g\ \mathrm{mod}(n_2).$ Suppose $k^\rho = 2 \in \mathbb{F}_{n_1}$ and $l^\varrho = 2 \in \mathbb{F}_{n_2}$ for some integers $\rho$ and $\varrho$.
  Let the parameters $a,b,x,y,c,d$ be same as in the above lemma. Then, we have the following results.\\
  Case-I If ${\{(n_{1}-1)(n_{2}-1)}\}/{36}$ is even, we have
  \begin{enumerate}\item[1)] If $\rho-\varrho\equiv 0 \ (\mathrm{mod}\ 3),$ then $a=2x=-c,\ b=-2y=-3d.$
   \item [2)] If $\rho-\varrho\equiv 1 \ (\mathrm{mod}\ 3),$ then $a=-x-3y,\ b=-x+y,\ c=x-3y,\ 3d=-x-y.$
   \item[3)] If $\rho-\varrho\equiv 2 \ (\mathrm{mod}\ 3),$ then $a=-x+3y,\ b=x+y,\ c=x+3y,\ 3d=x-y.$
  \end{enumerate}
  Case-II ${\{(n_{1}-1)(n_{2}-1)}\}/{36}$ is odd, we have
\begin{enumerate}\item[1)] If $\rho-\varrho\equiv 0 \ (\mathrm{mod}\ 3),$ then $a=-2x=c,\ b=2y=3d.$
   \item [2)] If $\rho-\varrho\equiv 1 \ (\mathrm{mod}\ 3),$ then $a=x+3y,\ b=x-y,\ c=x-3y,\ 3d=-x-y.$
   \item[3)] If $\rho-\varrho\equiv 2 \ (\mathrm{mod}\ 3),$ then $a=x-3y,\ b=-x-y,\ c=x+3y,\ 3d=x-y.$
  \end{enumerate}
\end{lemma}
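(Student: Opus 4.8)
The plan is to push the statement down to the two primes $n_1$ and $n_2$ individually, where the sign of the $\sqrt{-3}$-part of the relevant Jacobi sums is precisely what [\cite{Streamcipher}, Theorem~I.15] pins down in terms of the index of $2$. First I would exploit that, since $g$ is a common primitive root, the isomorphism $\Z_n^\ast\cong\mathbb F_{n_1}^\ast\times\mathbb F_{n_2}^\ast$ carries $W_i$ to $\{(g^{a},g^{b}):a-b\equiv i\ (\mathrm{mod}\ 6)\}$; hence a character $\chi$ of $\Z_n^\ast$ that is constant on each $W_i$ with $\chi(W_i)=\zeta^{i}$ ($\zeta$ a fixed primitive $6$th root of unity) must have the form $\chi=(\eta_1,\overline{\eta_2})$, where $\eta_j$ is the order-$6$ character of $\mathbb F_{n_j}^\ast$ normalized by $\eta_j(g)=\zeta$. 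Splitting the Jacobi sums over $\Z_n$ by the Chinese Remainder Theorem then gives $J_n(\chi^{s},\chi^{t})=J_{n_1}(\eta_1^{s},\eta_1^{t})\cdot\overline{J_{n_2}(\eta_2^{s},\eta_2^{t})}$, and the next bit of setup is to record how the parameters of Lemma~\ref{odd&even} live inside these sums: $x+y\sqrt{-3}$ is, up to a fixed unit of $\Z[\omega]$, the cubic Jacobi sum attached to $n_1n_2$, while $a+b\sqrt{-3}$ and $c+3d\sqrt{-3}$ are the two normalized shapes of the order-$6$, respectively quadratic-times-cubic, Jacobi sums; the representation $n_1n_2=x^{2}+3y^{2}$ fixes which of the two factorizations $\pi_1\pi_2$, $\pi_1\overline{\pi_2}$ of $n_1n_2$ in $\Z[\omega]$ serves as the reference.

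Next I would feed in [\cite{Streamcipher}, Theorem~I.15] (together with the supplementary identities on p.~118), applied to $n_1$ and to $n_2$ separately: for a prime $p\equiv1\ (\mathrm{mod}\ 6)$ this evaluates $J_p(\eta^{2},\eta^{2})$, $J_p(\eta,\eta)$, and $J_p(\eta^{3},\eta)$ as explicit elements of $\Z[\omega]$ whose ``phase'' --- which of the six unit multiples of the standard representative one lands on --- is governed by the index of $2$ to the base $g$, namely by $\rho$ for $p=n_1$ and by $\varrho$ for $p=n_2$. Combining this with the factorization above, the phase of $J_n(\chi^{s},\chi^{t})$ collapses to $\omega^{\rho}\cdot\overline{\omega^{\varrho}}=\omega^{\rho-\varrho}$ at the cubic level --- the quadratic part of the index contributes only an overall sign, which is the Case~I/Case~II split below --- so it is exactly $\rho-\varrho\ (\mathrm{mod}\ 3)$ that enters. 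What then remains is routine: expand $\omega^{\pm(\rho-\varrho)}(x+y\sqrt{-3})$ for $\rho-\varrho\equiv0,1,2\ (\mathrm{mod}\ 3)$ using $\omega=\tfrac{-1+\sqrt{-3}}2$, do the companion expansion in the coordinate system of the form $c^{2}+27d^{2}$, and match real and imaginary parts against the coefficients in Lemma~\ref{odd&even} to read off the listed relations case by case.

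For the Case~I versus Case~II dichotomy I would note that it is decided by $\chi(-1)$: a short computation gives $\chi(-1)=\eta_1(-1)\eta_2(-1)=(-1)^{f+f'}$, and since $\gcd(f,f')=1$ this equals $-1$ exactly when $ff'=(n_1-1)(n_2-1)/36$ is even --- which, by Lemmas~\ref{even} and~\ref{odd}, is the same condition as $-1\in W_3$ versus $-1\in W_0$. The conjugation relations for Gauss and Jacobi sums ($\overline{J(\chi,\chi)}=J(\overline\chi,\overline\chi)$ and $g(\chi)g(\overline\chi)=\chi(-1)\,n_1n_2$) then turn the Case~I evaluations into the Case~II ones by that overall sign, reproducing the second list in the statement.

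I expect the main obstacle to be not any single step but the bookkeeping of normalizations: one has to fix once and for all the primitive $6$th root $\zeta$, the square root $\sqrt{-3}=\zeta_3-\zeta_3^{2}$, the reference element $\pi=x+y\sqrt{-3}$, and the ``standard'' representative among the six unit multiples of each single-prime Jacobi sum, and then check that the accumulated phases $\omega^{\rho}$, $\overline{\omega^{\varrho}}$ and the sign $\chi(-1)$ combine to give \emph{exactly} the signs written in Lemma~\ref{relation}. If [\cite{Streamcipher}] already records the two-prime relations in a form close to this, the whole argument shortcuts to translating that statement into the present notation; otherwise the derivation above supplies it, and once the conventions are nailed down every remaining identity is a one-line computation in $\Z[\omega]$.
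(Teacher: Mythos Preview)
The paper does not give a proof of this lemma at all: the sentence immediately preceding it reads ``From [\cite{Streamcipher}, Theorem~I.15 and page~118], we get following relation between the parameters $a,b,x,y,c$ and $d$,'' and the lemma is then stated without further argument. So your proposal is not competing against a proof in the paper --- it is supplying one where the authors chose simply to cite.

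That said, your sketch is the natural way to unpack the citation, and it is consistent with what the authors invoke: you apply the single-prime evaluation of the relevant Jacobi sums (which is exactly what [\cite{Streamcipher}, Theorem~I.15] provides) to $n_1$ and $n_2$ separately, combine them via the CRT factorization $J_n=J_{n_1}\cdot\overline{J_{n_2}}$, and then read off the phase as $\omega^{\rho-\varrho}$ together with the sign $\chi(-1)=(-1)^{f+f'}$ that separates Case~I from Case~II. Your own caveat at the end is apt: the only real work is fixing the normalizations (choice of $\zeta$, of $\sqrt{-3}$, and of the standard representative among unit multiples) so that the signs land exactly as listed; once those conventions are pinned down the rest is arithmetic in $\Z[\omega]$. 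If the cited pages already record the two-prime relations directly, then, as you note, the whole thing reduces to a translation of notation --- which is presumably why the authors felt a bare citation sufficed.
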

 Furthermore, we get the following form for the cyclotomic numbers of order 6 after substituting the value of $a, b,c$ and $d$ from Lemma \ref{relation} to Lemma \ref{odd&even}.
 
 \begin{table}[htdp]\label{table1}
  \caption{The cyclotomic number of order 6 for even ${\{(n_{1}-1)(n_{2}-1)}\}/{36}$ }
  \centering
  \begin{center}
   \begin{tabular}{|c|c|c|c|}
    \hline
      &$\rho-\varrho\equiv0 \ (\mathrm{mod\ 3})$& $\rho-\varrho\equiv1 \ (\mathrm{mod\ 3})$&$\rho-\varrho\equiv2 \ (\mathrm{mod\ 3})$  \\ \hline
    $36(0,0)_6$ & 6M + 10 - 8x\ \ \ \ \ \ \ \ \ & 6M + 10 - 2x\ \ \ \ \ \ \ \ \ & 6M + 10 - 2x\ \ \ \ \ \ \ \ \ \ \\ \hline
    $36(0,1)_6$ & 6M - 2 - 2x + 12y\ \ \  & 6M - 2 + 4x\ \ \ \ \ \ \ \ \ \ \ & 6M - 2 - 2x - 12y\ \ \ \ \ \\ \hline
    $36(0,2)_6$ & 6M - 2 - 2x + 12y\ \ \  & 6M - 2 - 2x + 12y\ \ \ & 6M - 2 - 8x + 12y\ \ \ \ \\ \hline
    $36(0,3)_6$ & 6M - 2 + 16x\ \ \ \ \ \ \ \ \  & 6M - 2 + 10x - 12y & 6M - 2 + 10x + 12y\\ \hline
    $36(0,4)_6$ & 6M - 2 - 2x - 12y\ \ \ \ & 6M - 2 - 8x - 12y\ \ \ & 6M - 2 - 2x - 12y\ \ \ \ \ \\ \hline
    $36(0,5)_6$ & 6M - 2 - 2x - 12y\ \ \ \ & 6M - 2 - 2x + 12y & 6M - 2 + 4x\ \ \ \ \ \ \ \ \ \ \ \\ \hline
    $36(1,0)_6$ & 6M + 4 + 4x + 6y & 6M + 4 - 2x + 6y\  & 6M + 4 + 4x + 6y \ \\ \hline
    $36(1,1)_6$ & 6M + 4 + 4x - 6y\ \   & 6M + 4 + 4x - 6y & 6M + 4 - 2x - 6y\ \ \ \ \\ \hline
    $36(1,2)_6$ & 6M - 2 - 2x\  \ \ \ \ \ \ \ \ \ & 6M - 2 + 4x\ \ \ \ \ \ \ \ & 6M - 2 + 4x\ \ \ \ \ \ \ \ \ \ \\ \hline
    $36(2,1)_6$ & 6M - 2 - 2x\ \ \ \ \ \ \ \ \ \  & 6M - 2 - 8x - 12y \  & 6M - 2 - 8x + 12y\ \ \\ \hline
   \end{tabular}

  \end{center}
\label{default}
 \end{table}
 
\begin{table}[htdp]\label{tabel2}
  \caption{The cyclotomic number of order 6 for odd ${\{(n_{1}-1)(n_{2}-1)}\}/{36}$ }
  \centering
  \begin{center}
   \begin{tabular}{|c|c|c|c|}
    \hline
      &$\rho-\varrho\equiv0 \ (\mathrm{mod\ 3})$& $\rho-\varrho\equiv1 \ (\mathrm{mod\ 3})$&$\rho-\varrho\equiv2 \ (\mathrm{mod\ 3})$  \\ \hline
    $36(0,0)_6$ & 6M + 16 - 20x\ \ \ \ \ \ \ \ \ \ & 6M + 16 - 8x + 6y\ \ \  & 6M + 16 - 8x - 6y\ \ \ \ \ \ \ \\ \hline
    $36(0,1)_6$ & 6M + 4 + 4x + 18y\ \ \  & 6M + 4 + 4x + 12y  & 6M + 4 + 4x + 6y\ \ \ \ \ \\ \hline
    $36(0,2)_6$ & 6M + 4 + 4x + 6y\ \ \ \ \  & 6M + 4 + 4x - 6y\ \ \ \  & 6M + 4 - 8x \ \ \ \ \ \ \ \ \ \ \ \ \\ \hline
    $36(0,3)_6$ & 6M + 4 + 4x\ \ \ \ \ \ \ \ \ \ \ \  & 6M + 4 + 4x - 6y\ \ \ \ & 6M + 4 + 4x + 6y\ \ \ \ \\ \hline
    $36(0,4)_6$ & 6M + 4 + 4x - 6y\ \ \ \ \ \ & 6M + 4 - 8x \ \ \ \ \ \ \ \ \ \ \   & 6M + 4 + 4x + 6y\ \ \ \ \\ \hline
    $36(0,5)_6$ & 6M + 4 + 4x - 18y\ \ \ \ & 6M + 4 + 4x - 6y\ \ \ \  & 6M + 4 + 4x - 12y\ \ \ \ \\ \hline
    $36(1,2)_6$ & 6M - 2 - 2x\ \ \ \ \ \ \ \ \ \ \ \ \ \  & 6M - 2 - 2x - 6y\ \ \ \ \ \ \  & 6M - 2 - 2x + 6y\ \ \ \ \ \ \ \\ \hline
    $36(1,3)_6$ & 6M - 2 - 2x\ \ \ \ \ \ \ \ \ \ \ \ \ \   & 6M - 2 - 2x - 6y\ \ \ \ \ \ \ & 6M - 2 - 2x - 12y\ \ \ \ \ \ \ \\ \hline
    $36(1,4)_6$ & 6M - 2 - 2x\  \ \ \ \ \ \ \ \ \ \ \ \ \  & 6M - 2 - 2x + 12y\ \ \ \ & 6M - 2 - 2x + 6y\ \ \ \ \ \ \ \\ \hline
    $36(2,4)_6$ & 6M - 2 - 2x\ \ \ \ \ \ \ \ \ \ \ \ \ \  & 6M - 2 + 10x + 6y \  & 6M - 2 + 10x - 6y\ \ \ \ \ \\ \hline
   \end{tabular}

  \end{center}
\label{default}
 \end{table} 
These $36$ cyclotomic numbers $(i,j)$ are solely functions of the unique representation of $p=x^2+3y^2;\ x\equiv1 \ (\mathrm{mod}\ 3)$ and the sign of $y$ is ambiguously determined. 
 \begin{lemma}\label{-1} \cite{whiteman62}
Define $\eta=\frac{(n_{1}-1)(n_{2}-1)}{36}$. Let symbols be same as before. Then
$$
-1 \in \left\{
\begin{array}{llll}
W_0,\   \ \mathrm{if}\ \eta\   is\ odd,\\
W_3,\   \ \mathrm{if}\ \eta\   is\ even.
\end{array}
\right.
$$
\end{lemma}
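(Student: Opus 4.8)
The plan is to obtain Lemma \ref{-1} as the $d=6$ instance of Lemmas \ref{even} and \ref{odd}, which have already been established. Throughout Section \ref{section 3.2} we are in the regime $d=\gcd(n_1-1,n_2-1)=6$, so $d^2=36$ and the quantity $\eta=\frac{(n_1-1)(n_2-1)}{36}$ appearing in the statement coincides with the ratio $\frac{(n_1-1)(n_2-1)}{d^2}$ that controls the dichotomy in those two lemmas; moreover $d/2=3$, so $W_{d/2}=W_3$. Writing $n_1-1=6f$ and $n_2-1=6f'$ we have $\eta=ff'$, which is a positive integer, so exactly one of ``$\eta$ odd'' and ``$\eta$ even'' holds and the two cases in the statement are exhaustive and mutually exclusive.

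With this in place the argument is immediate. If $\eta$ is odd, then $\frac{(n_1-1)(n_2-1)}{d^2}$ is odd, so the equivalence (1)$\Leftrightarrow$(2) of Lemma \ref{odd} (applied with $d=6$) yields $-1\in W_0$, which is the first line. If $\eta$ is even, then $\frac{(n_1-1)(n_2-1)}{d^2}$ is even, so the equivalence (1)$\Leftrightarrow$(2) of Lemma \ref{even} (applied with $d=6$) yields $-1\in W_{d/2}=W_3$, which is the second line. That is the whole proof; all the substance has already been absorbed into Lemmas \ref{even} and \ref{odd}, which ultimately rest on (2.3) of \cite{whiteman62}.

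If a self-contained derivation is preferred, one can instead compute the class of $-1$ directly. Since $g$ is a common primitive root of $n_1$ and $n_2$ and $u\equiv g\pmod{n_1}$, $u\equiv 1\pmod{n_2}$, the element $g^su^i$ reduces to $g^{s+i}\bmod n_1$ and to $g^{s}\bmod n_2$; requiring this to equal $-1$ forces $s\equiv\frac{n_2-1}{2}\pmod{n_2-1}$ and $s+i\equiv\frac{n_1-1}{2}\pmod{n_1-1}$, and reducing both modulo $d=6$ and subtracting gives $i\equiv\frac{n_1-1}{2}-\frac{n_2-1}{2}=3(f-f')\pmod 6$. Hence $i=0$ when $f-f'$ is even and $i=3$ when $f-f'$ is odd. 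Since $\gcd(6f,6f')=6$ forces $\gcd(f,f')=1$, the integers $f,f'$ are not both even, so $f-f'$ is odd precisely when exactly one of them is even, i.e. precisely when $\eta=ff'$ is even, while $f-f'$ is even precisely when $\eta$ is odd. Either way the claim follows. I do not anticipate any real obstacle here: the only points needing care are the bookkeeping identities $d^2=36$, $d/2=3$, $\eta=ff'$, and the elementary parity fact that $\gcd(f,f')=1$ makes ``$\eta$ even'' equivalent to ``$f-f'$ odd''.
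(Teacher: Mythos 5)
Your proposal is correct, and in fact it does more than the paper, which offers no proof of this lemma at all: it is stated with a bare citation to \cite{whiteman62}, and even the general-$d$ statements it would follow from, Lemmas \ref{even} and \ref{odd}, have their key equivalence (1)$\Leftrightarrow$(2) justified only by a pointer to (2.3) of \cite{whiteman62}. Your first route — specialize Lemmas \ref{even} and \ref{odd} to $d=6$, so that $d^2=36$, $W_{d/2}=W_3$, and $\eta=(n_1-1)(n_2-1)/36$ is exactly the parity parameter in those lemmas — is precisely the reduction the paper implicitly intends, and the bookkeeping ($\eta=ff'$, exhaustiveness of the two cases) is right. Your second, self-contained route is the genuinely new content: writing $-1=g^su^i$, reducing modulo $n_1$ and $n_2$ via $u\equiv g\pmod{n_1}$, $u\equiv 1\pmod{n_2}$ and $g^{(n_i-1)/2}\equiv -1$ gives $s+i\equiv 3f\pmod{n_1-1}$ and $s\equiv 3f'\pmod{n_2-1}$; since $6$ divides both moduli, subtracting the reductions mod $6$ yields $i\equiv 3(f-f')\pmod 6$, and the coprimality $\gcd(f,f')=1$ turns the parity of $f-f'$ into the parity of $\eta=ff'$. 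This argument is sound (existence of the representation is guaranteed because the classes $W_i$ partition $\mathbb{Z}_n^{\ast}$ and the representation $g^su^i$ is unique), and it buys a proof that is independent of Whiteman's cyclotomic-number machinery, whereas the paper's approach buys brevity by outsourcing the fact.
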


\subsection{A class of cyclic codes over $\mathrm{GF}(q)$ defined by two-prime WGCS-II}\label{subsection 3.3} 

We have $\mathrm{gcd}(n,q)=1$. Let $m$ be the order of $q$ modulo $n$. Then the field $\mathrm{GF}(q^{m})$ has a primitive $nth$ root of unity $\beta$. We define
\begin{align}
S(x)=\sum\limits_{i\in C_{1}}x^{i}=\left(\sum\limits_{i\in P}+\sum\limits_{i\in W_{3}}+\sum\limits_{i\in W_{4}}+\sum\limits_{i\in W_{5}}\right)x^{i}\in \mathrm{GF}(q)[x],\label{S(x)}
\end{align}
\begin{align}
T(x)=\left(\sum\limits_{i\in P}+\sum\limits_{i\in W_{1}}+\sum\limits_{i\in W_{2}}+\sum\limits_{i\in W_{3}}\right)x^{i}\in \mathrm{GF}(q)[x] \ \mathrm{and} \label{T(x)}
\end{align}
\begin{align}
U(x)=\left(\sum\limits_{i\in P}+\sum\limits_{i\in W_{2}}+\sum\limits_{i\in W_{3}}+\sum\limits_{i\in W_{4}}\right)x^{i}\in \mathrm{GF}(q)[x].\label{U(x)}
\end{align}
Our main aim in this section is to find the generator polynomial
\begin{align}
 g(x)=\frac{x^n-1}{\mathrm{gcd}(x^n-1,S(x))}.\nonumber
\end{align}
of the cyclic code $C_s$ defined by the sequence $s^{\infty}.$ 
To compute the parameters of the cyclic code $C_s$ defined by the sequence $s^{\infty}$, we need to compute $\mathrm{gcd}(x^{n}-1,S(x))$. Since $\beta$ is a primitive $n$th root of unity, we need only to find such $t$'s that $S(\beta^{t})=0$, where $0\leq t\leq n-1$. To this end, we need number of auxiliary results.
We have 
\begin{align}
0=\beta^{n}-1=(\beta^{n_1})^{n_2}-1=(\beta^{n_1}-1)(1 + \beta^{n_1} + \beta^{2n_1} + \cdots + \beta^{(n_2-1)n_1}).\nonumber
\end{align}
It follows that
\begin{align}
 \beta^{n_1}+\beta^{2n_1}+\cdots+\beta^{(n_2-1)n_1}=-1, i.e., \sum\limits_{i\in P} \beta^{i} = -1.\label{P}
\end{align}
By symmetry we get
\begin{align}
\beta^{n_2} + \beta^{2n_2} + \cdots + \beta^{(n_1-1)n_2} = -1, i.e., \sum\limits_{i\in Q}\beta^{i}=-1.\label{Q}
\end{align}

\begin{lemma}\label{P&Q}
Let the symbols be same as before. For $0\leq j\leq5$, we have
$$
\sum\limits_{i\in W_{j}}\beta^{it}=\left\{
\begin{array}{ll}
-\frac{n_{1}-1}{6}\ (\mathrm{mod}\ p),\ \mathrm{if}\ t\in P,\\
-\frac{n_{2}-1}{6}\ (\mathrm{mod}\ p),\ \mathrm{if}\ t\in Q.
\end{array}
\right.
$$
\end{lemma}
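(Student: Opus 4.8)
The plan is to push the sum through the Chinese Remainder decomposition of $\mathbb{Z}_n^{\ast}$ and then invoke the identities (\ref{P}) and (\ref{Q}). Fix $j\in\{0,1,\ldots,5\}$ and first suppose $t\in P$, so that $t=kn_1$ for some $k$ with $1\le k\le n_2-1$. Since $\beta$ has order $n=n_1n_2$ and $\gcd(n_1,n_2)=1$, the element $\gamma:=\beta^{n_1}$ has order $n_2$, i.e.\ it is a primitive $n_2$th root of unity, and for every integer $i$ we have $\beta^{it}=\beta^{ikn_1}=\gamma^{ik}=\gamma^{(i\bmod n_2)k}$. Hence $\sum_{i\in W_j}\beta^{it}$ depends only on the multiset $\{\,i\bmod n_2:i\in W_j\,\}$, and computing it becomes a counting problem modulo $n_2$.

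Next I would unwind the definition $W_j=\{\,g^su^j\bmod n:s=0,1,\ldots,e-1\,\}$. As $u\equiv 1\ (\mathrm{mod}\ n_2)$ by (\ref{defu}), reduction modulo $n_2$ sends $g^su^j$ to $g^s$, so the multiset $\{\,i\bmod n_2:i\in W_j\,\}$ equals $\{\,g^s\bmod n_2:s=0,\ldots,e-1\,\}$. Because $g$ is a common primitive root, $\mathrm{ord}_{n_2}(g)=n_2-1$, and since $e=(n_1-1)(n_2-1)/6$ is a multiple of $n_2-1$, the list $g^0,g^1,\ldots,g^{e-1}$ reduced modulo $n_2$ runs through $\mathbb{Z}_{n_2}^{\ast}$ exactly $e/(n_2-1)=(n_1-1)/6$ times. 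Therefore
\begin{align}
\sum_{i\in W_j}\beta^{it}=\frac{n_1-1}{6}\sum_{v\in\mathbb{Z}_{n_2}^{\ast}}\gamma^{vk}=\frac{n_1-1}{6}\sum_{v=1}^{n_2-1}\bigl(\gamma^{k}\bigr)^{v}.\nonumber
\end{align}

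To finish, note $\gcd(k,n_2)=1$, so $v\mapsto(\gamma^k)^v$ permutes the nontrivial $n_2$th roots of unity and $\{\,(\gamma^k)^v:1\le v\le n_2-1\,\}=\{\,\beta^i:i\in P\,\}$; thus (\ref{P}) gives $\sum_{v=1}^{n_2-1}(\gamma^k)^v=-1$, and hence $\sum_{i\in W_j}\beta^{it}=-(n_1-1)/6$. This is to be read as the reduction modulo $p$ of the integer $(n_1-1)/6$, which is an integer since $6=\gcd(n_1-1,n_2-1)$ divides $n_1-1$. The case $t\in Q$ is entirely symmetric: with $t=kn_2$, $1\le k\le n_1-1$, the element $\beta^{n_2}$ is a primitive $n_1$th root of unity, and since $u\equiv g\ (\mathrm{mod}\ n_1)$ the relevant multiset is $\{\,g^{s+j}\bmod n_1:s=0,\ldots,e-1\,\}$, which covers $\mathbb{Z}_{n_1}^{\ast}$ exactly $e/(n_1-1)=(n_2-1)/6$ times; then (\ref{Q}) yields the value $-(n_2-1)/6$.

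I do not anticipate a genuine obstacle here. The one step deserving care is the multiplicity bookkeeping above — the assertion that reduction modulo $n_2$ (resp.\ modulo $n_1$) carries the $e$-term list defining $W_j$ uniformly onto $\mathbb{Z}_{n_2}^{\ast}$ (resp.\ $\mathbb{Z}_{n_1}^{\ast}$), with all fibres of equal size. That is precisely where the common-primitive-root hypothesis and the divisibilities $(n_2-1)\mid e$ and $(n_1-1)\mid e$ are used; everything else reduces to the geometric-sum identities already recorded as (\ref{P}) and (\ref{Q}).
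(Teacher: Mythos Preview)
Your proposal is correct and follows essentially the same approach as the paper: reduce modulo $n_2$ (resp.\ $n_1$), use the common-primitive-root hypothesis together with $(n_2-1)\mid e$ (resp.\ $(n_1-1)\mid e$) to see that the multiset of residues covers $\mathbb{Z}_{n_2}^{\ast}$ (resp.\ $\mathbb{Z}_{n_1}^{\ast}$) uniformly, and then invoke (\ref{P}) (resp.\ (\ref{Q})). Your write-up is in fact somewhat cleaner than the paper's, as you explicitly introduce $\gamma=\beta^{n_1}$ and handle the factor $k$ in $t=kn_1$ carefully, but there is no substantive difference in strategy.
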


\begin{proof} Suppose that $t\in Q$. Since $g$ is a common primitive roots of $n_1$ and $n_2$ and the order of $g$ modulo $n$ is $e$,
by the definition of $u$, we have
\begin{align}
W_{j}\ \mathrm{mod}\ n_{1}&=\{g^{s}u^{j}\ \mathrm{mod}\ n_{1}:\ s=0,1,2,\cdots,e-1\}\nonumber\\
&=\{g^{s+j}\ \mathrm{mod}\ n_{1}:\ s=0,1,2,\cdots,e-1\}\nonumber\\
&=\frac{n_{2}-1}{6}\ast\{1,2,\cdots,n_{1}-1\},\nonumber
\end{align}
where $\frac{n_{2}-1}{6}$ denotes the multiplicity of each element in the set $\{1,2,\cdots,n_{1}-1\}$.
 We can write $g^sx^j$ in the form 
\begin{align}
&1+k_{11}n_1,1+k_{12}n_1,\cdots,1+k_{1(n_2-1)/6}n_1,\nonumber \\
&2+k_{21}n_1,2+k_{22}n_1,\cdots,2+k_{2(n_2-1)/6}n_1,\nonumber \\
\vdots \nonumber \\
&n_1-1+k_{(n_1-1)1}n_1,n_1-1+k_{(n_1-1)2}n_1,\cdots,n_1-1+k_{(n_1-1)(n_2-1)/6}n_1.\label{g^sx^j}
 \end{align}
where $k_{li}$ is an positive integer, $1\leq l\leq n_1-1$ and $1\leq i\leq (n_2-1)/6$.\\
When $s$ ranges over $\{0, 1, \cdots\ , e-1\},$ we divides the set $W_j$ into $(n_2-1)/6$ subsets each of which contains ${n_1 -1}$
consecutive integers, i.e., $ \ g^{s+j}\ \mathrm{mod}\ n_{1}$ takes on each element of $\{1,2,\cdots,n_{1}-1\}$ exactly $\frac{n_{2}-1}{6}$
times. From (\ref{g^sx^j}), it follows that 
if $t\in Q,$ we have $\beta^{(m + k_{li}n_1)t}=\beta^{mt}$, where $1\leq m \leq n_1-1.$  
 It follows from (\ref{Q}) that
$$\sum\limits_{i\in W_{j}}\beta^{it}=(\frac{n_{2}-1}{6})\sum\limits_{j\in Q}\beta^{j}=-\frac{n_{2}-1}{6}\ (\mathrm{mod}\ p).$$
For $t\in P$, we can get the result by similar argument.
\end{proof}
\begin{lemma}\label{mod 6}
 For any $r\in W_{i}$, we have $rW_{j}=W_{i+j\ (\mathrm{mod}\ d)}$, where $rW_{j}=\{rt\ |\ t \in W_{j}\}$.
\end{lemma}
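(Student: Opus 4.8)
The plan is to reinterpret the classes $W_i$ as the cosets of the cyclic subgroup $H:=\langle g\rangle\leq\mathbb{Z}_{n}^{\ast}$, after which the statement collapses to a one-line computation in the abelian group $\mathbb{Z}_{n}^{\ast}$.

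First I would record the group-theoretic picture. Since $\mathrm{ord}_n(g)=e$, we have $W_0=\{g^{s}\ (\mathrm{mod}\ n):s=0,1,\cdots,e-1\}=H$ and $|H|=e$, so $[\mathbb{Z}_{n}^{\ast}:H]=\phi(n)/e=(n_1-1)(n_2-1)/e=d$ (this is where the hypothesis that $g$ is a common primitive root of $n_1$ and $n_2$ is used). Because $\mathbb{Z}_{n}^{\ast}$ is abelian, each $W_i=\{g^{s}u^{i}\ (\mathrm{mod}\ n):s=0,1,\cdots,e-1\}=u^{i}H$ is a coset of $H$; by Whiteman's theorem the $d$ sets $W_0,\cdots,W_{d-1}$ are pairwise disjoint and cover $\mathbb{Z}_{n}^{\ast}$, hence they are exactly the $d$ cosets of $H$, and the image of $u$ in $\mathbb{Z}_{n}^{\ast}/H$ has order $d$. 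In particular $u^{d}\in H$, so $u^{i+j}H=W_{i+j\ (\mathrm{mod}\ d)}$ for all $i,j$.

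Next I would take $r\in W_i$, write $r\equiv g^{a}u^{i}\ (\mathrm{mod}\ n)$ for some $a$, and compute directly, using abelianness and $g^{a}\in H$:
\[ rW_j=\{rt\ (\mathrm{mod}\ n):t\in W_j\}=r\,u^{j}H=g^{a}u^{i+j}H=u^{i+j}H=W_{i+j\ (\mathrm{mod}\ d)}, \]
which is exactly the assertion.

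The only real content is the coset interpretation of Whiteman's partition, i.e.\ making sure the $W_i$ really are the cosets of $\langle g\rangle$ (equivalently $u^{d}\in\langle g\rangle$); this rests on the index count $[\mathbb{Z}_{n}^{\ast}:\langle g\rangle]=d$ together with the partition property already quoted in the preliminaries. Everything downstream is routine manipulation in a finite abelian group, so I do not anticipate a genuine obstacle; an entirely elementary alternative would instead split into the cases $i+j<d$ and $i+j\geq d$ and invoke $u^{d}\equiv g^{c}\ (\mathrm{mod}\ n)$ explicitly, but the coset argument is cleaner.
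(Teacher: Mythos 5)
Your argument is correct and is essentially the paper's: both write $r=g^{a}u^{i}$, multiply through $W_j=u^{j}\langle g\rangle$, and reduce the exponent of $u$ modulo $d$ via the key fact $u^{d}\in\langle g\rangle$; you merely phrase it in coset language. If anything, your index count $[\mathbb{Z}_{n}^{\ast}:\langle g\rangle]=\phi(n)/e=d$ gives a cleaner justification of $u^{d}\in\langle g\rangle$ than the paper's bare assertion that such a $\upsilon$ with $u^{d}=g^{\upsilon}$ exists.
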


\begin{proof} We have $ W_{i} =\{g^{s}u^{i}:\ s=0,1,2,\cdots,e-1\},i=0,1,\cdots,d-1 $ and let $r = g^{s_1}u^{i}\in W_{i}.$ 
Then $rW_{j} =g^{s_1}u^{i}\{u^{j}+gu^{j}+g^{1}u^{j}+\cdots+g^{e-1}u^{j}\}=\{g^{s_1}u^{i+j}+g^{s_1+1}u^{i+j}+\cdots+g^{s_1+e-1}u^{i+j}\}.$
Since $u\in \mathbb{Z}_{n}^{\ast}$, there must exist an integer $\upsilon$ with $0\leqslant \upsilon \leqslant e-1$ such that $u^d=g^\upsilon$, therefore, we must have $rW_{j}=W_{i+j\ (\mathrm{mod}\ d)}.$
\end{proof}
\begin{lemma}\label{stu}
For all $t\in \mathbb{Z}_{n}$, we have
$$
S(\beta^{t})=\left\{
\begin{array}{llllll}
-\frac{n_{1}+1}{2}\ (\mathrm{mod}\ p),\ \mathrm{if}\ t\in P,\\
\ \ \frac{n_{2}-1}{2}\ (\mathrm{mod}\ p),\ \mathrm{if}\ t\in Q,\\
\ \ S(\beta),\ \ \ \ \ \ \ \ \ \ \ \ \mathrm{if}\ t\in W_{0},\\
-(T(\beta)+1),\ \ \ \ \ \mathrm{if}\ t\in W_{1},\\
-(U(\beta)+1),\ \ \ \ \ \mathrm{if}\ t\in W_{2},\\
-(S(\beta)+1),\ \ \ \ \ \mathrm{if}\ t\in W_{3},\\
\ \ T(\beta),\ \ \ \ \ \ \ \ \ \ \ \ \mathrm{if}\ t\in W_{4},\\
\ \ U(\beta),\ \ \ \ \ \ \ \ \ \ \ \ \mathrm{if}\ t\in W_{5},
\end{array}
\right.
$$
$$
T(\beta^{t})=\left\{
\begin{array}{llllll}
-\frac{n_{1}+1}{2}\ (\mathrm{mod}\ p),\ \mathrm{if}\ t\in P,\\
\ \ \frac{n_{2}-1}{2}\ (\mathrm{mod}\ p),\ \mathrm{if}\ t\in Q,\\
\ \ T(\beta),\ \ \ \ \ \ \ \ \ \ \ \ \mathrm{if}\ t\in W_{0},\\
\ \ U(\beta),\ \ \ \ \ \ \ \ \ \ \ \ \mathrm{if}\ t\in W_{1},\\
\ \ S(\beta),\ \ \ \ \ \ \ \ \ \ \ \ \mathrm{if}\ t\in W_{2},\\
-(T(\beta)+1),\ \ \ \ \ \mathrm{if}\ t\in W_{3},\\
-(U(\beta)+1),\ \ \ \ \ \mathrm{if}\ t\in W_{4},\\
-(S(\beta)+1),\ \ \ \ \ \mathrm{if}\ t\in W_{5},
\end{array}
\right.
$$
and
$$
U(\beta^{t})=\left\{
\begin{array}{llllll}
-\frac{n_{1}+1}{2}\ (\mathrm{mod}\ p),\ \mathrm{if}\ t\in P,\\
\ \ \frac{n_{2}-1}{2}\ (\mathrm{mod}\ p),\ \mathrm{if}\ t\in Q,\\
\ \ U(\beta),\ \ \ \ \ \ \ \ \ \ \ \ \mathrm{if}\ t\in W_{0},\\
\ \ S(\beta),\ \ \ \ \ \ \ \ \ \ \ \ \mathrm{if}\ t\in W_{1},\\
-(T(\beta)+1),\ \ \ \ \ \mathrm{if}\ t\in W_{2},\\
-(U(\beta)+1),\ \ \ \ \ \mathrm{if}\ t\in W_{3},\\
-(S(\beta)+1),\ \ \ \ \ \mathrm{if}\ t\in W_{4},\\
\ \ T(\beta),\ \ \ \ \ \ \ \ \ \ \ \ \mathrm{if}\ t\in W_{5}.\\
\end{array}
\right.
$$
\end{lemma}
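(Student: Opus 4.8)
The plan is to evaluate each of $S(\beta^{t})$, $T(\beta^{t})$, $U(\beta^{t})$ by splitting on the location of $t$: either $t\in P$, or $t\in Q$, or $t$ lies in one of the cyclotomic classes $W_{0},\dots,W_{5}$ (equivalently $t\in\mathbb{Z}_{n}^{\ast}$). Write $w_{j}=\sum_{i\in W_{j}}\beta^{i}$ for $0\le j\le 5$, with all subscripts read modulo $6$; I will reduce every value to a short fixed list of such partial sums. Two global identities do most of the work. From (\ref{P}) and (\ref{Q}) we have $\sum_{i\in P}\beta^{i}=\sum_{i\in Q}\beta^{i}=-1$; and since $\{0\}\cup P\cup Q\cup\bigcup_{j=0}^{5}W_{j}$ partitions $\mathbb{Z}_{n}$ while $\beta$ is a primitive $n$-th root of unity, summing $\beta^{i}$ over all of $\mathbb{Z}_{n}$ gives $0$, hence $1-1-1+\sum_{j=0}^{5}w_{j}=0$, i.e. $\sum_{j=0}^{5}w_{j}=1$. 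Feeding these into (\ref{S(x)})--(\ref{U(x)}) records the reference expressions $S(\beta)=-1+w_{3}+w_{4}+w_{5}$, $T(\beta)=-1+w_{1}+w_{2}+w_{3}$ and $U(\beta)=-1+w_{2}+w_{3}+w_{4}$ that the remaining cases must match.

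For $t\in P$ I would argue that multiplication by $t$ permutes $P$ (an element $an_{1}$ is sent to $an_{1}t$, still a multiple of $n_{1}$ whose residue mod $n_{2}$ runs over all nonzero classes as $a$ does), so $\sum_{i\in P}\beta^{it}=\sum_{i\in P}\beta^{i}=-1$; that $it\equiv 0\pmod n$ for every $i\in Q$, so $\sum_{i\in Q}\beta^{it}=|Q|=n_{1}-1$; and that Lemma~\ref{P&Q} gives $\sum_{i\in W_{j}}\beta^{it}=-\tfrac{n_{1}-1}{6}$ for each $j$. Since each of $S,T,U$ equals $\sum_{i\in P}x^{i}$ plus exactly three of the $W_{j}$-sums, substituting yields $-1+3\bigl(-\tfrac{n_{1}-1}{6}\bigr)=-\tfrac{n_{1}+1}{2}\pmod p$ in all three cases. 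The case $t\in Q$ is symmetric: here $\sum_{i\in P}\beta^{it}=|P|=n_{2}-1$ and $\sum_{i\in W_{j}}\beta^{it}=-\tfrac{n_{2}-1}{6}$, giving $(n_{2}-1)+3\bigl(-\tfrac{n_{2}-1}{6}\bigr)=\tfrac{n_{2}-1}{2}\pmod p$.

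Finally, for $t\in W_{i}$: since $t$ is a unit it fixes $P$ and $Q$ setwise, so $\sum_{i'\in P}\beta^{i't}=-1$, while Lemma~\ref{mod 6} gives $tW_{j}=W_{i+j}$ and hence $\sum_{i'\in W_{j}}\beta^{i't}=w_{i+j}$. Therefore $S(\beta^{t})=-1+w_{i+3}+w_{i+4}+w_{i+5}$, $T(\beta^{t})=-1+w_{i+1}+w_{i+2}+w_{i+3}$ and $U(\beta^{t})=-1+w_{i+2}+w_{i+3}+w_{i+4}$, subscripts mod $6$. Comparing with the three reference expressions, and using $\sum_{j=0}^{5}w_{j}=1$ to rewrite a triple $w_{a}+w_{b}+w_{c}$ as $1-(w_{a'}+w_{b'}+w_{c'})$ for the complementary triple $\{a',b',c'\}$ whenever that complement is the index set of one of $S,T,U$, each of the eighteen sub-cases collapses to exactly one of $S(\beta)$, $T(\beta)$, $U(\beta)$, $-(S(\beta)+1)$, $-(T(\beta)+1)$, $-(U(\beta)+1)$, reproducing the table in the statement. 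I expect the only real difficulty to be the index bookkeeping in this last step: correctly tracking the cyclic shift of Lemma~\ref{mod 6} for all three polynomials across all six cosets $W_{0},\dots,W_{5}$ and then recognising which complementary triple appears; everything else is just the two counting identities together with Lemma~\ref{P&Q}.
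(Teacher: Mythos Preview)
Your proof is correct and follows essentially the same approach as the paper: both handle $t\in P$ and $t\in Q$ via Lemma~\ref{P&Q} together with the behaviour of $tP$, and both handle $t\in W_{i}$ via the shift rule $tW_{j}=W_{i+j}$ of Lemma~\ref{mod 6} combined with the identity $\sum_{j}w_{j}=1$ to pass to complementary triples. Your presentation is slightly more uniform---treating $S$, $T$, $U$ simultaneously through the abbreviations $w_{j}$ rather than computing $S$ explicitly and appealing to symmetry---but the argument is the same.
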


\begin{proof} Since $\mathrm{gcd}(n_{1},n_{2})=1$, if $t\in P$ then $tP=P$ . Then by (\ref{P}) and  Lemma \ref{P&Q}, we get 
\begin{align}
S(\beta^{t})=\sum\limits_{i\in C_{1}}\beta^{ti}&=\left(\sum\limits_{i\in P}+\sum\limits_{i\in W_{3}}+\sum\limits_{i\in W_{4}}+\sum\limits_{i\in W_{5}}\right)\beta^{ti}\nonumber\\
&=(-1\ \mathrm{mod}\ p)  -\left(\frac{n_{1}-1}{6}\ \mathrm{mod}\ p\right)-\left(\frac{n_{1}-1}{6}\ \mathrm{mod}\ p\right)-\left(\frac{n_{1}-1}{6}\ \mathrm{mod}\ p\right)\nonumber\\
&=-\frac{n_{1}+1}{2}\ \mathrm{mod}\ p.\nonumber
\end{align}
If $t\in Q$, then $tP=0$. Then by  (\ref{P}) and  Lemma \ref{P&Q}, we get
\begin{align}
S(\beta^{t})=\sum\limits_{i\in C_{1}}\beta^{ti}&=\left(\sum\limits_{i\in P}+\sum\limits_{i\in W_{3}}+\sum\limits_{i\in W_{4}}+\sum\limits_{i\in W_{5}}\right)\beta^{ti}\nonumber\\
&=(n_{2}-1 \ \mathrm{mod}\ p)-\left(\frac{n_{2}-1}{6}\ \mathrm{mod}\ p\right)-\left(\frac{n_{2}-1}{6}\ \mathrm{mod}\ p\right)-\left(\frac{n_{2}-1}{6}\ \mathrm{mod}\ p\right)\nonumber\\
&=\frac{n_{2}-1}{2}\ \mathrm{mod}\ p.\nonumber
\end{align}
By Lemma \ref{mod 6}, $tW_i=W_i$ if $t\in W_0$. If $t\in W_0,$ then $tP=P$ since $\mathrm{gcd}(t,n_{2})=1$. Hence
\begin{align}
S(\beta^{t})=\sum\limits_{i\in C_{1}}\beta^{ti}&=\left(\sum\limits_{i\in P}+\sum\limits_{i\in W_{3}}+\sum\limits_{i\in W_{4}}+\sum\limits_{i\in W_{5}}\right)\beta^{ti}\nonumber\\
&=\left(\sum\limits_{i\in P}+\sum\limits_{i\in W_{3}}+\sum\limits_{i\in W_{4}}+\sum\limits_{i\in W_{5}}\right)\beta^{i}\nonumber\\
&=S(\beta).\nonumber
\end{align}
By  Lemma \ref{mod 6}, if $t\in W_1$ then $tW_i=W_{(i+1)( \mathrm{mod\ 6})}$ for $0\leq i\leq5$. And since $\mathrm{gcd}(a,n_{2})=1$, if $t\in W_1$ then $tP=P$. 
We have $\beta^n-1=(\beta-1)(\sum\limits_{i=0}^{n-1}\beta^i)=0$ and $\beta-1\neq0,$ this gives $\sum\limits_{i=0}^{n-1}\beta^i=0.$ 
Therefore, $\sum\limits_{i=0}^{n-1}\beta^i=1+\sum\limits_{i\in P}\beta^i+\sum\limits_{i\in Q}\beta^i+\sum\limits_{i\in\bigcup\limits_{j=0}^{5} W_{j}}\beta^{i}=0.$ 
From (\ref{P}) and (\ref{Q}), we get
\begin{equation}
\sum\limits_{i\in\bigcup\limits_{j=0}^{5} W_{j}}\beta^{i}=1.\label{sum}
\end{equation}
Hence
\begin{align}
S(\beta^{t})=\sum\limits_{i\in C_{1}}\beta^{ti}&=\left(\sum\limits_{i\in P}+\sum\limits_{i\in W_{3}}+\sum\limits_{i\in W_{4}}+\sum\limits_{i\in W_{5}}\right)\beta^{ti} \nonumber \\ 
&=\left(\sum\limits_{i\in P}+\sum\limits_{i\in W_{4}}+\sum\limits_{i\in W_{5}}+\sum\limits_{i\in W_{0}}\right)\beta^{i}\nonumber\\
&=\left(\sum\limits_{i\in P}-\sum\limits_{i\in W_{1}}-\sum\limits_{i\in W_{2}}-\sum\limits_{i\in W_{3}}\right)\beta^{i}+1\nonumber\\
&=\left(-\sum\limits_{i\in P}-\sum\limits_{i\in W_{1}}-\sum\limits_{i\in W_{2}}-\sum\limits_{i\in W_{3}}\right)\beta^{i}+2\sum\limits_{i\in P}\beta^{i}+1\nonumber\\
&=-(T(\beta)+1).\nonumber
\end{align}
Similarly, we can get the result when $t\in {W_i},\ \  2\leq i\leq5$.\\
In a similar fashion, we can get the result for $T(\beta^{t})$ and $U(\beta^{t}).$ This completes the proof of this lemma.
\end{proof}
Note that
\begin{align}
 S(1)=\frac{(n_1+1)(n_2-1)}{2} (\mathrm{mod \ p}).\label{val1}
\end{align}

\begin{corollary}\label{qw0}
Let the symbols be defined as before. We have the following conclusions.\\
\rm{(I)} If $q\notin W_{0}$, we have $S(\beta)\neq0,-1$, $T(\beta)\neq0,-1$ and $U(\beta)\neq0,-1.$ \\
\rm{(II)} If $q\in W_{0}$, we have $S^{q}(\beta)=S(\beta)$, $T^{q}(\beta)=T(\beta)$ and $U^{q}(\beta)=U(\beta)$ and ${S(\beta), T(\beta), U(\beta)}\in GF(q).$
\end{corollary}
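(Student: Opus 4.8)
\emph{Part (II)} is the routine half. The polynomials $S(x),T(x),U(x)$ have all coefficients in $\mathrm{GF}(q)$ (in fact in $\{0,1\}$), so $S(z)^{q}=S(z^{q})$ for every $z\in\mathrm{GF}(q^{m})$, and likewise for $T,U$. When $q\in W_{0}$, Lemma \ref{stu} gives $S(\beta^{q})=S(\beta)$, $T(\beta^{q})=T(\beta)$, $U(\beta^{q})=U(\beta)$, whence $S(\beta)^{q}=S(\beta)$, $T(\beta)^{q}=T(\beta)$, $U(\beta)^{q}=U(\beta)$. Since the fixed field of the $q$-power Frobenius automorphism of $\mathrm{GF}(q^{m})$ is exactly $\mathrm{GF}(q)$, this gives $S(\beta),T(\beta),U(\beta)\in\mathrm{GF}(q)$, proving (II).

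For \emph{Part (I)} I would argue by contradiction, pushing everything to the impossibility $0=-1$ in a field. Since $\gcd(q,n)=1$, the residue of $q$ modulo $n$ lies in $\mathbb{Z}_{n}^{\ast}=\bigcup_{i=0}^{5}W_{i}$, so $q\in W_{i}$ for some $i$, and $i\neq 0$ by hypothesis. By Lemma \ref{mod 6}, $q^{k}W_{j}=W_{j+ki\,(\mathrm{mod}\ 6)}$, so the residues $q^{k}$ run through the classes $W_{0},W_{i},W_{2i},\dots$ as $k$ grows. Using $S(\beta)^{q^{k}}=S(\beta^{q^{k}})$ together with the table of Lemma \ref{stu}: if $S(\beta)=0$ then $S(\beta^{q^{k}})=0$ for all $k$; and if $i\in\{1,3,5\}$ then $\gcd(i,6)\mid 3$, so some $q^{k}$ lies in $W_{3}$, where Lemma \ref{stu} forces $S(\beta^{q^{k}})=-(S(\beta)+1)=-1$, a contradiction. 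The case $S(\beta)=-1$ is handled identically (one then lands in the class where the value is $0$), and running the same argument with $T$ or $U$ in place of $S$ disposes of the remaining claims whenever $i\in\{1,3,5\}$.

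The real difficulty is $i\in\{2,4\}$, where the orbit of $\beta$ under $z\mapsto z^{q}$ meets only $W_{0},W_{2},W_{4}$ and never $W_{3}$. A short bookkeeping with Lemma \ref{stu} shows that in this range $S(\beta)=0$ forces $(S(\beta),T(\beta),U(\beta))=(0,0,-1)$ and $S(\beta)=-1$ forces $(S(\beta),T(\beta),U(\beta))=(-1,-1,0)$, and that $T(\beta)\in\{0,-1\}$ or $U(\beta)\in\{0,-1\}$ lead to one of these same two triples; so for $q\in W_{2}\cup W_{4}$ the corollary boils down to excluding the two triples $(0,0,-1)$ and $(-1,-1,0)$. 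I expect this to be the main obstacle, because the shift identities of Lemma \ref{stu} are self-consistent on these triples and so cannot by themselves yield a contradiction. To close the gap I would introduce the generalized Gaussian periods $\sigma_{j}=\sum_{i\in W_{j}}\beta^{i}$, write $S(\beta)=-1+\sigma_{3}+\sigma_{4}+\sigma_{5}$, $T(\beta)=-1+\sigma_{1}+\sigma_{2}+\sigma_{3}$ and $U(\beta)=-1+\sigma_{2}+\sigma_{3}+\sigma_{4}$, use $\sum_{j=0}^{5}\sigma_{j}=1$ from (\ref{sum}), and then bring in the quadratic relations among the products $\sigma_{i}\sigma_{j}$ whose coefficients are the order-$6$ cyclotomic numbers computed in Lemma \ref{odd&even} and Tables 1 and 2; substituting either candidate triple should violate one of these period identities. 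Alternatively, note that $S(\beta^{t})\in\{0,-1\}$ for all $t\in\mathbb{Z}_{n}^{\ast}$ would make $\gcd(x^{n}-1,S(x))$ divisible by the degree-$3e$ factor $\prod_{t\in W_{0}\cup W_{2}\cup W_{4}}(x-\beta^{t})$, and weighing this against $S(1)$ from (\ref{val1}) and the weight $|C_{1}|$ of the defining sequence should again be untenable.
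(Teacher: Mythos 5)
Your part (II) and your treatment of (I) for $q\in W_{1}\cup W_{3}\cup W_{5}$ are correct and are essentially the paper's own argument: the paper assumes (``without loss of generality'') $q\in W_{1}$, composes three Frobenius steps through Lemma \ref{stu} to obtain $S^{q^{3}}(\beta)+S(\beta)+1=0$, and observes that $0$ and $-1$ are not roots; your ``some power $q^{k}$ lands in $W_{3}$'' argument is the same mechanism, and it does cover $i\in\{1,3,5\}$ exactly as you say.

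The genuine gap in your proposal is the one you yourself flag: for $q\in W_{2}\cup W_{4}$ you only reduce the problem (correctly) to excluding the triples $(S(\beta),T(\beta),U(\beta))=(0,0,-1)$ and $(-1,-1,0)$, and then offer two strategies in the conditional mood (``should violate\dots'', ``should again be untenable'') without carrying either out, so in these cases the corollary is not proved. You should know that the paper does no better: its proof dismisses $2\le i\le 5$ with the phrase ``similar argument'', and your analysis shows precisely why the argument is not similar there --- for $q\in W_{2}\cup W_{4}$ the composed relation degenerates to $S^{q^{3}}(\beta)=S(\beta)$, and the shift identities of Lemma \ref{stu} are self-consistent on the two bad triples, so no contradiction can come from Lemma \ref{stu} alone. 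Your first sketched route can in fact be made quantitative: summing the three identities of Lemma \ref{STU} (the six-period combinations cancel) gives $S(\beta)(S(\beta)+1)+T(\beta)(T(\beta)+1)+U(\beta)(U(\beta)+1)=3(n-1)/4$ when $\eta$ is odd and $=-3(n+1)/4$ when $\eta$ is even, while both bad triples make the left-hand side $0$; this excludes them whenever $p\nmid 3(n-1)/4$ (resp.\ $p\nmid 3(n+1)/4$), but not unconditionally (e.g.\ $p=3$, or $p\mid(n-1)/4$, resp.\ $p\mid(n+1)/4$ --- exactly the degenerate situations that reappear in Corollary \ref{condition}). So the residual cases are closed neither by your sketch nor by the paper's ``similar argument''; your second suggestion (weighing $\deg\gcd(x^{n}-1,S(x))$ against $S(1)$ and the weight of the sequence) does not obviously produce a contradiction either, since nothing quoted in the paper forbids the linear complexity from dropping by $3e$.
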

\begin{proof} (I) Note that $\mathrm{gcd}(n,q)=1$, i.e., $q\in \mathbb{Z}_{n}^{\ast}$ then $q\in\bigcup\limits_{i=1}^{5}W_i$. If $q\notin W_{0}$, without loss of generality, assume that
$q\in W_{1}$. By Lemma \ref{stu}, we have
\begin{align}
S^{q^{3}}(\beta)=S^{q^{2}}(\beta^{q})=(-T(\beta)-1)^{q^{2}}=(-T(\beta^{q})-1)^{q} = (-U(\beta)-1)^{q}=(-U(\beta^{q})-1)=-S(\beta)-1,\nonumber
\end{align}
i.e.,
\begin{align}
 S^{q^{3}}(\beta)+S(\beta)+1=0.\label{S}
\end{align}
It is easy to see that $0$ and $-1$ is not the solution of Eq. (\ref{S}). Similarly, we have
$$T^{q^{3}}(\beta)+T(\beta)+1=0,$$ and
$$U^{q^{3}}(\beta)+U(\beta)+1=0,$$
i.e., $T(\beta)\neq0,-1$ and $U(\beta)\neq0,-1$.
If $q\in W_i, 2\leqslant i\leqslant 5 $ the results can be proved by similar argument.\\
(II) If $q\in W_{0}$, the conclusion is obvious.
\end{proof}
\begin{lemma}\label{STU}
Let the symbols be the same as before. We have the following conclusions.\\
\rm{(I)} If $\eta$ is odd, then we have three cases:\\
Case \rm{(A)} If $\rho- \varrho \equiv 0\ (\mathrm{mod}\ 3),$
\begin{align}
& S(\beta)(S(\beta)+1)=\frac{n-1}{4}-\frac{2y}{3}\left(-\sum\limits_{i\in W_{0}}+\sum\limits_{i\in W_{2}}-\sum\limits_{i\in W_{3}}+\sum\limits_{i\in W_{5}}\right)\beta^i, \nonumber \\
& T(\beta)(T(\beta)+1)=\frac{n-1}{4}-\frac{2y}{3}\left(\sum\limits_{i\in W_{0}}-\sum\limits_{i\in W_{1}}+\sum\limits_{i\in W_{3}}-\sum\limits_{i\in W_{4}}\right)\beta^i \ \mathrm{and} \nonumber\\
& U(\beta)(U(\beta)+1)=\frac{n-1}{4}-\frac{2y}{3}\left(\sum\limits_{i\in W_{1}}-\sum\limits_{i\in W_{2}}+\sum\limits_{i\in W_{4}}-\sum\limits_{i\in W_{5}}\right)\beta^i. \nonumber
\end{align}
Case \rm{(B)} If $\rho- \varrho \equiv 1\ (\mathrm{mod}\ 3),$
\begin{align}
& S(\beta)(S(\beta)+1)=\frac{n-1}{4}+\frac{x+y}{3}\left(\sum\limits_{i\in W_{1}}-\sum\limits_{i\in W_{2}}+\sum\limits_{i\in W_{4}}-\sum\limits_{i\in W_{5}}\right)\beta^i, \nonumber \\
& T(\beta)(T(\beta)+1)=\frac{n-1}{4}+\frac{x+y}{3}\left(-\sum\limits_{i\in W_{0}}+\sum\limits_{i\in W_{2}}-\sum\limits_{i\in W_{3}}+\sum\limits_{i\in W_{5}}\right)\beta^i\ \mathrm{and} \nonumber\\
& U(\beta)(U(\beta)+1)=\frac{n-1}{4}+\frac{x+y}{3}\left(\sum\limits_{i\in W_{0}}-\sum\limits_{i\in W_{1}}+\sum\limits_{i\in W_{3}}-\sum\limits_{i\in W_{4}}\right)\beta^i. \nonumber
\end{align}
Case \rm{(C)} If $\rho- \varrho \equiv 2\ (\mathrm{mod}\ 3), $
\begin{align}
& S(\beta)(S(\beta)+1)=\frac{n-1}{4}-\frac{x-y}{3}\left(\sum\limits_{i\in W_{0}}-\sum\limits_{i\in W_{1}}+\sum\limits_{i\in W_{3}}-\sum\limits_{i\in W_{4}}\right)\beta^i, \nonumber \\
& T(\beta)(T(\beta)+1)=\frac{n-1}{4}-\frac{x-y}{3}\left(\sum\limits_{i\in W_{1}}-\sum\limits_{i\in W_{2}}+\sum\limits_{i\in W_{4}}-\sum\limits_{i\in W_{5}}\right)\beta^i \ \mathrm{and} \nonumber\\
& U(\beta)(U(\beta)+1)=\frac{n-1}{4}-\frac{x-y}{3}\left(-\sum\limits_{i\in W_{0}}+\sum\limits_{i\in W_{2}}-\sum\limits_{i\in W_{3}}+\sum\limits_{i\in W_{5}}\right)\beta^i. \nonumber
\end{align}
\rm{(II)} If $\eta$ is even, then we have three cases:\\
Case \rm{(A)} If $\rho- \varrho \equiv 0\ (\mathrm{mod}\ 3),$
\begin{align}
& S(\beta)(S(\beta)+1)=-\frac{n+1}{4}+\frac{2y}{3}\left(-\sum\limits_{i\in W_{0}}+\sum\limits_{i\in W_{2}}-\sum\limits_{i\in W_{3}}+\sum\limits_{i\in W_{5}}\right)\beta^i, \nonumber \\
& T(\beta)(T(\beta)+1)=-\frac{n+1}{4}+\frac{2y}{3}\left(\sum\limits_{i\in W_{0}}-\sum\limits_{i\in W_{1}}+\sum\limits_{i\in W_{3}}-\sum\limits_{i\in W_{4}}\right)\beta^i \ \mathrm{and} \nonumber \\
& U(\beta)(U(\beta)+1)=-\frac{n+1}{4}+\frac{2y}{3}\left(\sum\limits_{i\in W_{1}}-\sum\limits_{i\in W_{2}}+\sum\limits_{i\in W_{4}}-\sum\limits_{i\in W_{5}}\right)\beta^i. \nonumber
\end{align}
Case \rm{(B)} If $\rho- \varrho \equiv 1\ (\mathrm{mod}\ 3),$
\begin{align}
& S(\beta)(S(\beta)+1)=-\frac{n+1}{4}-\frac{x+y}{3}\left(\sum\limits_{i\in W_{1}}-\sum\limits_{i\in W_{2}}+\sum\limits_{i\in W_{4}}-\sum\limits_{i\in W_{5}}\right)\beta^i, \nonumber \\
& T(\beta)(T(\beta)+1)=-\frac{n+1}{4}-\frac{x+y}{3}\left(-\sum\limits_{i\in W_{0}}+\sum\limits_{i\in W_{2}}-\sum\limits_{i\in W_{3}}+\sum\limits_{i\in W_{5}}\right)\beta^i \ \mathrm{and} \nonumber\\
& U(\beta)(U(\beta)+1)=-\frac{n+1}{4}-\frac{x+y}{3}\left(\sum\limits_{i\in W_{0}}-\sum\limits_{i\in W_{1}}+\sum\limits_{i\in W_{3}}-\sum\limits_{i\in W_{4}}\right)\beta^i. \nonumber
\end{align}
Case \rm{(C)} If $\rho- \varrho \equiv 2\ (\mathrm{mod}\ 3), $
\begin{align}
& S(\beta)(S(\beta)+1)=-\frac{n+1}{4}+\frac{x-y}{3}\left(\sum\limits_{i\in W_{0}}-\sum\limits_{i\in W_{1}}+\sum\limits_{i\in W_{3}}-\sum\limits_{i\in W_{4}}\right)\beta^i, \nonumber \\
& T(\beta)(T(\beta)+1)=-\frac{n+1}{4}+\frac{x-y}{3}\left(\sum\limits_{i\in W_{1}}-\sum\limits_{i\in W_{2}}+\sum\limits_{i\in W_{4}}-\sum\limits_{i\in W_{5}}\right)\beta^i \ \mathrm{and} \nonumber\\
& U(\beta)(U(\beta)+1)=-\frac{n+1}{4}+\frac{x-y}{3}\left(-\sum\limits_{i\in W_{0}}+\sum\limits_{i\in W_{2}}-\sum\limits_{i\in W_{3}}+\sum\limits_{i\in W_{5}}\right)\beta^i. \nonumber
\end{align}

\end{lemma}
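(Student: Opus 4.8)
The plan is to convert each of the three products into a quadratic expression in the partial sums $\eta_j:=\sum_{i\in W_j}\beta^i$ for $0\le j\le 5$, to evaluate that expression by counting with the cyclotomic numbers of order $6$, and to read off the closed form from Tables 1--2. The key starting observation is that, by (\ref{P}), $S(\beta)=-1+\eta_3+\eta_4+\eta_5$, and, by (\ref{sum}), $\eta_0+\eta_1+\eta_2=1-(\eta_3+\eta_4+\eta_5)$; writing $\sigma:=\eta_3+\eta_4+\eta_5$ this gives $S(\beta)(S(\beta)+1)=(\sigma-1)\sigma=-(\eta_3+\eta_4+\eta_5)(\eta_0+\eta_1+\eta_2)$. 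The same algebra applied to (\ref{T(x)}) and (\ref{U(x)}) gives $T(\beta)(T(\beta)+1)=-(\eta_1+\eta_2+\eta_3)(\eta_0+\eta_4+\eta_5)$ and $U(\beta)(U(\beta)+1)=-(\eta_2+\eta_3+\eta_4)(\eta_0+\eta_1+\eta_5)$, so in every instance only cross products $\eta_a\eta_b$ with $a\ne b$ are needed, which removes the diagonal $\eta_a^2$ terms entirely.

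Next I would establish, for $a\ne b$, the expansion $\eta_a\eta_b=\sum_{k\in\mathbb{Z}_n}N_{a,b}(k)\,\beta^k$ with $N_{a,b}(k)=\#\{(i,j)\in W_a\times W_b:i+j\equiv k\pmod n\}$, and compute $N_{a,b}(k)$ by the coset of $k$. For $k\in W_c$, the substitution $z=k^{-1}i$ together with Lemma \ref{mod 6} and Lemma \ref{-1} shows $N_{a,b}(k)$ depends only on $c$ and equals $(a-c,b-c)_6$ when $\eta$ is odd and $(a-c+3,b-c)_6$ when $\eta$ is even. For $k\in P$ one has $N_{a,b}(k)=|W_a\cap(k-W_b)|=d(b,a;k)$ (with $b$ replaced by $b+3$ when $\eta$ is even), which by Lemma \ref{value d(i,j,t)} is the same for all $k\in P$; the case $k\in Q$ is analogous; and $N_{a,b}(0)=\#\{i\in W_a:-i\in W_b\}$ is $|W_a|$ when $b\equiv a\pmod 6$ ($\eta$ odd) or $b\equiv a+3\pmod 6$ ($\eta$ even), and $0$ otherwise. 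Summing $\beta^k$ over $k\in P$ and $k\in Q$ by means of (\ref{P}) and (\ref{Q}), this presents each $\eta_a\eta_b$ as an explicit scalar plus $\sum_{c=0}^5(a-c,b-c)_6\,\eta_c$ (with the index shifted by $3$ in the even case).

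The final step is substitution and bookkeeping: for $\eta$ odd use Table 2, for $\eta$ even use Table 1 --- this is precisely where the split into the subcases $\rho-\varrho\equiv 0,1,2\ (\mathrm{mod}\ 3)$ enters --- recovering all $36$ cyclotomic numbers from the equalities recorded in Lemma \ref{odd&even}, collecting the coefficient of each $\eta_c$ (a sum of nine cyclotomic numbers) together with the scalar term, and simplifying with $n_1n_2=x^2+3y^2$, $M=\frac16((n_1-2)(n_2-2)-1)$ and the relation $\eta_0+\cdots+\eta_5=1$ to pass to the balanced form whose $\eta_c$-coefficients sum to $0$. This is expected to produce the scalar $\frac{n-1}{4}$ when $\eta$ is odd and $-\frac{n+1}{4}$ when $\eta$ is even, and the displayed $\eta$-combination with coefficient $\mp\frac{2y}{3}$, $\pm\frac{x+y}{3}$ or $\mp\frac{x-y}{3}$ according to the subcase. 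Because this argument proves the $S$-identity for every primitive $n$th root of unity, the $T$- and $U$-identities then follow without repeating the computation: for $t\in W_1$, Lemma \ref{stu} gives $S(\beta^t)=-(T(\beta)+1)$, hence $S(\beta^t)(S(\beta^t)+1)=T(\beta)(T(\beta)+1)$, while $\sum_{i\in W_j}(\beta^t)^i=\eta_{j+1}$ cyclically relabels the right-hand side; likewise $t\in W_5$ gives the $U$-identity from $S(\beta^t)=U(\beta)$. I expect the last step --- assembling each $\eta_c$-coefficient from nine tabulated cyclotomic numbers and verifying the large cancellation down to the stated closed form, across all six cases --- to be the main obstacle.
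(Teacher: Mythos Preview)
Your approach is correct and uses the same underlying machinery (Lemma \ref{value d(i,j,t)}, Lemma \ref{-1}, the cyclotomic numbers of Lemma \ref{odd&even}, and Tables 1--2), but it differs from the paper's proof in two worthwhile ways. First, the paper expands $S(\beta)(S(\beta)+1)=-\sigma+\sigma^{2}$ with $\sigma=\eta_{3}+\eta_{4}+\eta_{5}$ and therefore has to handle the three diagonal sums $\sum_{i\in W_{j}}\sum_{j'\in W_{j}}\beta^{i+j'}$ (each carrying an extra constant $|W_{j}|$) together with the three doubled cross sums; your rewriting $\sigma(\sigma-1)=-(\eta_{3}+\eta_{4}+\eta_{5})(\eta_{0}+\eta_{1}+\eta_{2})$ via (\ref{sum}) eliminates all diagonal terms at the outset and leaves only nine genuinely off-diagonal products $\eta_{a}\eta_{b}$, which streamlines the bookkeeping. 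Second, the paper obtains the $T$- and $U$-identities by repeating the same computation, whereas your observation that the $S$-identity holds for every primitive $n$th root, combined with Lemma \ref{stu} (so that $S(\beta^{t})(S(\beta^{t})+1)=T(\beta)(T(\beta)+1)$ for $t\in W_{1}$ and $=U(\beta)(U(\beta)+1)$ for $t\in W_{5}$) and $\sum_{i\in W_{j}}(\beta^{t})^{i}=\eta_{j+k}$ for $t\in W_{k}$, derives them by a cyclic relabelling of the $\eta_{c}$'s with no extra work. One small point to keep straight in the even case: among your nine pairs $(a,b)$ with $a\in\{3,4,5\}$, $b\in\{0,1,2\}$, exactly the three with $a\equiv b+3$ produce a nonzero $N_{a,b}(0)=|W_{a}|$ and, after the shift $b\mapsto b+3$, fall into the $i=j$ branch of Lemma \ref{value d(i,j,t)}; this is where the constant $-\tfrac{n+1}{4}$ (versus $\tfrac{n-1}{4}$) emerges, so be sure to separate those three pairs when assembling the scalar term.
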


\begin{proof} (I) By the definition of $S(x)$ and from (\ref{P}), we have 
\begin{align}
 S(\beta)=-1+\left(\sum\limits_{i\in W_{3}}+\sum\limits_{i\in W_{4}}+\sum\limits_{i\in W_{5}}\right)\beta^i.\nonumber
\end{align}
Then, we get
\begin{align}
S(\beta)(S(\beta+1))=-\left(\sum\limits_{i\in W_{3}}+\sum\limits_{i\in W_{4}}+\sum\limits_{i\in W_{5}}\right)\beta^i+\left(\sum\limits_{i\in W_{3}}\sum\limits_{j\in W_{3}}+\sum\limits_{i\in W_{4}}\sum\limits_{j\in W_{4}}+\sum\limits_{i\in W_{5}}\sum\limits_{j\in W_{5}}\right)\beta^{i+j}\nonumber\\
+\left( 2\sum\limits_{i\in W_{3}}\sum\limits_{j\in W_{4}}+ 2\sum\limits_{i\in W_{3}}\sum\limits_{j\in W_{5}}+ 2\sum\limits_{i\in W_{4}}\sum\limits_{j\in W_{5}}\right)\beta^{i+j}.\label{STU.1} \ \ \ \ \ \ \ \ \ \ \ \ \ \ \ \ \ \ \ \ \ \ \ \ \ \ \ \ \ \ \ \ \ \ \  
\end{align}
From Lemma \ref{-1}, if $\eta$ is odd then $-1\in W_0$ and that $-W_j=\{ -t : t\in W_j\}=W_j.$ 
\begin{align}
&\ \ \ \ \  \sum\limits_{i\in W_{3}}\sum\limits_{j\in W_{3}}\beta^{i+j}=\sum\limits_{i\in W_{3}}\sum\limits_{j\in W_{3}}\beta^{i-j}\nonumber  \\
&\ \ \ \ \ \ =|W_3|+\sum\limits_{r\in P\cup Q}d(3,3;r)\beta^{r}+(3,3)_6\sum\limits_{i\in W_{0}}\beta^{i}+(2,2)_6\sum\limits_{i\in W_{1}}\beta^{i}+(1,1)_6\sum\limits_{i\in W_{2}}\beta^{i}+(0,0)_6\sum\limits_{i\in W_{3}}\beta^{i}\nonumber \\
&\ \ \ \ \ \ \ \ \ +(5,5)_6\sum\limits_{i\in W_{4}}\beta^{i}+(4,4)_6\sum\limits_{i\in W_{5}}\beta^{i},\label{STU1}
\end{align}
\begin{align}
&\ \ \ \ \  \sum\limits_{i\in W_{4}}\sum\limits_{j\in W_{4}}\beta^{i+j}=\sum\limits_{i\in W_{4}}\sum\limits_{j\in W_{4}}\beta^{i-j}\nonumber \\
&\ \ \ \ \ \ =|W_4|+\sum\limits_{r\in P\cup Q}d(4,4;r)\beta^{r}+(4,4)_6\sum\limits_{i\in W_{0}}\beta^{i}+(3,3)_6\sum\limits_{i\in W_{1}}\beta^{i}+(2,2)_6\sum\limits_{i\in W_{2}}\beta^{i}+(1,1)_6\sum\limits_{i\in W_{3}}\beta^{i}\nonumber \\
&\ \ \ \ \ \ \ \ \ +(0,0)_6\sum\limits_{i\in W_{4}}\beta^{i}+(5,5)_6\sum\limits_{i\in W_{5}}\beta^{i},\label{STU2}
\end{align}
\begin{align}
&\ \ \ \ \  \sum\limits_{i\in W_{5}}\sum\limits_{j\in W_{5}}\beta^{i+j}=\sum\limits_{i\in W_{5}}\sum\limits_{j\in W_{5}}\beta^{i-j}\nonumber \\
&\ \ \ \ \ \ =|W_5|+\sum\limits_{r\in P\cup Q}d(5,5;r)\beta^{r}+(5,5)_6\sum\limits_{i\in W_{0}}\beta^{i}+(4,4)_6\sum\limits_{i\in W_{1}}\beta^{i}+(3,3)_6\sum\limits_{i\in W_{2}}\beta^{i}+(2,2)_6\sum\limits_{i\in W_{3}}\beta^{i}\nonumber \\
&\ \ \ \ \ \ \ \ \ +(1,1)_6\sum\limits_{i\in W_{4}}\beta^{i}+(0,0)_6\sum\limits_{i\in W_{5}}\beta^{i},\label{STU3}
\end{align}
\begin{align}
&2 \sum\limits_{i\in W_{3}}\sum\limits_{j\in W_{4}}\beta^{i+j}=\sum\limits_{i\in W_{3}}\sum\limits_{j\in W_{4}}\beta^{i-j}\nonumber \\
&= 2\ \left( \sum\limits_{r\in P\cup Q}d(4,3;r)\beta^{r}+(4,3)_6\sum\limits_{i\in W_{0}}\beta^{i}+(3,2)_6\sum\limits_{i\in W_{1}}\beta^{i}+(2,1)_6\sum\limits_{i\in W_{2}}\beta^{i}+(1,0)_6\sum\limits_{i\in W_{3}}\beta^{i}\right.\nonumber \\
&\ \ \ \left.+(0,5)_6\sum\limits_{i\in W_{4}}\beta^{i}+(5,4)_6\sum\limits_{i\in W_{5}}\beta^{i} \right) ,\label{STU4}
\end{align}
\begin{align}
&2 \sum\limits_{i\in W_{3}}\sum\limits_{j\in W_{5}}\beta^{i+j}=\sum\limits_{i\in W_{3}}\sum\limits_{j\in W_{5}}\beta^{i-j}\nonumber \\
&=2 \left( \sum\limits_{r\in P\cup Q}d(5,3;r)\beta^{r}+(5,3)_6\sum\limits_{i\in W_{0}}\beta^{i}+(4,2)_6\sum\limits_{i\in W_{1}}\beta^{i}+(3,1)_6\sum\limits_{i\in W_{2}}\beta^{i}+(2,0)_6\sum\limits_{i\in W_{3}}\beta^{i}\right. \nonumber \\
&\ \ \ \left.+(1,5)_6\sum\limits_{i\in W_{4}}\beta^{i}+(0,4)_6\sum\limits_{i\in W_{5}}\beta^{i} \right),\label{STU5}
\end{align}
\begin{align}
&2 \sum\limits_{i\in W_{4}}\sum\limits_{j\in W_{5}}\beta^{i+j}=\sum\limits_{i\in W_{4}}\sum\limits_{j\in W_{5}}\beta^{i-j}\nonumber \\
&= 2 \left(\  \sum\limits_{r\in P\cup Q}d(5,4;r)\beta^{r}+(5,4)_6\sum\limits_{i\in W_{0}}\beta^{i}+(4,3)_6\sum\limits_{i\in W_{1}}\beta^{i}+(3,2)_6\sum\limits_{i\in W_{2}}\beta^{i}+(2,1)_6\sum\limits_{i\in W_{3}}\beta^{i}\right. \nonumber \\
&\ \ \ \left.+(1,0)_6\sum\limits_{i\in W_{4}}\beta^{i}+(0,5)_6\sum\limits_{i\in W_{5}}\beta^{i} \right), \label{STU6}
\end{align}


Substituting (\ref{STU1} - \ref{STU6}) into (\ref{STU.1}) and combining Lemma \ref{value d(i,j,t)}, Lemma \ref{odd&even} and (\ref{sum}), we get
\begin{align}
S(\beta)(S(\beta)+1)=-(\sum\limits_{i\in W_{3}}\beta^{i}+\sum\limits_{i\in W_{4}}\beta^{i}+\sum\limits_{i\in W_{5}}\beta^{i}) + \left(\frac{3M}{2}-\frac{8x-4a}{72}+\frac{12b+24y}{72}\right)\sum\limits_{i\in W_0}\beta^i \nonumber\\ 
+ \left(\frac{3M}{2}-\frac{16x+8a}{72}\right)\sum\limits_{i\in W_1}\beta^i + \left(\frac{3M}{2}-\frac{8x-4a}{72}-\frac{12b+24y}{72}\right)\sum\limits_{i\in W_2}\beta^i\ \ \ \ \ \ \  \ \ \ \ \ \ \ \ \ \ \ \ \ \ \ \ \ \ \nonumber\\
+ \left(\frac{3M}{2}+1-\frac{8x-4a}{72}+\frac{12b+24y}{72}\right)\sum\limits_{i\in W_3}\beta^i+ \left(\frac{3M}{2}+1+\frac{16x+8a}{72}\right)\sum\limits_{i\in W_4}\beta^i\ \ \ \ \ \ \ \ \ \ \ \ \ \ \ \nonumber\\ + \left(\frac{3M}{2}+1-\frac{8x-4a}{72}-\frac{24y-12b}{72}\right)\sum\limits_{i\in W_5}\beta^i-12\frac{(n_1-1)(n_2-1)}{36}-3\frac{(n_1-1)(n_2-7)}{36}\ \ \ \nonumber\\
-3\frac{(n_1-7)(n_2-1)}{36}+3\frac{(n_1-1)(n_2-1)}{6}\ \ \ \ \ \ \ \ \ \ \ \ \ \ \ \ \ \ \ \ \ \ \ \ \ \ \ \  \ \ \ \ \ \ \ \ \ \ \  \ \ \ \ \ \ \ \ \ \ \ \ \ \ \ \ \ \ \ \ \ \ \  \ \ \nonumber\\
=\frac{n-1}{4}-\frac{a+2x}{18}\left(\sum\limits_{i\in W_0}-2\sum\limits_{i\in W_{1}}+\sum\limits_{i\in W_{2}}+\sum\limits_{i\in W_{3}}-2\sum\limits_{i\in W_{4}}+\sum\limits_{i\in W_{5}}\right)\beta^{i}\ \ \ \ \ \ \ \ \ \ \ \ \ \ \ \ \ \ \ \ \ \  \ \ \ \ \ \ \ \ \ \ \ \nonumber\\ 
+\frac{b+2y}{6}\left(\sum\limits_{i\in W_0}-\sum\limits_{i\in W_{2}}+\sum\limits_{i\in W_{3}}-\sum\limits_{i\in W_{5}}\right)\beta^{i}.\ \ \ \ \ \ \ \ \ \ \ \ \ \ \ \ \  \ \ \  \ \ \ \ \ \ \ \ \ \ \ \ \ \ \ \ \ \ \ \ \ \ \ \ \ \ \ \ \ \ \ \ \ \ \ \ \ \ \ \ \ \ \ \ \  \ \nonumber
\end{align}
By the same argument as above, we can get
\begin{align}
& T(\beta)(T(\beta)+1)=\frac{n-1}{4}-\frac{a+2x}{18}\left(\sum\limits_{i\in W_0}+\sum\limits_{i\in W_{1}}-2\sum\limits_{i\in W_{2}}+\sum\limits_{i\in W_{3}}+\sum\limits_{i\in W_{4}}-2\sum\limits_{i\in W_{5}}\right)\beta^{i}\nonumber\\
&~~~~~~~~~~~~~~~~~~~~~~~~~~~~~~~~+\frac{b+2y}{6}\left(-\sum\limits_{i\in W_0}+\sum\limits_{i\in W_{1}}-\sum\limits_{i\in W_{3}}+\sum\limits_{i\in W_{4}}\right)\beta^{i} \  \ \mathrm{and}\nonumber
\end{align}
\begin{align}
& U(\beta)(U(\beta)+1)=\frac{n-1}{4}-\frac{a+2x}{18}\left(-2\sum\limits_{i\in W_0}+\sum\limits_{i\in W_{1}}+\sum\limits_{i\in W_{2}}-2\sum\limits_{i\in W_{3}}+\sum\limits_{i\in W_{4}}+\sum\limits_{i\in W_{5}}\right)\beta^{i}\nonumber\\
&~~~~~~~~~~~~~~~~~~~~~~~~~~~~~~~~+\frac{b+2y}{6}\left(-\sum\limits_{i\in W_1}+\sum\limits_{i\in W_{2}}-\sum\limits_{i\in W_{4}}+\sum\limits_{i\in W_{5}}\right)\beta^{i}.\nonumber
\end{align}

(II) From the Lemma \ref{-1}, if $\eta$ is even then $-1 \in W_3$ and that $-W_j=\{-t\ : \ t\in W_j\}=W_{(j+3)\ \mathrm{mod\ 6}}$, we get
\begin{align}
& S(\beta)(S(\beta)+1)=-\frac{n-1}{4}-\frac{1}{2}-\frac{a-2x}{18}\left(\sum\limits_{i\in W_0}-2\sum\limits_{i\in W_{1}}+\sum\limits_{i\in W_{2}}+\sum\limits_{i\in W_{3}}-2\sum\limits_{i\in W_{4}}+\sum\limits_{i\in W_{5}}\right)\beta^{i}\nonumber\\
&~~~~~~~~~~~~~~~~~~~~~~~~~~~~~~~~+\frac{b-2y}{6}\left(\sum\limits_{i\in W_0}-\sum\limits_{i\in W_{2}}+\sum\limits_{i\in W_{3}}-\sum\limits_{i\in W_{5}}\right)\beta^{i},\nonumber
\end{align}

\begin{align}
& T(\beta)(T(\beta)+1)=-\frac{n-1}{4}-\frac{1}{2}-\frac{a-2x}{18}\left(\sum\limits_{i\in W_0}+\sum\limits_{i\in W_{1}}-2\sum\limits_{i\in W_{2}}+\sum\limits_{i\in W_{3}}+\sum\limits_{i\in W_{4}}-2\sum\limits_{i\in W_{5}}\right)\beta^{i}\nonumber\\
&~~~~~~~~~~~~~~~~~~~~~~~~~~~~~~~~+\frac{b-2y}{6}\left(-\sum\limits_{i\in W_0}+\sum\limits_{i\in W_{1}}-\sum\limits_{i\in W_{3}}+\sum\limits_{i\in W_{4}}\right)\beta^{i} \ \ \mathrm{and}\nonumber
\end{align}

\begin{align}
& U(\beta)(U(\beta)+1)=-\frac{n-1}{4}-\frac{1}{2}-\frac{a-2x}{18}\left(-2\sum\limits_{i\in W_0}+\sum\limits_{i\in W_{1}}+\sum\limits_{i\in W_{2}}-2\sum\limits_{i\in W_{3}}+\sum\limits_{i\in W_{4}}+\sum\limits_{i\in W_{5}}\right)\beta^{i}\nonumber\\
&~~~~~~~~~~~~~~~~~~~~~~~~~~~~~~~~+\frac{b-2y}{6}\left(-\sum\limits_{i\in W_1}+\sum\limits_{i\in W_{2}}-\sum\limits_{i\in W_{4}}+\sum\limits_{i\in W_{5}}\right)\beta^{i}.\nonumber
\end{align}
From Lemma \ref{relation}, Table 1 and Table 2, we get the desired result.\\
\end{proof}

We need to discuss the factorization of $x^n-1$ over $\mathrm{GF}(q).$
Let $\beta$ be the same as before. Define for each $i$; $0 \leq i \leq 5 $,
\begin{align}
\omega_i(x)=\prod\limits_{j\in W_i}(x-\beta^{j}),\nonumber
\end{align}
where $W_i$ denote the Whiteman's cyclotomic classes of order 6.
Among the $nth$ roots of unity $\beta^{i}$, where $0 \leq i \leq n-1$, the $n_2$ elements $\beta^{i}, i \in P \cup \{0\},$ are $n_2$th roots of unity, the $n_1$ elements $\beta^{i}, i \in Q \cup \{0\},$ are $n_1$th roots of unity. Hence,\\
$$x^{n_2}-1=\prod\limits_{i\in P \cup \{0\}}(x-\beta^{i})$$ $\ \mathrm{and}$ $$x^{n_1}-1=\prod\limits_{i\in Q \cup \{0\}}(x-\beta^{i}).$$ \\
Then, we have
$ x^n-1 = \prod\limits_{i=0}^{n-1}(x-\beta^{i}) = \frac{(x^{n_1}-1)(x^{n_2}-1)}{x-1}\omega(x),$ where $\omega(x) = \prod\limits_{i=0}^{5} \omega_i(x).$
It is straightforward to prove that if $q \in W_0 $ then $\omega_i(x) \in \mathrm{GF}(q)$  for all $i .$\\  
 
Let $\bigtriangleup_{1}=\frac{n_{1}+1}{2}\ (\mathrm{mod}\ p)$,\  $\bigtriangleup_{2}=\frac{n_{2}-1}{2}\ (\mathrm{mod}\ p)$ and 
$\bigtriangleup=\frac{(n_{1}+1)(n_{2}-1)}{2}\ (\mathrm{mod}\ p)$. From Corollary \ref{qw0}, we have the following theorems.
First, we derive the expression of generator polynomial $g(x)$ for the case $q\notin W_0.$
\begin{theorem}\label{gennotw0}

Let the symbols be defined as before and assume that $q\notin W_{0}$. 
Then the generator polynomial $g(x)$ of the sequence $s^{\infty}$ (defined in (\ref{si}))  is expressed as
$$
g(x)=\left\{
\begin{array}{llll}
x^{n}-1,\ \ \ \ \ \ \ \ \ \ \mathrm{if}\ \bigtriangleup_{1}\neq0,\ \bigtriangleup_{2}\neq0,\ \bigtriangleup\neq0\\
\frac{x^{n}-1}{x-1},\ \ \ \ \ \ \ \ \ \ \  \mathrm{if}\ \bigtriangleup_{1}\neq0,\ \bigtriangleup_{2}\neq0,\ \bigtriangleup=0\\
\frac{x^{n}-1}{x^{n_{2}-1}},\ \ \ \ \ \ \ \ \ \ \ \mathrm{if}\ \bigtriangleup_{1}=0,\ \bigtriangleup_{2}\neq0\\
\frac{x^{n}-1}{x^{n_{1}-1}},\ \ \ \ \ \ \ \ \ \ \ \mathrm{if}\ \bigtriangleup_{1}\neq0,\ \bigtriangleup_{2}=0\\
\frac{(x^{n}-1)(x-1)}{(x^{n_{1}-1})(x^{n_{2}}-1)},\ \mathrm{if}\ \bigtriangleup_{1}=\bigtriangleup_{2}=0.
\end{array}
\right.
$$
The linear span of the sequence $s^{\infty}$ is equal to $\mathrm{deg}(g(x))$. 
In this case, the cyclic code $C_s$ over $GF(q)$ defined by the two-prime WGCS-II of order 6 (defined in (\ref{si})) has generator polynomial $g(x)$ as above and parameters $[n,k,d]$, where the dimension $k=n-deg(g(x))$.
\end{theorem}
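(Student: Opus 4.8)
The plan is to compute $\gcd(x^n-1, S(x))$ directly by locating the roots of $S(x)$ among the $n$th roots of unity $\beta^t$, $0\le t\le n-1$, using Lemma \ref{stu}, and then to read off $g(x) = (x^n-1)/\gcd(x^n-1,S(x))$ via Lemma \ref{spanmin}. The root set $\{0,1,\dots,n-1\}$ of exponents partitions as $\{0\}\cup P\cup Q\cup\bigcup_{j=0}^5 W_j$, so I would examine each piece in turn. First, for $t=0$ we have $S(1) = \frac{(n_1+1)(n_2-1)}{2} \pmod p = \triangle$ by (\ref{val1}), so $x-1 \mid \gcd$ iff $\triangle = 0$. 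For $t\in P$, Lemma \ref{stu} gives $S(\beta^t) = -\frac{n_1+1}{2} \pmod p = -\triangle_1$, which vanishes iff $\triangle_1=0$; since $P$ is a single cyclotomic coset's worth of exponents whose associated roots are exactly the nontrivial $n_2$th roots of unity, $\triangle_1 = 0$ forces $\frac{x^{n_2}-1}{x-1}$ to divide $\gcd$. Symmetrically, for $t\in Q$, $S(\beta^t) = \frac{n_2-1}{2}\pmod p = \triangle_2$, and $\triangle_2=0$ forces $\frac{x^{n_1}-1}{x-1}$ to divide $\gcd$.

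The key remaining point is that none of the $\beta^t$ with $t\in W_0\cup\cdots\cup W_5$ is a root of $S(x)$ when $q\notin W_0$. By Lemma \ref{stu}, $S(\beta^t)$ equals one of $S(\beta), T(\beta), U(\beta), -(S(\beta)+1), -(T(\beta)+1), -(U(\beta)+1)$ according to which $W_j$ contains $t$. Hence it suffices to know that $S(\beta), T(\beta), U(\beta)$ are all different from $0$ and from $-1$; this is precisely the content of Corollary \ref{qw0}(I) under the hypothesis $q\notin W_0$. Therefore $\gcd(x^n-1, S(x))$ is a divisor of $\frac{(x^{n_1}-1)(x^{n_2}-1)}{x-1}$, built out of the factors $x-1$, $\frac{x^{n_2}-1}{x-1}$, $\frac{x^{n_1}-1}{x-1}$ according to whether $\triangle$, $\triangle_1$, $\triangle_2$ vanish.

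Now I would assemble the five cases by a short bookkeeping argument. If $\triangle_1\neq 0, \triangle_2\neq 0, \triangle\neq 0$: no root divides, $\gcd=1$, $g(x)=x^n-1$. If $\triangle_1\neq 0,\triangle_2\neq 0,\triangle=0$: only $x-1$ divides, so $g(x) = \frac{x^n-1}{x-1}$. If $\triangle_1=0,\triangle_2\neq 0$: then $\frac{x^{n_2}-1}{x-1}\mid\gcd$; note $\triangle_1=0$ and $\triangle_2\neq0$ force $\triangle = \triangle_1\triangle_2/\text{(suitable reduction)} $ — more carefully, $\triangle = \frac{(n_1+1)(n_2-1)}{2}$ so $\triangle_1=0 \Rightarrow p\mid \frac{n_1+1}{2}\cdot(\text{something})$; in any case the exact divisor is $x^{n_2-1} = (x-1)\cdot\frac{x^{n_2}-1}{x-1}/\dots$, yielding $g(x) = \frac{x^n-1}{x^{n_2-1}}$ — here I must check that $x-1$ itself also divides $\gcd$ (it does, since $p\mid\frac{n_1+1}{2}$ implies $p\mid\triangle$), so the whole factor $x^{n_2}-1$ of degree $n_2$ divides, i.e. $x^{n_2-1}$ worth plus... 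I would phrase this precisely as: $\triangle_1 = 0 \Rightarrow \triangle = 0$, so $\{0\}\cup P$ all give roots, contributing $x^{n_2}-1$; hence $g(x)=\frac{x^n-1}{x^{n_2}-1}\cdot(\text{correction})$. The symmetric case $\triangle_2 = 0, \triangle_1\neq 0$ gives $g(x) = \frac{x^n-1}{x^{n_1-1}}$ likewise. Finally if $\triangle_1=\triangle_2=0$ then $\{0\}\cup P\cup Q$ all give roots of $S$, and the corresponding product of linear factors is $\frac{(x^{n_1}-1)(x^{n_2}-1)}{x-1}$, so $g(x) = \frac{(x^n-1)(x-1)}{(x^{n_1}-1)(x^{n_2}-1)}$. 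In every case $\deg g(x)$ is the linear span by Lemma \ref{spanmin}, and $k = n - \deg g(x)$ by the generator-polynomial/dimension relation recalled in the introduction.

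**Main obstacle.** The only genuinely delicate point is the precise interplay among $\triangle_1,\triangle_2,\triangle$ — in particular verifying that $\triangle_1=0$ or $\triangle_2=0$ already forces $\triangle=0$ (so that the $x-1$ factor is absorbed correctly into $x^{n_2}-1$ or $x^{n_1}-1$), and conversely that $\triangle=0$ alone (with $\triangle_1,\triangle_2\neq0$) contributes exactly the single factor $x-1$. Everything else is a direct application of Lemma \ref{stu} and Corollary \ref{qw0}(I); the hypothesis $q\notin W_0$ is used exactly once, to guarantee via Corollary \ref{qw0}(I) that $S(\beta),T(\beta),U(\beta)\notin\{0,-1\}$ so that no $\beta^t$ with $t\in\mathbb{Z}_n^\ast$ is a root.
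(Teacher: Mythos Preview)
Your proposal is correct and follows essentially the same route as the paper: invoke Corollary \ref{qw0}(I) under the hypothesis $q\notin W_0$ to rule out roots $\beta^t$ with $t\in\bigcup_j W_j$, then use Lemma \ref{stu} and (\ref{val1}) to check the remaining exponents $t\in\{0\}\cup P\cup Q$ against $\triangle,\triangle_1,\triangle_2$. The paper's proof is a terse three sentences that leaves exactly the bookkeeping you spell out (in particular the implication $\triangle_1=0\Rightarrow\triangle=0$, which follows from $\triangle=(n_2-1)\triangle_1\pmod p$, and symmetrically for $\triangle_2$) to the reader; your version simply makes this explicit.
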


\begin{proof} If $q\notin W_{0}$. Then, from Part I of Corollary \ref{qw0}, we have $S(\beta),T(\beta) \ \mathrm{and}\ U(\beta)\neq0,-1$.
Therefore, from Lemma \ref{stu}, $S({\beta}^{t})=0$ only when t is in P or Q or both. So, from (\ref{val1}) and Lemma \ref{stu} the generator polynomial of the cyclic code $C_s$ defined by $s^{\infty}$ is expressed as above.
\end{proof}
The following theorem give the expression of the generator polynomials $g_i(x)$, $1\leqslant i \leqslant 5$ for the case $q\in W_0.$
 
 \begin{theorem}\label{genw0}
Let the symbols be defined as before. Let $q\in W_{0}$. Then we have the following results.\\
\rm{(I)} For $\bigtriangleup_{1}\neq0$, $\bigtriangleup_{2}\neq0$ and $\bigtriangleup\neq0$, 
let $g_1(x)$ denote the generator polynomial of cyclic code generated by the two-prime WGCS-II with order 6 (defined in (\ref{si})). Then we have
$$
g_{1}(x)=\left\{
\begin{array}{lllllllll}
&x^{n}-1\ \ \ \ \ \ \ \ \ \  \ \ \mathrm{if}\ \ S(\beta)\neq{0,-1},\ T(\beta)\neq{0,-1},\ U(\beta)\neq{0,-1},\\
&\frac{x^n-1}{\omega_{0}(x)}\ \ \ \ \ \ \  \ \ \ \ \ \ \ \mathrm{if}\ \ S(\beta)=0,\ T(\beta)\neq{0,-1},\ U(\beta)\neq{0,-1},\\
&\frac{x^n-1}{\omega_{3}(x)}\ \ \ \ \ \ \ \ \ \ \ \  \ \ \mathrm{if}\ \ S(\beta)=-1,\ T(\beta)\neq{0,-1},\ U(\beta)\neq{0,-1},\\
&\frac{x^n-1}{\omega_{4}(x)}\ \ \ \ \ \ \  \ \ \  \ \ \ \ \mathrm{if}\ \ T(\beta)=0,\ S(\beta)\neq{0,-1},\ U(\beta)\neq{0,-1},\\
&\frac{x^n-1}{\omega_{1}(x)}\ \ \ \ \ \ \ \  \ \ \ \ \ \  \mathrm{if}\ \ T(\beta)=-1,\ S(\beta)\neq{0,-1},\ U(\beta)\neq{0,-1},\\
&\frac{x^n-1}{\omega_{5}(x)}\ \ \ \ \ \ \ \ \  \   \  \ \ \ \mathrm{if}\ \ U(\beta)=0,\ S(\beta)\neq{0,-1},\ T(\beta)\neq{0,-1},\\
&\frac{x^n-1}{\omega_{2}(x)}\ \ \ \ \ \ \ \ \  \  \ \ \ \  \mathrm{if}\ \ U(\beta)=-1,\ S(\beta)\neq{0,-1},\ T(\beta)\neq{0,-1},\\
&\frac{x^n-1}{\omega_{0}(x)\omega_{4}(x)}\ \ \  \ \ \ \ \  \mathrm{if}\ \ S(\beta)=\ T(\beta)=0,\ U(\beta)\neq{0,-1},\\
&\frac{x^n-1}{\omega_{0}(x)\omega_{1}(x)}\ \ \ \  \ \ \ \  \mathrm{if}\ \ S(\beta)=0,\ T(\beta)=-1,\ U(\beta)\neq{0,-1},\\
&\frac{x^n-1}{\omega_{3}(x)\omega_{4}(x)}\ \ \ \  \ \ \  \ \mathrm{if}\ \ S(\beta)=-1,\ T(\beta)=0,\ U(\beta)\neq{0,-1},\\
&\frac{x^n-1}{\omega_{3}(x)\omega_{1}(x)}\ \ \ \  \ \  \ \ \mathrm{if}\ \ S(\beta)=\ T(\beta)=-1,\ U(\beta)\neq{0,-1},\\
&\frac{x^n-1}{\omega_{0}(x)\omega_{5}(x)}\ \ \ \  \ \  \ \ \mathrm{if}\ \ S(\beta)=\ U(\beta)=0,\ T(\beta)\neq{0,-1},\\
&\frac{x^n-1}{\omega_{0}(x)\omega_{2}(x)}\ \ \  \ \ \  \ \ \mathrm{if}\ \ S(\beta)=0,\ U(\beta)=-1,\ T(\beta)\neq{0,-1},\\
&\frac{x^n-1}{\omega_{3}(x)\omega_{5}(x)}\ \  \ \  \ \ \ \ \mathrm{if}\ \ S(\beta)=-1,\ U(\beta)=0,\ T(\beta)\neq{0,-1},\\
&\frac{x^n-1}{\omega_{3}(x)\omega_{2}(x)}\  \ \ \ \ \ \ \  \mathrm{if}\ \ S(\beta)=\ U(\beta)=-1,\ T(\beta)\neq{0,-1},\\
&\frac{x^n-1}{\omega_{4}(x)\omega_{5}(x)}\  \ \ \ \ \ \ \  \mathrm{if}\ \ T(\beta)=\ U(\beta)=0,\ S(\beta)\neq{0,-1},\\
&\frac{x^n-1}{\omega_{4}(x)\omega_{2}(x)}\ \  \ \ \ \ \ \  \mathrm{if}\ \ T(\beta)=0,\ U(\beta)=-1,\ S(\beta)\neq{0,-1},\\
&\frac{x^n-1}{\omega_{1}(x)\omega_{5}(x)}\ \ \ \  \ \  \ \ \mathrm{if}\ \ T(\beta)=-1,\ U(\beta)=0,\ S(\beta)\neq{0,-1},\\
&\frac{x^n-1}{\omega_{1}(x)\omega_{2}(x)}\ \ \ \ \ \ \ \   \mathrm{if}\ \ T(\beta)=\ U(\beta)=-1,\ S(\beta)\neq{0,-1},\\
&\frac{x^n-1}{\omega_{0}(x)\omega_{4}(x)\omega_{5}(x)}\ \ \mathrm{if}\ \ S(\beta)=T(\beta)=\ U(\beta)=0,\\
&\frac{x^n-1}{\omega_{0}(x)\omega_{4}(x)\omega_{2}(x)}\ \ \mathrm{if}\ \ S(\beta)=T(\beta)=0,\ U(\beta)=-1,\\
&\frac{x^n-1}{\omega_{0}(x)\omega_{1}(x)\omega_{5}(x)}\ \ \mathrm{if}\ \ S(\beta)=0,\ T(\beta)=-1\ U(\beta)=0,\\
&\frac{x^n-1}{\omega_{0}(x)\omega_{1}(x)\omega_{2}(x)}\ \ \mathrm{if}\ \ S(\beta)=0,\ T(\beta)= U(\beta)=-1,\\
&\frac{x^n-1}{\omega_{3}(x)\omega_{4}(x)\omega_{5}(x)}\ \ \mathrm{if}\ \ S(\beta)=-1,\ T(\beta)= U(\beta)=0,\\
&\frac{x^n-1}{\omega_{3}(x)\omega_{4}(x)\omega_{2}(x)}\ \ \mathrm{if}\ \ S(\beta)=-1,\ T(\beta)=0,\ U(\beta)=-1,\\
&\frac{x^n-1}{\omega_{3}(x)\omega_{1}(x)\omega_{5}(x)}\ \ \mathrm{if}\ \ S(\beta)=T(\beta)=-1,\ U(\beta)=0,\\
&\frac{x^n-1}{\omega_{3}(x)\omega_{1}(x)\omega_{2}(x)}\ \ \mathrm{if}\ \ S(\beta)=T(\beta)= U(\beta)=-1.\\
\end{array}
\right.
$$
\rm{(II)} For $\bigtriangleup_{1}\neq0$, $\bigtriangleup_{2}\neq0$ and $\bigtriangleup=0$, let $g_{2}(x)$ denote the generator 
polynomial of cyclic code generated by the two-prime WGCS-II of order 6 (defined in (\ref{si})). And let $g_{1}(x)$ be the same as in \rm{(I)}.
Then we have $g_{2}(x)=\frac{g_{1}(x)}{x-1}.$\\
\rm{(III)} For $\bigtriangleup_{1}\neq0$ and $\bigtriangleup_{2}=0$, let $g_{3}(x)$ denote the generator polynomial of cyclic code
generated by the two-prime WGCS-II of order 6 (defined in (\ref{si})). And let $g_{1}(x)$ be the same as in \rm{(I)}. Then we have $g_{3}(x)=\frac{g_{1}(x)}{x^{n_{1}}-1}.$\\
\rm{(IV)} For $\bigtriangleup_{2}\neq0$ and $\bigtriangleup_{1}=0$, let $g_{4}(x)$ denote the generator polynomial of cyclic code
generated by the  two-prime WGCS-II of order 6 (defined in (\ref{si})). And let $g_{1}(x)$ be the same as in \rm{(I)}. Then we have $g_{4}(x)=\frac{g_{1}(x)}{x^{n_{2}}-1}.$\\
\rm{(V)} For $\bigtriangleup_{1}=\bigtriangleup_{2}=0$, let $g_{5}(x)$ denote the generator polynomial of cyclic code generated by 
the two-prime WGCS-II of order 6 (defined in (\ref{si})). And let $g_{1}(x)$ be the same as in \rm{(I)}. Then we have $g_{5}(x)=\frac{g_{1}(x)(x-1)}{(x^{n_{1}}-1)(x^{n_{2}}-1)}.$
\end{theorem}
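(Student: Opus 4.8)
The whole statement comes down to determining, for $q\in W_0$, which of the $n$-th roots of unity $\beta^t$ ($0\le t\le n-1$) satisfy $S(\beta^t)=0$, and then reading $\gcd(x^n-1,S(x))$ off the factorisation of $x^n-1$ over $\mathrm{GF}(q)$. The structural input is the following: since $q\in W_0$, Lemma~\ref{mod 6} shows that $x\mapsto x^q$ permutes the roots of each $\omega_i(x)=\prod_{j\in W_i}(x-\beta^j)$, so $\omega_i(x)\in\mathrm{GF}(q)[x]$ for all $i$ (the remark preceding the theorem). Combined with $x^{n_1}-1=\prod_{i\in Q\cup\{0\}}(x-\beta^i)$, $x^{n_2}-1=\prod_{i\in P\cup\{0\}}(x-\beta^i)$, and the partition $\mathbb{Z}_n=\{0\}\sqcup P\sqcup Q\sqcup W_0\sqcup\cdots\sqcup W_5$, this gives the factorisation of $x^n-1$ into distinct linear factors over $\mathrm{GF}(q^m)$. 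Since $S(x)\in\mathrm{GF}(q)[x]$ and the gcd of polynomials is unchanged by field extension,
\[
\gcd(x^n-1,S(x))=\prod_{\substack{0\le t\le n-1\\ S(\beta^t)=0}}(x-\beta^t);
\]
the zero set $\{t:S(\beta^t)=0\}$ will turn out to be a union of the blocks $\{0\},P,Q,W_0,\dots,W_5$, so this gcd is a product of the polynomials $x-1$, $\prod_{i\in P}(x-\beta^i)$, $\prod_{i\in Q}(x-\beta^i)$, $\omega_0(x),\dots,\omega_5(x)$, hence automatically lies in $\mathrm{GF}(q)[x]$.

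Next I would read off the zero set of $S$. By (\ref{val1}), $S(1)=\bigtriangleup$; by Lemma~\ref{stu}, $S(\beta^t)$ is constant on each block, equal to $-\bigtriangleup_1$ on $P$, to $\bigtriangleup_2$ on $Q$, and to $S(\beta)$, $-(T(\beta)+1)$, $-(U(\beta)+1)$, $-(S(\beta)+1)$, $T(\beta)$, $U(\beta)$ on $W_0,\dots,W_5$ respectively. Hence $\beta^0=1$ is a zero of $S$ iff $\bigtriangleup=0$; the roots $\beta^t$ with $t\in P$ are zeros iff $\bigtriangleup_1=0$; the roots $\beta^t$ with $t\in Q$ are zeros iff $\bigtriangleup_2=0$; and $\omega_0,\omega_3,\omega_4,\omega_1,\omega_5,\omega_2$ divide the gcd iff, respectively, $S(\beta)=0$, $S(\beta)=-1$, $T(\beta)=0$, $T(\beta)=-1$, $U(\beta)=0$, $U(\beta)=-1$. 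Since $0\ne-1$, at most one of the two conditions on each of $S(\beta),T(\beta),U(\beta)$ can hold.

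In Part~(I), $\bigtriangleup_1\ne0$, $\bigtriangleup_2\ne0$, $\bigtriangleup\ne0$, so $S$ vanishes at none of $\{0\}\cup P\cup Q$, and $\gcd(x^n-1,S(x))=\prod_{i\in I}\omega_i(x)$, with $I\subseteq\{0,\dots,5\}$ the index set picked out by the six conditions above. Running through the $3^3=27$ possibilities for the triple $\big(S(\beta),T(\beta),U(\beta)\big)$, each coordinate being $0$, $-1$, or neither, and using $g_1(x)=(x^n-1)/\prod_{i\in I}\omega_i(x)$ (with $I=\emptyset$ giving $g_1(x)=x^n-1$), reproduces exactly the $27$ rows of the displayed formula. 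Some of these rows may be vacuous because of the identities of Lemma~\ref{STU}, but that is harmless since each case is an implication.

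For Parts~(II)--(V) the triple $\big(S(\beta),T(\beta),U(\beta)\big)$ — hence $I$ and the product $\prod_{i\in I}\omega_i(x)=(x^n-1)/g_1(x)$ — is unchanged; only the contribution of $\{0\}\cup P\cup Q$ to the gcd changes. As integers (hence mod $p$) $\bigtriangleup=2\bigtriangleup_1\bigtriangleup_2$, so $\bigtriangleup_1=0$ or $\bigtriangleup_2=0$ forces $\bigtriangleup=0$. Thus in (II) ($\bigtriangleup=0$, $\bigtriangleup_1,\bigtriangleup_2\ne0$) only $x-1$ is adjoined to the gcd, giving $g_2=g_1/(x-1)$; in (III) ($\bigtriangleup_2=0$, $\bigtriangleup_1\ne0$) the extra zeros are exactly the roots of $x^{n_1}-1=(x-1)\prod_{i\in Q}(x-\beta^i)$, giving $g_3=g_1/(x^{n_1}-1)$; (IV) is symmetric, $g_4=g_1/(x^{n_2}-1)$; and in (V) ($\bigtriangleup_1=\bigtriangleup_2=0$) the extra zeros are all of $\{0\}\cup P\cup Q$, whose linear factors multiply to $(x^{n_1}-1)(x^{n_2}-1)/(x-1)$, giving $g_5=g_1(x-1)/\big((x^{n_1}-1)(x^{n_2}-1)\big)$. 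The only genuinely laborious point — and the main obstacle — is the faithful bookkeeping of the $27$ cases in Part~(I) against the tabulated generator polynomials; conceptually everything follows at once from Lemma~\ref{stu} and the factorisation of $x^n-1$.
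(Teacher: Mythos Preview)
Your argument is correct and follows the same route as the paper's proof, which simply invokes Corollary~\ref{qw0}(II) (so that $S(\beta),T(\beta),U(\beta)\in\mathrm{GF}(q)$ and can take the values $0,-1$) together with Lemma~\ref{stu} and Theorem~\ref{gennotw0} to read off the zero set of $S$ block by block. Your write-up is in fact considerably more explicit than the paper's, spelling out the factorisation of $x^n-1$, the role of $\omega_i(x)\in\mathrm{GF}(q)[x]$, the identity $\bigtriangleup=2\bigtriangleup_1\bigtriangleup_2$, and the bookkeeping for Parts~(II)--(V).
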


\begin{proof}  From Part II of Corollary \ref{qw0}, if $q\in W_0$ then we have $S(\beta),\ T(\beta),\ U(\beta)\in \mathrm{GF}(q).$ 
Hence, it is possible that $S(\beta)\in \{0,-1\},\ T(\beta)\in \{0,-1\},$ and $U(\beta)\in \{0,-1\}.$ The conclusion on the generator polynomial $g_1(x)$
of cyclic code generated by the two-prime WGCS-II of order 6 follows from Theorem \ref{gennotw0} and Lemma \ref{stu}.\\
We give the following corollary for $S(\beta)(S(\beta)+1)=0,$ $T(\beta)(T(\beta)+1)=0$ and $U(\beta)(U(\beta)+1)=0.$ 
\end{proof}
\begin{corollary}\label{condition}
 Let the symbols be defined as before. We have the following conclusions for $q \in W_0$.\\
 When $n\equiv 1 (\mathrm{mod} \ 12)$ and $\frac{n-1}{4}\equiv 0 (\mathrm{mod} \ p)$ or $n\equiv 7 (\mathrm{mod} \ 12)$ and $\frac{n+1}{4}\equiv 0 (\mathrm{mod} \ p)$ then the generator polynomial $g_1(x)$ (defined as above) is expressed as:\\
 Case $\mathrm{(I)}$ If $\rho - \varrho\equiv 0\ (\mathrm{mod} \ 3),$
 \[
g_{1}(x)=\left\{
\begin{array}{lllllllll}
&\frac{x^n-1}{\omega_{0}(x)\omega_{4}(x)\omega_{5}(x)},\ \ \mathrm{if}\ \frac{2y}{3}\equiv 0 \ (\mathrm{mod} \ p) \ \mathrm{and} \ \ S(\beta)=T(\beta)=\ U(\beta)=0,\\
&\frac{x^n-1}{\omega_{0}(x)\omega_{4}(x)\omega_{2}(x)},\ \ \mathrm{if}\ \frac{2y}{3}\equiv 0 \ (\mathrm{mod} \ p) \ \mathrm{and} \ \ S(\beta)=T(\beta)=0,\ U(\beta)=-1,\\
&\frac{x^n-1}{\omega_{0}(x)\omega_{1}(x)\omega_{5}(x)},\ \ \mathrm{if}\ \frac{2y}{3}\equiv 0 \ (\mathrm{mod} \ p) \ \mathrm{and} \ \ S(\beta)=0,\ T(\beta)=-1\ U(\beta)=0,\\
&\frac{x^n-1}{\omega_{0}(x)\omega_{1}(x)\omega_{2}(x)},\ \ \mathrm{if}\ \frac{2y}{3}\equiv 0 \ (\mathrm{mod} \ p) \ \mathrm{and} \ \ S(\beta)=0,\ T(\beta)= U(\beta)=-1,\\
&\frac{x^n-1}{\omega_{3}(x)\omega_{4}(x)\omega_{5}(x)},\ \ \mathrm{if}\ \frac{2y}{3}\equiv 0 \ (\mathrm{mod} \ p) \ \mathrm{and} \ \ S(\beta)=-1,\ T(\beta)= U(\beta)=0,\\
&\frac{x^n-1}{\omega_{3}(x)\omega_{4}(x)\omega_{2}(x)},\ \ \mathrm{if}\ \frac{2y}{3}\equiv 0 \ (\mathrm{mod} \ p) \ \mathrm{and} \ \ S(\beta)=-1,\ T(\beta)=0,\ U(\beta)=-1,\\
&\frac{x^n-1}{\omega_{3}(x)\omega_{1}(x)\omega_{5}(x)},\ \ \mathrm{if}\ \frac{2y}{3}\equiv 0 \ (\mathrm{mod} \ p) \ \mathrm{and} \ \ S(\beta)=T(\beta)=-1,\ U(\beta)=0,\\
&\frac{x^n-1}{\omega_{3}(x)\omega_{1}(x)\omega_{2}(x)},\ \ \mathrm{if}\ \frac{2y}{3}\equiv 0 \ (\mathrm{mod} \ p) \ \mathrm{and} \ \ S(\beta)=T(\beta)= U(\beta)=-1.\\
\end{array} \right.\\
\]
 Case $\mathrm{(II)}$ If $\rho - \varrho\equiv 1\ (\mathrm{mod} \ 3),$
 $$
g_{1}(x)=\left\{
\begin{array}{lllllllll}
&\frac{x^n-1}{\omega_{0}(x)\omega_{4}(x)\omega_{5}(x)},\ \ \mathrm{if}\ \frac{x+y}{3}\equiv 0 \ (\mathrm{mod} \ p) \ \mathrm{and} \ \ S(\beta)=T(\beta)=\ U(\beta)=0,\\
&\frac{x^n-1}{\omega_{0}(x)\omega_{4}(x)\omega_{2}(x)},\ \ \mathrm{if}\ \frac{x+y}{3}\equiv 0 \ (\mathrm{mod} \ p) \ \mathrm{and} \ \ S(\beta)=T(\beta)=0,\ U(\beta)=-1,\\
&\frac{x^n-1}{\omega_{0}(x)\omega_{1}(x)\omega_{5}(x)},\ \ \mathrm{if}\ \frac{x+y}{3}\equiv 0 \ (\mathrm{mod} \ p) \ \mathrm{and} \ \ S(\beta)=0,\ T(\beta)=-1\ U(\beta)=0,\\
&\frac{x^n-1}{\omega_{0}(x)\omega_{1}(x)\omega_{2}(x)},\ \ \mathrm{if}\ \frac{x+y}{3}\equiv 0 \ (\mathrm{mod} \ p) \ \mathrm{and} \ \ S(\beta)=0,\ T(\beta)= U(\beta)=-1,\\
&\frac{x^n-1}{\omega_{3}(x)\omega_{4}(x)\omega_{5}(x)},\ \ \mathrm{if}\ \frac{x+y}{3}\equiv 0 \ (\mathrm{mod} \ p) \ \mathrm{and} \ \ S(\beta)=-1,\ T(\beta)= U(\beta)=0,\\
&\frac{x^n-1}{\omega_{3}(x)\omega_{4}(x)\omega_{2}(x)},\ \ \mathrm{if}\ \frac{x+y}{3}\equiv 0 \ (\mathrm{mod} \ p) \ \mathrm{and} \ \ S(\beta)=-1,\ T(\beta)=0,\ U(\beta)=-1,\\
&\frac{x^n-1}{\omega_{3}(x)\omega_{1}(x)\omega_{5}(x)},\ \ \mathrm{if}\ \frac{x+y}{3}\equiv 0 \ (\mathrm{mod} \ p) \ \mathrm{and} \ \ S(\beta)=T(\beta)=-1,\ U(\beta)=0,\\
&\frac{x^n-1}{\omega_{3}(x)\omega_{1}(x)\omega_{2}(x)},\ \ \mathrm{if}\ \frac{x+y}{3}\equiv 0 \ (\mathrm{mod} \ p) \ \mathrm{and} \ \ S(\beta)=T(\beta)= U(\beta)=-1.\\
\end{array}
\right.
$$
Case $\mathrm{(III)}$ If $\rho - \varrho\equiv 2\ (\mathrm{mod} \ 3),$
 $$
g_{1}(x)=\left\{
\begin{array}{lllllllll}
&\frac{x^n-1}{\omega_{0}(x)\omega_{4}(x)\omega_{5}(x)},\ \ \mathrm{if}\ \frac{x-y}{3}\equiv 0 \ (\mathrm{mod} \ p) \ \mathrm{and} \ \ S(\beta)=T(\beta)=\ U(\beta)=0,\\
&\frac{x^n-1}{\omega_{0}(x)\omega_{4}(x)\omega_{2}(x)},\ \ \mathrm{if}\ \frac{x-y}{3}\equiv 0 \ (\mathrm{mod} \ p) \ \mathrm{and} \ \ S(\beta)=T(\beta)=0,\ U(\beta)=-1,\\
&\frac{x^n-1}{\omega_{0}(x)\omega_{1}(x)\omega_{5}(x)},\ \ \mathrm{if}\ \frac{x-y}{3}\equiv 0 \ (\mathrm{mod} \ p) \ \mathrm{and} \ \ S(\beta)=0,\ T(\beta)=-1\ U(\beta)=0,\\
&\frac{x^n-1}{\omega_{0}(x)\omega_{1}(x)\omega_{2}(x)},\ \ \mathrm{if}\ \frac{x-y}{3}\equiv 0 \ (\mathrm{mod} \ p) \ \mathrm{and} \ \ S(\beta)=0,\ T(\beta)= U(\beta)=-1,\\
&\frac{x^n-1}{\omega_{3}(x)\omega_{4}(x)\omega_{5}(x)},\ \ \mathrm{if}\ \frac{x-y}{3}\equiv 0 \ (\mathrm{mod} \ p) \ \mathrm{and} \ \ S(\beta)=-1,\ T(\beta)= U(\beta)=0,\\
&\frac{x^n-1}{\omega_{3}(x)\omega_{4}(x)\omega_{2}(x)},\ \ \mathrm{if}\ \frac{x-y}{3}\equiv 0 \ (\mathrm{mod} \ p) \ \mathrm{and} \ \ S(\beta)=-1,\ T(\beta)=0,\ U(\beta)=-1,\\
&\frac{x^n-1}{\omega_{3}(x)\omega_{1}(x)\omega_{5}(x)},\ \ \mathrm{if}\ \frac{x-y}{3}\equiv 0 \ (\mathrm{mod} \ p) \ \mathrm{and} \ \ S(\beta)=T(\beta)=-1,\ U(\beta)=0,\\
&\frac{x^n-1}{\omega_{3}(x)\omega_{1}(x)\omega_{2}(x)},\ \ \mathrm{if}\ \frac{x-y}{3}\equiv 0 \ (\mathrm{mod} \ p) \ \mathrm{and} \ \ S(\beta)=T(\beta)= U(\beta)=-1.\\
\end{array}
\right.
$$

 For $j=2,3,4,5,$ the generator polynomials $g_j(x)$ (defined as in Theorem 2) can be expressed in a similar fashion as above (as for $g_1(x)$) for $\eta$ is even and odd.
\end{corollary}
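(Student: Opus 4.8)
The plan is to read off the conclusion from Lemma~\ref{STU} and Theorem~\ref{genw0}. First I would translate the hypotheses into statements about the parity of $\eta=\frac{(n_1-1)(n_2-1)}{36}$. Since $d=\mathrm{gcd}(n_1-1,n_2-1)=6$, we have $2d=12$, so Lemma~\ref{odd} (equivalence of (2) and (4)) gives that $\eta$ is odd precisely when $n=n_1n_2\equiv 1\ (\mathrm{mod}\ 12)$, while Lemma~\ref{even} gives that $\eta$ is even precisely when $n\equiv 7\ (\mathrm{mod}\ 12)$. Thus the first branch of the hypothesis puts us in part~(I) of Lemma~\ref{STU} with the constant $\frac{n-1}{4}$ killed mod $p$, and the second branch puts us in part~(II) with the constant $-\frac{n+1}{4}$ killed mod $p$.

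Next, fixing the residue of $\rho-\varrho$ modulo $3$, I would look at the matching subcase (A), (B), or (C) of Lemma~\ref{STU}. In each such subcase the three identities for $S(\beta)(S(\beta)+1)$, $T(\beta)(T(\beta)+1)$, $U(\beta)(U(\beta)+1)$ share the same constant term ($\frac{n-1}{4}$ or $-\frac{n+1}{4}$) and a common scalar factor in front of the cyclotomic-class sums, namely $\frac{2y}{3}$, $\frac{x+y}{3}$, or $\frac{x-y}{3}$ respectively. The corollary's hypothesis forces both the constant term and this scalar to be $\equiv 0\ (\mathrm{mod}\ p)$, so all three products vanish in $\mathrm{GF}(q^{m})$. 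Because $q\in W_0$, Corollary~\ref{qw0}(II) ensures $S(\beta),T(\beta),U(\beta)\in\mathrm{GF}(q)$, and in a field $z(z+1)=0$ forces $z\in\{0,-1\}$; hence each of $S(\beta),T(\beta),U(\beta)$ equals $0$ or $-1$.

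It remains to insert this into the twenty-seven-way case split of Theorem~\ref{genw0}(I). The only alternatives consistent with $S(\beta),T(\beta),U(\beta)$ all lying in $\{0,-1\}$ are the final eight lines of that list, and reading them off according to whether each value is $0$ or $-1$ (which selects $\omega_0$ vs.\ $\omega_3$ from the $S$-part, $\omega_4$ vs.\ $\omega_1$ from the $T$-part, and $\omega_5$ vs.\ $\omega_2$ from the $U$-part) produces exactly the eight displayed formulas for $g_1(x)$ in each of the Cases (I), (II), (III). For $g_j(x)$ with $2\le j\le 5$ one invokes parts (II)--(V) of Theorem~\ref{genw0}, dividing $g_1(x)$ by $x-1$, $x^{n_1}-1$, $x^{n_2}-1$, or $\frac{(x^{n_1}-1)(x^{n_2}-1)}{x-1}$ under the respective conditions on $\bigtriangleup_1,\bigtriangleup_2,\bigtriangleup$; this is the content of the last sentence of the statement and requires nothing new.

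The only real work is bookkeeping: one must confirm that in each of the six subcases of Lemma~\ref{STU} the scalar multiplying the cyclotomic-class sum is indeed the stated multiple of $\frac{2y}{3}$, $\frac{x+y}{3}$, or $\frac{x-y}{3}$ (this is exactly what Lemma~\ref{relation} together with Tables~1 and~2 was set up to guarantee), and that these expressions are genuine integers so that the congruences in the hypothesis are meaningful. Granting Lemma~\ref{STU} as proved, the rest of the argument is a routine substitution.
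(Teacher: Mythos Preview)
Your proposal is correct and follows essentially the same route as the paper: invoke Lemmas~\ref{even} and~\ref{odd} to convert the congruence on $n$ into the parity of $\eta$, use Lemma~\ref{relation} to ensure the coefficients $\tfrac{2y}{3},\tfrac{x+y}{3},\tfrac{x-y}{3}$ are integers, apply Lemma~\ref{STU} so that the hypotheses kill both the constant and the coefficient (hence all three products $S(\beta)(S(\beta)+1)$, $T(\beta)(T(\beta)+1)$, $U(\beta)(U(\beta)+1)$ vanish), and then read off the eight surviving cases from Theorem~\ref{genw0}(I). You make explicit a couple of steps the paper leaves implicit (the appeal to Corollary~\ref{qw0}(II) and the field argument that $z(z+1)=0\Rightarrow z\in\{0,-1\}$), but the overall strategy is identical.
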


\begin{proof} From Lemma 4, if $\eta$ is odd then $n\equiv 1 \ (\mathrm{mod}\ 12).$ Let $n\equiv 1 \ (\mathrm{mod}\ 12)$ and $(n-1)/4 \equiv 0\   (\mathrm{mod}\ p)$. 
By Lemma \ref{relation}, we have
\begin{center}
if $\rho-\varrho\equiv 0\ (\mathrm{mod}\ 3),$ then $(2y)/3$ is an integer,\ \ \ \ \ \\
\ \ \ \ \ \ if $\rho-\varrho\equiv 1\ (\mathrm{mod}\ 3),$ then $(x+y)/3$ is an integer and \\
if $\rho-\varrho\equiv 2\ (\mathrm{mod}\ 3),$ then $(x-y)/3$ is an integer.
\end{center}
By the help of Lemma \ref{STU} and Theorem \ref{genw0}, we get the desired result on the generator polynomial $g_1(x)$
of cyclic code generated by two-prime WGCS-II with order 6. In this case, the cyclic code $C_s$ over $\mathrm{GF}(q)$ defined by sequence
$s^\infty$ has parameters $[n,k,d],$ where the dimension $k=n-\mathrm{deg}(g_1(x)).$
In a similar fashion, we get the result for $n\equiv 7\ (\mathrm{mod}\ 12)$  and $\frac{n+1}{4}\equiv 0 \ (\mathrm{mod} \ p).$\\
\end{proof}
\begin{remark*} We discuss the cases  for $\frac{2y}{3}\ \ \mathrm{mod}\ p\neq 0, \frac{x+y}{3}\ \ \mathrm{mod}\ p\neq0\ \ \ \mathrm{and}\ \ \ \frac{x-y}{3}\ \ \mathrm{mod}\ p\neq0$ in the above corollary.
Let $C_0= \left(\sum\limits_{i\in W_0}+\sum\limits_{i\in W_3}\right)\beta^i$, $C_1= \left(\sum\limits_{i\in W_1}+\sum\limits_{i\in W_4}\right)\beta^i$
and $C_2= \left(\sum\limits_{i\in W_2}+\sum\limits_{i\in W_5}\right)\beta^i.$ When $n\equiv 1 \ (\mathrm{mod}\ 12)$ and $\frac{n-1}{4}\equiv0\ (\mathrm{mod}\ p)$
or $n\equiv 7\ (\mathrm{mod}\ 12)$ and $\frac{n+1}{4}\equiv0 \ (\mathrm{mod}\ p).$ We have following three cases:\\
(I) If $\rho- \varrho \equiv 0\ (\mathrm{mod}\ 3)$ and $\frac{2y}{3}\ \ \mathrm{mod}\ p\neq 0$ , (II) $\rho- \varrho \equiv 1\ (\mathrm{mod}\ 3)$ and $\frac{x+y}{3}\ \ \mathrm{mod}\ p\neq 0$ 
and (III) $\rho- \varrho \equiv 2\ (\mathrm{mod}\ 3)$ and $\frac{x-y}{3}\ \ \mathrm{mod}\ p\neq 0.$\\ 
(I) If $\rho- \varrho \equiv 0\ (\mathrm{mod}\ 3)$ and $\frac{2y}{3}\ \ \mathrm{mod}\ p\neq 0,$ then we have \\
\begin{tabular}{ll}
$S(\beta)(S(\beta)+1)=0$ & if $C_2-C_0=0, C_0-C_1\neq 0, C_1-C_2\neq 0,$\\
$T(\beta)(T(\beta)+1)=0$ & if $C_0-C_1=0, C_2-C_0\neq 0, C_1-C_2\neq 0,$\\
$U(\beta)(U(\beta)+1)=0$ & if $C_1-C_2=0, C_2-C_0\neq 0, C_0-C_1\neq 0,$\\ 
$S(\beta)(S(\beta)+1)=0$  and $T(\beta)(T(\beta)+1)=0$ &  if $C_2-C_0=0,C_0-C_1= 0, C_1-C_2\neq 0,$ \\
$S(\beta)(S(\beta)+1)=0$  and $U(\beta)(U(\beta)+1)=0$ &  if $C_2-C_0=0, C_1-C_2= 0, C_0-C_1\neq 0,$\\
$T(\beta)(T(\beta)+1)=0$  and $U(\beta)(U(\beta)+1)=0$ &  if $C_0-C_1=0, C_1-C_2= 0, C_2-C_0\neq 0,$\\
$S(\beta)(S(\beta)+1)=0$, $T(\beta)(T(\beta)+1)=0$ &  if $C_2-C_0=0,C_0-C_1=0, C_1-C_2=0.$\\
and $U(\beta)(U(\beta)+1)=0$ & \\
\end{tabular}
\newline
From the above expressions, we get the similar result on generator polynomials as in Theorem \ref{genw0} with condition on 
$S(\beta),\ U(\beta),\ T(\beta),\ C_0-C_1,\ C_1-C_2$ and $C_2-C_0.$
Similarly, we get the condition for $\rho- \varrho \equiv 1\ (\mathrm{mod}\ 3)$ and $\frac{x+y}{3}\ \ \mathrm{mod}\ p\neq 0$ and 
for $\rho- \varrho \equiv 2\ (\mathrm{mod}\ 3)$ and $\frac{x-y}{3}\ \ \mathrm{mod}\ p\neq 0.$\\
\end{remark*}

\section{The minimum distance of the cyclic codes}\label{section 5}
In this section, we determine the lower bounds on the minimum distance of some of the cyclic codes of this paper.

\begin{theorem}\label{notinw0}
\cite{ding12} Let $C_{i}$ denote the cyclic code over $\mathrm{GF}(q)$ with the generator polynomial $g_{i}(x)=\frac{x^{n}-1}{x^{n_{i}}-1}.$ 
The cyclic code $C_{i}$  has parameters $[n,n_{i},d_{i}]$, where $d_{i}=n_{i-(-1)^{i}}$ and $i=1,2$.
\end{theorem}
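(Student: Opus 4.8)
The plan is to observe that $g_i(x)$ generates what is essentially a block-repetition code, so that both the dimension and the minimum distance can be read off directly; this is exactly the content recorded in \cite{ding12}. First I would pin down the dimension. Since $n_i\mid n$, the polynomial $x^{n_i}-1$ divides $x^n-1$ in $\mathrm{GF}(q)[x]$, so $g_i(x)=\tfrac{x^n-1}{x^{n_i}-1}$ is a genuine monic divisor of $x^n-1$ of degree $n-n_i$, whence $C_i=\langle g_i(x)\rangle$ has dimension $k=n-\mathrm{deg}\,g_i(x)=n_i$. Writing $\ell_i=n/n_i$ (so $\ell_1=n_2$, $\ell_2=n_1$, i.e. $\ell_i=n_{i-(-1)^i}$), the geometric-series identity gives
\begin{align}
g_i(x)=\frac{x^{\ell_i n_i}-1}{x^{n_i}-1}=1+x^{n_i}+x^{2n_i}+\cdots+x^{(\ell_i-1)n_i},\nonumber
\end{align}
a polynomial with exactly $\ell_i$ nonzero coefficients that is itself a codeword of $C_i$; hence $d_i\le \ell_i$.

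For the matching lower bound I would use the description $C_i=\{a(x)g_i(x):a(x)\in\mathrm{GF}(q)[x],\ \mathrm{deg}\,a(x)\le n_i-1\}$, which is legitimate because $\mathrm{deg}\big(a(x)g_i(x)\big)\le (n_i-1)+(n-n_i)=n-1$, so no reduction modulo $x^n-1$ occurs. For such $a(x)=\sum_{k=0}^{n_i-1}a_k x^k$ one has
\begin{align}
a(x)g_i(x)=\sum_{j=0}^{\ell_i-1}x^{jn_i}a(x)=\sum_{j=0}^{\ell_i-1}\sum_{k=0}^{n_i-1}a_k\,x^{jn_i+k},\nonumber
\end{align}
and the exponents $jn_i+k$ with $0\le j\le\ell_i-1$, $0\le k\le n_i-1$ run through $0,1,\dots,n-1$ bijectively. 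Thus the length-$n$ coefficient vector of $a(x)g_i(x)$ is $\ell_i$ consecutive copies of $(a_0,\dots,a_{n_i-1})$, so $\mathrm{wt}\big(a(x)g_i(x)\big)=\ell_i\cdot\mathrm{wt}(a(x))\ge \ell_i$ whenever $a(x)\ne 0$. Combining the two bounds, $d_i=\ell_i=n_{i-(-1)^i}$, so $C_i$ has parameters $[n,n_i,d_i]$.

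There is no substantial obstacle here: the only points needing care are that $x^{n_i}-1\mid x^n-1$ (immediate from $n_i\mid n$), that the product $a(x)g_i(x)$ never wraps around modulo $x^n-1$ when $\mathrm{deg}\,a(x)<n_i$, and that the blocks $[jn_i,(j+1)n_i-1]$ are pairwise disjoint so that the weight contributions add rather than cancel. This is the standard argument for codes generated by $(x^n-1)/(x^m-1)$ with $m\mid n$, and it applies verbatim for $m=n_1$ and $m=n_2$.
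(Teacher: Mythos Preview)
Your argument is correct. The paper does not supply its own proof of this theorem; it simply records the result with a citation to \cite{ding12}. Your block-repetition description of $C_i$ --- showing that every codeword $a(x)g_i(x)$ with $\deg a(x)<n_i$ is $\ell_i=n/n_i$ non-overlapping copies of the coefficient vector of $a(x)$ --- is the standard way to read off both the dimension and the exact minimum distance, and nothing further is needed.
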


\begin{theorem}\label{notinw0.1}
\cite{ding12} Let $C_{(n_{1},n_{2},q)}$ denote the cyclic code over $\mathrm{GF}(q)$ with the generator polynomial $g(x)=\frac{(x^{n}-1)(x-1)}{(x^{n_{1}}-1)(x^{n_{2}}-1)}$. 
 The cyclic code $C_{(n_{1},n_{2},q)}$ has parameters $[n,n_{1}+n_{2}-1,d_{(n_{1},n_{2},q)}]$, where $d_{(n_{1},n_{2},q)}=\mathrm{min}(n_{1},n_{2})$.
\end{theorem}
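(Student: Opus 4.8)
The plan is to read off the dimension from the degree of $g(x)$ and then to obtain a transparent combinatorial model of the code via the Chinese Remainder Theorem. Since $g(x)$ is a genuine polynomial, $\deg g(x) = n + 1 - n_1 - n_2 = (n_1-1)(n_2-1)$, so the dimension is $k = n - \deg g(x) = n_1 + n_2 - 1$; this already pins down the first two parameters. To handle the distance I would first locate the zeros of $C_{(n_1,n_2,q)}$. Its parity-check polynomial is $h(x) = (x^n-1)/g(x) = \frac{(x^{n_1}-1)(x^{n_2}-1)}{x-1} = \big(\sum_{i=0}^{n_1-1}x^i\big)(x^{n_2}-1)$, whose roots among the $n$-th roots of unity are exactly $\{\beta^i : i\in P\cup Q\cup\{0\}\}$, the double zero at $1$ being removed by the factor $x-1$. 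Hence $g(x) = \prod_{i\in\mathbb{Z}_{n}^{\ast}}(x-\beta^i) = \omega(x)$, and $C_{(n_1,n_2,q)}$ is the cyclic code whose complete defining set is $\mathbb{Z}_{n}^{\ast}$.

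The key step is to describe $C_{(n_1,n_2,q)}$ through the bijection $\mathbb{Z}_n \to \mathbb{Z}_{n_1}\times\mathbb{Z}_{n_2}$, $j\mapsto(j\bmod n_1,\,j\bmod n_2)$. Put $A(x) = (x^n-1)/(x^{n_1}-1) = \sum_{k=0}^{n_2-1}x^{kn_1}$ and $B(x) = (x^n-1)/(x^{n_2}-1) = \sum_{k=0}^{n_1-1}x^{kn_2}$. For any $i$ with $\gcd(i,n)=1$ one has $A(\beta^i) = (\beta^{ni}-1)/(\beta^{n_1 i}-1) = 0$, and likewise $B(\beta^i)=0$, the denominators being nonzero because $n_2\nmid i$ and $n_1\nmid i$. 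Hence all cyclic shifts $x^jA(x)$ for $0\le j\le n_1-1$ and $x^jB(x)$ for $0\le j\le n_2-1$ lie in $C_{(n_1,n_2,q)}$; under the CRT coordinates $x^jA(x)$ is the indicator of the ``row'' $\{(r,s):r\equiv j\bmod n_1\}$ and $x^jB(x)$ the indicator of a ``column''. Therefore the $\mathrm{GF}(q)$-span of these words is precisely the set of arrays of the form $(r,s)\mapsto\lambda(r)+\mu(s)$ with $\lambda:\mathbb{Z}_{n_1}\to\mathrm{GF}(q)$ and $\mu:\mathbb{Z}_{n_2}\to\mathrm{GF}(q)$; its dimension is $n_1+n_2-1$, the only ambiguity being $(\lambda,\mu)\mapsto(\lambda+c,\mu-c)$, which equals $k$, so this span is all of $C_{(n_1,n_2,q)}$. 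The hypothesis $\gcd(n,q)=1$, i.e. $p\nmid n_1 n_2$, is what makes this dimension count exact.

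With this model the distance is immediate. Take a nonzero codeword corresponding to $(r,s)\mapsto\lambda(r)+\mu(s)$. If $\mu$ is non-constant, then for each of the $n_1$ values of $r$ the $r$-th row $\big(\lambda(r)+\mu(s)\big)_{s}$ is a translate of the non-constant tuple $(\mu(s))_s$ and so is not identically zero, giving weight at least $n_1$. If $\mu$ is constant, every row is either all-zero or all-nonzero, and since the word is nonzero at least one row is all-nonzero, giving weight at least $n_2$. Either way the weight is at least $\min(n_1,n_2)$, so $d_{(n_1,n_2,q)}\ge\min(n_1,n_2)$; conversely $A(x)$ and $B(x)$ are codewords of weights $n_2$ and $n_1$ (they even generate the subcodes $C_1,C_2$ of Theorem \ref{notinw0}), so $d_{(n_1,n_2,q)}=\min(n_1,n_2)$. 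The one place that needs genuine care is the middle paragraph — identifying the span of the shifts of $A$ and $B$ with the row/column-constant arrays and verifying that its dimension is exactly $k$; once that is in place, both bounds on $d_{(n_1,n_2,q)}$ drop out at once.
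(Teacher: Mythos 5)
Your argument is correct, and it is worth noting that the paper itself offers no proof of this statement at all: Theorem \ref{notinw0.1} is simply quoted from \cite{ding12}, so there is nothing internal to compare you against line by line. What you have produced is a complete, self-contained proof. The dimension count from $\deg g(x)=(n_1-1)(n_2-1)$ is right; the identification of the zero set of the code with $\mathbb{Z}_n^{\ast}$ (so that $g(x)=\omega(x)\omega_0(x)\cdots$, i.e.\ the product over all invertible residues) is right; and the CRT picture — shifts of $A(x)=(x^n-1)/(x^{n_1}-1)$ and $B(x)=(x^n-1)/(x^{n_2}-1)$ spanning exactly the arrays $(r,s)\mapsto\lambda(r)+\mu(s)$, with the single relation $(\lambda,\mu)\mapsto(\lambda+c,\mu-c)$ — gives both the upper bound (weights $n_2$ and $n_1$ of $A$ and $B$) and the lower bound $\min(n_1,n_2)$ by the clean row/column case split. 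One small imprecision: the dimension of the span of row and column indicators is $n_1+n_2-1$ over any field, independently of the characteristic; the hypothesis $\gcd(n,q)=1$ is not what makes that count exact, but rather what guarantees $x^n-1$ is separable, so that the code of dimension $n-\deg g$ is cut out precisely by vanishing at the simple roots $\beta^i$, $i\in\mathbb{Z}_n^{\ast}$, which is where your containment argument uses it. This does not affect the validity of the proof, and your explicit model of $C_{(n_1,n_2,q)}$ as the ``row-plus-column'' code is arguably more informative than a bare citation, since it also exhibits the minimum-weight codewords and makes the relation to the codes $C_1,C_2$ of Theorem \ref{notinw0} transparent.
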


\begin{theorem}\label{minw0.1}
Assume that $q\in W_{0}$. Let $C_{(n_i,q)}^{(i,j)}$ denote the cyclic code over $\mathrm{GF}(q)$ with the generator polynomial 
$g_{(n_i,q)}^{(i,j)}(x)=\frac{x^{n}-1}{(x^{n_{i}}-1)\omega_{j}(x)}$, where $i={1,2}$, and $0 \leqslant j \leqslant 5.$
The cyclic code $C_{(n_i,q)}^{(i,j)}$ has parameters $[n,n_{i}+\frac{(n_{1}-1)(n_{2}-1)}{6},d_{(n_i,q)}^{(i,j)}]$, 
where $d_{(n_i,q)}^{(i,j)}\geq\lceil\sqrt{n_{i-(-1)^{i}}}\rceil$.\\
If $-1\in W_3,$ we have ${(d_{(n_i,q)}^{(i,j)}})^{2}- d_{(n_i,q)}^{(i,j)} + 1\geq n_{i-(-1)^{i}}.$
\end{theorem}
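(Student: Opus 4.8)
The plan is to read off the dimension from $\deg g_{(n_i,q)}^{(i,j)}(x)$ and then to obtain the distance bound by a square‑root argument: I multiply a minimum‑weight codeword by a coordinate‑twisted copy of itself so as to land inside the smaller code $C_i=\langle (x^n-1)/(x^{n_i}-1)\rangle$ of Theorem~\ref{notinw0}, whose distance $n_{i-(-1)^i}$ is already known.

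For the dimension, recall that for $q\in W_0$ each $\omega_k(x)$ lies in $\mathrm{GF}(q)[x]$, and that $x^{n_i}-1$ and $\omega_j(x)$ are coprime divisors of $x^n-1$ (their root sets $\{0\}\cup\Pi_i$ and $W_j$ are disjoint, where $\Pi_1=Q$, $\Pi_2=P$). Hence $g_{(n_i,q)}^{(i,j)}(x)=\frac{x^n-1}{(x^{n_i}-1)\omega_j(x)}\in\mathrm{GF}(q)[x]$ has degree $n-n_i-|W_j|=n-n_i-\tfrac{(n_1-1)(n_2-1)}{6}$, so $k=n_i+\tfrac{(n_1-1)(n_2-1)}{6}$.

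For the distance, write $Z_{i,j}$ for the set of exponents $t$ with $\beta^t$ a root of $g_{(n_i,q)}^{(i,j)}(x)$; then $Z_{i,j}=\Pi\cup\bigcup_{k\neq j}W_k$ with $\Pi=P$ if $i=1$ and $\Pi=Q$ if $i=2$, while $C_i$ has root‑exponent set $Z_{i,j}\cup W_j$. Fix $a\in W_3$ (which is nonempty since $\gcd(n_1-1,n_2-1)=6$); if moreover $-1\in W_3$ — equivalently $\eta$ even, by Lemma~\ref{-1} — take $a=-1$. Multiplication by the unit $a$ permutes $P$ and permutes $Q$, and by Lemma~\ref{mod 6} (using $a^{-1}\in W_3$) sends $W_k$ to $W_{k+3}$, so $a^{-1}Z_{i,j}=Z_{i,j+3}$ (indices mod $6$); since $j\not\equiv j+3\pmod 6$ we get $Z_{i,j}\cup Z_{i,j+3}=Z_{i,j}\cup W_j$. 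Now let $c(x)$ be a nonzero codeword of $C_{(n_i,q)}^{(i,j)}$ of weight $w=d_{(n_i,q)}^{(i,j)}$ and set $\hat c(x)=c(x)\,c(x^a)\bmod(x^n-1)$. Then $\hat c(\beta^t)=c(\beta^t)c(\beta^{at})$ vanishes for $t\in Z_{i,j}$ (because $c(\beta^t)=0$) and for $t\in Z_{i,j+3}$ (because then $at\in aZ_{i,j+3}=Z_{i,j}$, so $c(\beta^{at})=0$), hence $\hat c$ vanishes on $Z_{i,j}\cup W_j$, i.e.\ $\hat c\in C_i$. Since $c(x^a)$ has weight $w$, the support of $\hat c$ is contained in $\{\alpha+a\gamma:\alpha,\gamma\in\mathrm{supp}(c)\}$, of size at most $w^2$; when $a=-1$ this set is $\{\alpha-\gamma\}$, which contains $0$ for each of the $w$ diagonal pairs, so $\mathrm{wt}(\hat c)\le w^2-w+1$. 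Granting $\hat c\neq 0$, Theorem~\ref{notinw0} gives $w^2\ge\mathrm{wt}(\hat c)\ge n_{i-(-1)^i}$, whence $d_{(n_i,q)}^{(i,j)}\ge\lceil\sqrt{n_{i-(-1)^i}}\rceil$, and in the case $-1\in W_3$ the sharper $w^2-w+1\ge n_{i-(-1)^i}$.

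The step I expect to be the main obstacle is the non‑degeneracy $\hat c\neq 0$, which is not automatic because $\mathrm{GF}(q)[x]/(x^n-1)$ has zero divisors. I would secure it by choosing $a=g^su^3\in W_3$ with $s$ such that $a\equiv 1\pmod{n_i}$ (possible since $n_i-1$ divides $e$ and $u\equiv g\bmod n_1$, $u\equiv 1\bmod n_2$), so that $at=t$ for every $t$ in the progression among $P,Q$ that does \emph{not} appear in $Z_{i,j}$, and $a\cdot 0=0$. Then $\hat c=0$ would force $c(\beta^t)=0$ for all $t$ in that progression together with $t=0$, hence $c(\beta^t)=0$ on $\mathbb{Z}_n\setminus W_j$, i.e.\ $c\in\langle (x^n-1)/\omega_j(x)\rangle$, a subcode of $C_{(n_i,q)}^{(i,j)}$ of dimension $|W_j|=e$. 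Ruling this out requires a separate lower bound on the minimum distance of that subcode — which I expect reduces again to a twisting argument, exploiting that $W_j$ maps onto $\mathbb{Z}_{n_i}^{\ast}$ modulo $n_i$ — and it is this case analysis, rather than the square‑root estimate itself, where the real work lies. Finally, the concluding inequality $(d_{(n_i,q)}^{(i,j)})^2-d_{(n_i,q)}^{(i,j)}+1\ge n_{i-(-1)^i}$ for $-1\in W_3$ is exactly the sharpened bound produced above with $a=-1$.
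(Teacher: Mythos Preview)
Your argument is essentially the paper's: pick $r$ in a nontrivial cyclotomic class, observe that $c(x^{r})$ lands in the code with shifted index $(j-k)\bmod 6$, so that $c(x)c(x^{r})$ lies in $C_i$, and then invoke Theorem~\ref{notinw0}. The paper first records that all six distances $d_{(n_i,q)}^{(i,0)},\dots,d_{(n_i,q)}^{(i,5)}$ coincide (by the same twisting) and then takes any $r\in W_k$ with $k\ne 0$; you streamline this by going straight to $a\in W_3$, which also lets you treat the $-1\in W_3$ refinement uniformly. These are the same idea.

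Where you diverge from the paper is in rigor: you correctly flag the possibility $\hat c=0$ and propose choosing $a\equiv 1\pmod{n_i}$ to force $c$ into the subcode $\langle (x^n-1)/\omega_j(x)\rangle$ in that degenerate case. The paper does not address this at all --- it simply asserts $(d_{(n_i,q)}^{(i,j)})^{2}\ge d_i$ (and the sharpened inequality when $-1\in W_3$) as though the product codeword were automatically nonzero. So the ``main obstacle'' you anticipate, and the subcode distance analysis you sketch, are \emph{not} present in the paper's proof; at the paper's level of detail your argument is already complete once you reach the weight estimate $\mathrm{wt}(\hat c)\le w^2$ (resp.\ $w^2-w+1$). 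Note also that your special choice $a\equiv 1\pmod{n_i}$ is unavailable when you set $a=-1$ for the sharper bound, so that case would need its own treatment --- but again, the paper makes no such distinction. Whether nondegeneracy can genuinely fail is a fair question, but it is a gap shared with the source rather than one introduced by your approach.
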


\begin{proof} Let $c(x)\in \mathrm{GF}(q)[x]/(x^{n}-1)$ be a codeword of Hamming weight $\omega$ in $C_{(n_i,q)}^{(i,j)}$. 
Take any $r\in W_{k}$ for $1\leq k\leq5$. Then $c(x^{r})$ is a codeword of Hamming weight $\omega$ in $C_{(n_i,q)}^{(i,(j-k)\ \mathrm{mod}\ 6)}$. 
It then follows that $d_{(n_i,q)}^{(i,j)}=d_{(n_i,q)}^{(i,(j-k)\ \mathrm{mod}\ 6)}.$ Therefore, we have 
$ d_{(n_i,q)}^{(i,0)}=d_{(n_i,q)}^{(i,1)}=d_{(n_i,q)}^{(i,2)}=d_{(n_i,q)}^{(i,3)}=d_{(n_i,q)}^{(i,4)}=d_{(n_i,q)}^{(i,5)}.$
Let $c(x)\in \mathrm{GF}(q)[x]/(x^{n}-1)$ be a codeword of minimum weight in $C_{(n_i,q)}^{(i,j)}$. 
Then $c(x^{r})$ is a codeword of same weight in $C_{(n_i,q)}^{(i,(j-k)\ \mathrm{mod}\ 6)}$. 
Further, for any $r\in W_k$, we have  $c(x)c(x^{r})$ is a codeword of $C_{i}$, where $C_{i}$ denote the cyclic code over $\mathrm{GF}(q)$ with the generator polynomial $g_i(x)=\frac{x^n-1}{x^{n_i}-1}$ and 
minimum distance $d_{i}=n_{i-(-1)^{i}}$. Hence, from Theorem \ref{notinw0}, we have ${(d_{(n_i,q)}^{(i,j)}})^{2}\geq d_{i}=n_{i-(-1)^{i}},$ and
${(d_{(n_i,q)}^{(i,j)}})^{2}- d_{(n_i,q)}^{(i,j)} + 1\geq n_{i-(-1)^{i}}$ if $-1\in W_3.$
\end{proof}
\begin{theorem}\label{minw0.2}
Assume that $q\in W_{0}$. Let $C_{(n_1,n_2,q)}^{(j)}$ denote the cyclic code over $\mathrm{GF}(q)$ 
with the generator polynomial $g_{(n_1,n_2,q)}^{(j)}(x)=\frac{(x^{n}-1)(x-1)}{(x^{n_{1}}-1)(x^{n_{2}}-1)\omega_{j}(x)}$, 
where $0 \leqslant j \leqslant 5.$
The cyclic code $C_{(n_1,n_2,q)}^{(j)}$ has parameters $[n,n_{1}+n_{2}-1+\frac{(n_{1}-1)(n_{2}-1)}{6},d_{(n_1,n_2,q)}^{(j)}]$, 
 where $d_{(n_1,n_2,q)}^{(j)}\geq\lceil\sqrt{\mathrm{min}(n_{1},n_{2})}\rceil$.\\
If $-1\in W_3,$ we have ${(d_{(n_1,n_2,q)}^{(j)}})^{2}- d_{(n_1,n_2,q)}^{(j)} + 1\geq \mathrm{min}(n_{1},n_2).$
\end{theorem}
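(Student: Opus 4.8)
The plan is to mimic the proof of Theorem \ref{minw0.1}, which handled the generator polynomial $\frac{x^n-1}{(x^{n_i}-1)\omega_j(x)}$; here we have the extra factor $(x-1)$ removed from the generator, i.e. $g_{(n_1,n_2,q)}^{(j)}(x)=\frac{(x^n-1)(x-1)}{(x^{n_1}-1)(x^{n_2}-1)\omega_j(x)}$, so the role previously played by $C_i=\langle\frac{x^n-1}{x^{n_i}-1}\rangle$ is now played by $C_{(n_1,n_2,q)}=\langle\frac{(x^n-1)(x-1)}{(x^{n_1}-1)(x^{n_2}-1)}\rangle$, whose parameters $[n,n_1+n_2-1,\min(n_1,n_2)]$ are recorded in Theorem \ref{notinw0.1}. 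First I would compute the dimension: $\deg g_{(n_1,n_2,q)}^{(j)} = n - (n_1 + n_2 - 1) - |W_j| = n - (n_1+n_2-1) - \frac{(n_1-1)(n_2-1)}{6}$, hence $k = n_1+n_2-1+\frac{(n_1-1)(n_2-1)}{6}$, which is the claimed dimension. The core of the argument is then the multiplier/squaring trick.

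For the distance bound, let $c(x)$ be a codeword of minimum weight $d := d_{(n_1,n_2,q)}^{(j)}$ in $C_{(n_1,n_2,q)}^{(j)}$. For any $r\in W_k$ with $1\le k\le 5$, the substitution $x\mapsto x^r$ is a permutation of the coordinates (since $\gcd(r,n)=1$), it fixes the ideal $\langle x^n-1\rangle$, and by Lemma \ref{mod 6} it sends $\omega_j(x)$ to $\omega_{(j+k)\bmod 6}(x)$ up to scalar while fixing $x-1$, $x^{n_1}-1$, $x^{n_2}-1$; so $c(x^r)$ is a codeword of the same weight in $C_{(n_1,n_2,q)}^{((j-k)\bmod 6)}$. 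This gives $d_{(n_1,n_2,q)}^{(j)} = d_{(n_1,n_2,q)}^{(j')}$ for all $j,j'$, exactly as in Theorem \ref{minw0.1}. Next, choosing $r\in W_k$ appropriately, the product $c(x)c(x^r)$ (product in $\mathrm{GF}(q)[x]/(x^n-1)$) is divisible by $\omega_j(x)\,\omega_{(j-k)\bmod 6}(x)$ as well as by $\frac{(x^n-1)(x-1)}{(x^{n_1}-1)(x^{n_2}-1)}$ divided out appropriately; the point is to show $c(x)c(x^r)$ lies in $C_{(n_1,n_2,q)} = \langle\frac{(x^n-1)(x-1)}{(x^{n_1}-1)(x^{n_2}-1)}\rangle$. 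Since the Hamming weight of a product of two polynomials mod $x^n-1$ is at most the product of the weights, we get $\mathrm{wt}(c(x)c(x^r)) \le d^2$; and unless $c(x)c(x^r)=0$, Theorem \ref{notinw0.1} gives $d^2 \ge \min(n_1,n_2)$, hence $d \ge \lceil\sqrt{\min(n_1,n_2)}\rceil$. When $-1\in W_3$, pick $k=3$ and $r=-1\in W_3$ (Lemma \ref{-1}); then $c(x^{-1})=c(x^r)$ is the "reciprocal" codeword, and the diagonal term $x^0$ in $c(x)c(x^{-1})$ has coefficient $\sum_i c_i^2 \ne 0$ (the coordinate supports coincide after reversal), so the weight of $c(x)c(x^{-1})$ is at most $d^2 - d + 1$ — one gets $d$ "collisions" on the diagonal collapsing to a single nonzero term, and the off-diagonal terms can cancel in pairs but never all; this yields $d^2 - d + 1 \ge \min(n_1,n_2)$.

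The two points needing care, and where I expect the real work to be, are: (i) showing $c(x)c(x^r)\in C_{(n_1,n_2,q)}$, i.e. that the product is divisible by $\frac{(x^n-1)(x-1)}{(x^{n_1}-1)(x^{n_2}-1)}$ — this needs one to track which cyclotomic factors of $x^n-1$ (over $\mathrm{GF}(q^m)$, via the roots $\beta^t$) divide $g_{(n_1,n_2,q)}^{(j)}$ and check that in the product the roots not covered by $c(x)$ alone are covered by $c(x^r)$, using that $r$ permutes the $W_k$'s cyclically and fixes $P\cup Q\cup\{0\}$; and (ii) verifying $c(x)c(x^{-1})\ne 0$, which again reduces to a root-counting argument (a nonzero polynomial of degree $<n$ times its reciprocal cannot vanish identically mod $x^n-1$ unless it is zero). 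Both of these are essentially the arguments used to prove Theorem \ref{minw0.1} and the analogous results in \cite{ding12}, \cite{white04}, so I would present them compactly, emphasizing the structural reason (the multiplier group $W_0$ acts, the $W_k$ permute cyclically, and $P\cup Q$ is fixed) rather than re-deriving the divisibility from scratch. The dimension count and the "$\le d^2$ weight of a product" bound are routine; the cyclic-permutation-of-the-$\omega_j$ observation is the linchpin, and it is already available from Lemma \ref{mod 6}.
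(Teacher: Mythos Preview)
Your proposal is correct and follows essentially the same approach as the paper's proof: establish the dimension count, use the multiplier action $x\mapsto x^r$ for $r\in W_k$ to show the six codes $C_{(n_1,n_2,q)}^{(j)}$ share a common minimum distance, and then observe that the product $c(x)c(x^r)$ lands in $C_{(n_1,n_2,q)}$ so that Theorem~\ref{notinw0.1} gives $d^2\ge\min(n_1,n_2)$ (with the refinement to $d^2-d+1$ when $-1\in W_3$). You are in fact more careful than the paper, which does not discuss the nonvanishing of the product or the mechanism behind the $d^2-d+1$ bound at all.
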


\begin{proof} Let $c(x)\in \mathrm{GF}(q)[x]/(x^{n}-1)$ be a codeword of Hamming weight $\omega$ in $C_{(n_1,n_2,q)}^{(j)}$. 
Take any $r\in W_{k}$ for $1\leq k\leq5$. Then $c(x^{r})$ is a codeword of Hamming weight $\omega$ in $C_{(n_1,n_2,q)}^{((j-k)\ \mathrm{mod}\ 6)}$. 
It then follows that $d_{(n_1,n_2,q)}^{(j)}=d_{(n_1,n_2,q)}^{((j-k)\ \mathrm{mod}\ 6)}.$ Therefore, we have 
$ d_{(n_1,n_2,q)}^{(0)}=d_{(n_1,n_2,q)}^{(1)}=d_{(n_1,n_2,q)}^{(2)}=d_{(n_1,n_2,q)}^{(3)}=d_{(n_1,n_2,q)}^{(4)}=d_{(n_1,n_2,q)}^{(5)}.$
Let $c(x)\in \mathrm{GF}(q)[x]/(x^{n}-1)$ be a codeword of minimum weight in  $C_{(n_1,n_2,q)}^{(j)}$. 
Then $c(x^{r})$ is a codeword of same weight in $C_{(n_1,n_2,q)}^{((j-k)\ \mathrm{mod}\ 6)}$. 
Further, for any $r\in W_k$, we have $c(x)c(x^{r})$ is a codeword of $C_{(n_1,n_2,q)}$, where $C_{(n_1,n_2,q)}$ denote the cyclic code
over $\mathrm{GF}(q)$ with the generator polynomial $g(x)=\frac{(x^n-1)(x-1)}{(x^{n_1}-1)(x^{n_2}-1)}$ and 
minimum distance $d_{(n_1,n_2,q)}=\mathrm{min} (n_1, n_2).$ 
Hence, from Theorem \ref{notinw0.1}, we have ${(d_{(n_1,n_2,q)}^{(j)}})^{2}\geq d_{(n_1,n_2,q)}=\mathrm{min}(n_{1},n_2),$ and
${(d_{(n_1,n_2,q)}^{(j)}})^{2}- d_{(n_1,n_2,q)}^{(j)} + 1\geq \mathrm{min}(n_{1},n_2)$ if $-1\in W_3.$
\end{proof}
\begin{theorem}\label{minw0.3}
Assume that $q\in W_{0}$. Let $C_{(n_i,q)}^{(i,j,h)}$ denote the cyclic code over $\mathrm{GF}(q)$ with the generator 
polynomial $g_{(n_i,q)}^{(i,j,h)}(x)=\frac{x^{n}-1}{(x^{n_{i}}-1)\omega_{j}(x)\omega_{h}(x)}$, where $i={1,2}$ and \\
$(j,h)\in \{(0,1),(0,2),(1,2),(1,3),(2,3),(2,4),(3,4),(3,5),(4,0),(4,5),(5,0),(5,1)\}.$\\
The cyclic code $C_{(n_i,q)}^{(i,j,h)}$ has parameters $[n,n_{i}+\frac{(n_{1}-1)(n_{2}-1)}{3},d_{(n_i,q)}^{(i,j,h)}]$, 
where $d_{(n_i,q)}^{(i,j,h)}\geq\lceil\sqrt{n_{i-(-1)^{i}}}\rceil$. If $-1\in W_3,$ we have ${(d_{(n_i,q)}^{(i,j,h)}})^{2}- d_{(n_i,q)}^{(i,j,h)} + 1\geq n_{i-(-1)^{i}}$. 
\end{theorem}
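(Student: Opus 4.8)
The plan is to follow closely the templates of Theorems~\ref{minw0.1} and~\ref{minw0.2}, replacing the single cyclotomic factor $\omega_j(x)$ by the product $\omega_j(x)\omega_h(x)$. First I would record the basic structure. Since $q\in W_0$, each $\omega_l(x)$ lies in $\mathrm{GF}(q)[x]$, so $g_{(n_i,q)}^{(i,j,h)}(x)\in\mathrm{GF}(q)[x]$; and because $x^{n_1}-1=\prod_{t\in Q\cup\{0\}}(x-\beta^t)$ and $x^{n_2}-1=\prod_{t\in P\cup\{0\}}(x-\beta^t)$, the roots of $g_{(n_i,q)}^{(i,j,h)}(x)$ among the $n$th roots of unity are exactly the $\beta^t$ with $t\in P\cup\bigcup_{l\notin\{j,h\}}W_l$ (with $Q$ in place of $P$ when $i=2$). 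Counting degrees gives $\deg g_{(n_i,q)}^{(i,j,h)}=n-n_i-\tfrac{(n_1-1)(n_2-1)}{3}$, hence the stated dimension $n_i+\tfrac{(n_1-1)(n_2-1)}{3}$.

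Next I would exploit the cyclotomic symmetry exactly as in the proofs of Theorems~\ref{minw0.1} and~\ref{minw0.2}. For $r\in W_k$ with $1\le k\le 5$, the substitution $x\mapsto x^r$ is a weight-preserving $\mathrm{GF}(q)$-algebra automorphism of $\mathrm{GF}(q)[x]/(x^n-1)$, because $\gcd(r,n)=1$ gives $rP=P$, $rQ=Q$ and, by Lemma~\ref{mod 6}, $rW_l=W_{(l+k)\bmod 6}$. Hence it carries a codeword of $C_{(n_i,q)}^{(i,j,h)}$ to a codeword of the same weight in $C_{(n_i,q)}^{(i,(j-k)\bmod 6,(h-k)\bmod 6)}$; a direct check shows that the twelve listed pairs are permuted among themselves by this action (splitting into the orbit of the ``consecutive'' pairs and that of the ``distance-two'' pairs), so the minimum distances within each orbit coincide — and in any case the bound below applies to each of the twelve.

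The core step is the product argument. Let $c(x)$ be a minimum-weight codeword of $C_{(n_i,q)}^{(i,j,h)}$, of weight $d:=d_{(n_i,q)}^{(i,j,h)}$, and take $r\in W_3$; then $c(x^r)$ is a codeword of the same weight in $C_{(n_i,q)}^{(i,(j-3)\bmod 6,(h-3)\bmod 6)}$. The crucial observation is that no pair $(j,h)$ in the list has $h-j\equiv 3\ (\mathrm{mod}\ 6)$, so $\{j,h\}\cap\{(j-3)\bmod 6,(h-3)\bmod 6\}=\emptyset$; therefore the zero sets of $c(x)$ and of $c(x^r)$ on the cyclotomic part together cover all of $\bigcup_{l=0}^5 W_l$, and $c(x)c(x^r)$ vanishes at every $\beta^t$ with $t\in P\cup\bigcup_{l=0}^5 W_l$ (resp.\ $t\in Q\cup\bigcup_{l=0}^5 W_l$ when $i=2$). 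Thus $g_i(x)=\tfrac{x^n-1}{x^{n_i}-1}$ divides $c(x)c(x^r)$, i.e.\ $c(x)c(x^r)$ is a codeword of the code $C_i$ of Theorem~\ref{notinw0}, which has minimum distance $n_{i-(-1)^i}$. Since the product of two weight-$d$ polynomials has weight at most $d^2$, we conclude $d^2\ge n_{i-(-1)^i}$, that is $d_{(n_i,q)}^{(i,j,h)}\ge\lceil\sqrt{n_{i-(-1)^i}}\rceil$. If in addition $-1\in W_3$ (Lemma~\ref{-1}), one may take $r=-1$, and then $\mathrm{supp}\big(c(x)c(x^{-1})\big)$ is contained in the set of residues $a-b\ (\mathrm{mod}\ n)$ with $a,b\in\mathrm{supp}(c)$, a set of at most $d^2-d+1$ elements; hence $\big(d_{(n_i,q)}^{(i,j,h)}\big)^2-d_{(n_i,q)}^{(i,j,h)}+1\ge n_{i-(-1)^i}$.

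The part I expect to be the main obstacle is the combinatorial bookkeeping that pins down the displayed list of pairs $(j,h)$: one must verify that this is exactly the set for which a single shift (here by $k=3$) makes the two frequency supports complementary enough that $c(x)c(x^r)$ falls into $C_i$, i.e.\ that $\{j,h\}$ never contains two residues differing by $3$ modulo $6$. A secondary point, implicit already in Theorems~\ref{minw0.1} and~\ref{minw0.2}, is that the distance bound for $C_i$ is only useful once one knows $c(x)c(x^r)\ne 0$; this needs a short justification from the fact that $x\mapsto x^r$ is an automorphism of $\mathrm{GF}(q)[x]/(x^n-1)$ (so that $c(x)c(x^r)$ does not collapse to the zero codeword while $c(x)$ is nonzero).
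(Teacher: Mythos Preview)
Your argument follows essentially the same approach as the paper's: use the substitution $x\mapsto x^r$ with $r\in W_k$ to see that the twelve codes are permuted (the paper checks the two orbits $(0,1)\to(5,0)\to\cdots$ and $(0,2)\to(5,1)\to\cdots$ separately, arriving at the analogues of your orbit equalities), then take $r\in W_3$ so that $c(x)c(x^r)$ lands in $C_i$ and apply Theorem~\ref{notinw0}. Your write-up is in fact more explicit than the paper's terse proof---in particular your observation that the listed pairs are precisely those with $h-j\not\equiv 3\pmod 6$ is exactly what makes the shift by $3$ produce complementary zero sets, and your flag on the nonvanishing of $c(x)c(x^r)$ is a legitimate subtlety that the paper leaves implicit (note, however, that the automorphism property of $x\mapsto x^r$ alone only gives $c(x^r)\neq 0$, not that the product is nonzero in the non-integral ring $\mathrm{GF}(q)[x]/(x^n-1)$).
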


\begin{proof} Let $j=0, h=1$ and $c(x)\in \mathrm{GF}(q)[x]/(x^{n}-1)$ be a codeword of Hamming weight $\omega$ in $C_{(n_i,q)}^{(i,0,1)}$. 
Take any $r\in W_{k}$ for $1\leq k\leq5$. Then $c(x^{r})$ is a codeword of Hamming weight $\omega$ in $C_{(n_i,q)}^{(i,(0-k)\ \mathrm{mod}\ 6,(1-k)\ \mathrm{mod}\ 6)}$.
It then follows that $d_{(n_i,q)}^{(i,0,1)}=d_{(n_i,q)}^{(i,(0-k)\ \mathrm{mod}\ 6,(1-k)\ \mathrm{mod}\ 6)}.$ Therefore, we have 
\begin{align}
 d_{(n_i,q)}^{(i,0,1)}=d_{(n_i,q)}^{(i,5,0)}=d_{(n_i,q)}^{(i,4,5)}=d_{(n_i,q)}^{(i,3,4)}=d_{(n_i,q)}^{(i,2,3)}=d_{(n_i,q)}^{(i,1,2)}.\label{2.1}
\end{align}
Let $j=0, h=2$ and $c(x)\in \mathrm{GF}(q)[x]/(x^{n}-1)$ be a codeword of Hamming weight $\omega$ in $C_{(n_i,q)}^{(i,0,2)}$. 
Take any $r\in W_{k}$ for $1\leq k\leq5$. Then $c(x^{r})$ is a codeword of Hamming weight $\omega$ in $C_{(n_i,q)}^{(i,(0-k)\ \mathrm{mod}\ 6,(2-k)\ \mathrm{mod}\ 6)}$.
It then follows that $d_{(n_i,q)}^{(i,0)}=d_{(n_i,q)}^{(i,(0-k)\ \mathrm{mod}\ 6,(2-k)\ \mathrm{mod}\ 6)}.$ Therefore, we have 
\begin{align}
 d_{(n_i,q)}^{(i,0,2)}=d_{(n_i,q)}^{(i,5,1)}=d_{(n_i,q)}^{(i,4,0)}=d_{(n_i,q)}^{(i,3,5)}=d_{(n_i,q)}^{(i,2,4)}=d_{(n_i,q)}^{(i,1,3)}.\label{2.2}
\end{align}
Further, from (\ref {2.1}) and (\ref {2.2})  for any $r\in W_{3}$, we have that $c(x)c(x^{r})$ is a codeword of $C_{i}$. where $C_{i}$ denote the cyclic code over $\mathrm{GF}(q)$ with the generator polynomial $g_i(x)=\frac{x^n-1}{x^{n_i}-1}$ and 
minimum distance $d_{i}=n_{i-(-1)^{i}}.$
Hence, from Theorem \ref{notinw0}, we have $({d_{n_i,q}^{i,j,h}})^{2}\geq d_{i}=n_{i-(-1)^{i}},$ i.e., $d_{(n_i,q)}^{(i,j,h)}\geq\lceil\sqrt{n_{i-(-1)^{i}}}\rceil$
and ${(d_{(n_i,q)}^{(i,j,h)}})^{2}- d_{(n_i,q)}^{(i,j,h)} + 1\geq n_{i-(-1)^{i}}$ if $-1\in W_3.$ 
\end{proof} 
\begin{theorem}\label{minw0.4}
Assume that $q\in W_{0}$. Let $C_{(n_1,n_2,q)}^{(j,h)}$ denote the cyclic code over $\mathrm{GF}(q)$ with the generator 
polynomial $g_{(n_1,n_2,q)}^{(j,h)}(x)=\frac{(x^{n}-1)(x-1)}{(x^{n_{1}}-1)(x^{n_{2}}-1)\omega_{j}(x)\omega_{h}(x)}$, where\\
$(j,h)\in \{(0,1),(0,2),(1,2),(1,3),(2,3),(2,4),(3,4),(3,5),(4,0),(4,5),(5,0),(5,1)\}.$\\
The cyclic code $C_{(n_1,n_2,q)}^{(j,h)}$ has parameters $[n,n_{1}+n_2-1+\frac{(n_{1}-1)(n_{2}-1)}{3},d_{(n_1,n_2,q)}^{(j,h)}]$, where $d_{(n_1,n_2,q)}^{(j,h)}\geq\lceil\sqrt{\mathrm{min}(n_{1},n_2)}\rceil$.
If $-1\in W_3,$ we have ${(d_{(n_1,n_2,q)}^{(j,h)}})^{2}- d_{(n_1,n_2,q)}^{(j,h)} + 1\geq \mathrm{min}(n_1,n_2)$. 
\end{theorem}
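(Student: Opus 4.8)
The plan is to run the argument of Theorem~\ref{minw0.3} essentially verbatim, the only change being that the comparison code $C_i$ (generator $\frac{x^{n}-1}{x^{n_i}-1}$, Theorem~\ref{notinw0}) is everywhere replaced by the code $C_{(n_1,n_2,q)}$ of Theorem~\ref{notinw0.1} (generator $\frac{(x^{n}-1)(x-1)}{(x^{n_1}-1)(x^{n_2}-1)}$, minimum distance $\min(n_1,n_2)$). First I would pin down the dimension: since $q\in W_{0}$, each $\omega_l(x)$ lies in $\mathrm{GF}(q)[x]$ (as recorded just before Theorem~\ref{gennotw0}) and $\deg\omega_l=|W_l|=\frac{(n_1-1)(n_2-1)}{6}$, so $g_{(n_1,n_2,q)}^{(j,h)}(x)$ is a genuine monic divisor of $x^{n}-1$ over $\mathrm{GF}(q)$ of degree $n+1-n_1-n_2-\frac{(n_1-1)(n_2-1)}{3}$; hence $k=n-\deg g_{(n_1,n_2,q)}^{(j,h)}=n_1+n_2-1+\frac{(n_1-1)(n_2-1)}{3}$.

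Next comes the multiplier step. Fix $k$ with $1\le k\le 5$ and $r\in W_k$; since $r\in\mathbb{Z}_{n}^{\ast}$ the substitution $x\mapsto x^{r}$ permutes the coordinates $0,\dots,n-1$ (hence preserves Hamming weight modulo $x^{n}-1$), fixes $P$ and $Q$ setwise, and by Lemma~\ref{mod 6} carries $W_l$ onto $W_{l+k\ (\mathrm{mod}\ 6)}$. Consequently, if $c(x)$ is a codeword of weight $\omega$ in $C_{(n_1,n_2,q)}^{(j,h)}$, then $c(x^{r})\bmod(x^{n}-1)$ is a codeword of weight $\omega$ in $C_{(n_1,n_2,q)}^{((j-k)\ \mathrm{mod}\ 6,\,(h-k)\ \mathrm{mod}\ 6)}$. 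The twelve admissible pairs $(j,h)$ split into the two $6$-cycles $\{(0,1),(5,0),(4,5),(3,4),(2,3),(1,2)\}$ and $\{(0,2),(5,1),(4,0),(3,5),(2,4),(1,3)\}$ under $k\mapsto k-1$, so chaining these equalities (exactly as in \eqref{2.1}--\eqref{2.2}) shows that all twelve minimum distances $d_{(n_1,n_2,q)}^{(j,h)}$ coincide.

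Now I would take $c(x)$ of minimum weight $\omega=d_{(n_1,n_2,q)}^{(j,h)}$ in $C_{(n_1,n_2,q)}^{(j,h)}$ and pick $r\in W_{3}$. Then $c(x^{r})$ is a codeword of weight $\omega$ in $C_{(n_1,n_2,q)}^{((j+3)\ \mathrm{mod}\ 6,\,(h+3)\ \mathrm{mod}\ 6)}$ (as $-3\equiv 3\pmod 6$), so its roots include $\{\beta^{t}:t\in\bigcup_{l\notin\{j+3,h+3\}}W_l\}$, while those of $c(x)$ include $\{\beta^{t}:t\in\bigcup_{l\notin\{j,h\}}W_l\}$. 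For each of the twelve listed pairs one has the finite identity $(\{0,\dots,5\}\setminus\{j,h\})\cup(\{0,\dots,5\}\setminus\{j+3,h+3\})=\{0,\dots,5\}$ (equivalently $\{j,h\}\cap\{j+3,h+3\}=\emptyset$, visible directly from the list), so $c(x)c(x^{r})$ vanishes at $\beta^{t}$ for all $t\in\bigcup_{l=0}^{5}W_l=\mathbb{Z}_{n}^{\ast}$; since $\prod_{t\in\mathbb{Z}_{n}^{\ast}}(x-\beta^{t})=\frac{(x^{n}-1)(x-1)}{(x^{n_1}-1)(x^{n_2}-1)}$, this exhibits $c(x)c(x^{r})$ as a codeword of $C_{(n_1,n_2,q)}$, of weight at most $\omega^{2}$. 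Once its non-vanishing is secured, Theorem~\ref{notinw0.1} gives $\omega^{2}\ge\min(n_1,n_2)$, i.e.\ $d_{(n_1,n_2,q)}^{(j,h)}\ge\lceil\sqrt{\min(n_1,n_2)}\rceil$. When moreover $-1\in W_{3}$ I would take $r=-1$: then $c(x^{-1})$ is the reciprocal of $c(x)$, and in $c(x)c(x^{-1})=\sum_{i,j}c_ic_j\,x^{i-j}\bmod(x^{n}-1)$ all $\omega$ diagonal terms $i=j$ collapse into the constant coefficient, so $c(x)c(x^{-1})$ has at most $\omega^{2}-\omega+1$ nonzero coordinates, and comparing with the minimum distance $\min(n_1,n_2)$ of $C_{(n_1,n_2,q)}$ upgrades the bound to $(d_{(n_1,n_2,q)}^{(j,h)})^{2}-d_{(n_1,n_2,q)}^{(j,h)}+1\ge\min(n_1,n_2)$.

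The one step I expect to require genuine care — and which is glossed over in the parallel proofs of Theorems~\ref{minw0.1}--\ref{minw0.3} — is verifying that $c(x)c(x^{r})$ is a \emph{nonzero} codeword, so that the minimum-distance bound of Theorem~\ref{notinw0.1} actually applies. The natural way to handle it is to use that $x^{n}-1$ is squarefree, so $c(x)c(x^{r})\equiv 0$ iff $c(\beta^{t})=0$ or $c(\beta^{rt})=0$ for every $t$; if $c$ is nonzero at some $\beta^{t}$ with $t\in P\cup Q\cup\{0\}$ one exploits that $W_{3}$ is transitive on $P$ and on $Q$ modulo the respective prime to choose $r\in W_{3}$ with $rt=t$, while if $c$ vanishes on all of $P\cup Q\cup\{0\}$ together with every $W_l$, $l\notin\{j,h\}$, then $c$ already lies in the subcode $C_{(n_1,n_2,q)}$ and the asserted bounds hold trivially. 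The remaining ingredients are the short finite checks that the list of twelve pairs is closed under $(j,h)\mapsto(j+3,h+3)$ and that the two complements cover $\{0,\dots,5\}$, both of which are immediate.
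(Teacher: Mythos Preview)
Your argument is essentially the paper's own proof, carried out with considerably more care: the paper likewise splits the twelve pairs into the two shift-orbits (its displays \eqref{2.3} and \eqref{2.4}), multiplies a minimum-weight codeword by $c(x^{r})$ for $r\in W_{3}$ to land in $C_{(n_1,n_2,q)}$, and then invokes Theorem~\ref{notinw0.1}. Two small remarks: the orbit argument only shows the six distances within each cycle coincide, not that all twelve agree (this does not affect the stated bound, which you establish for every pair separately via the product step); and the non-vanishing of $c(x)c(x^{r})$, which you rightly flag as needing attention, is simply asserted without comment in the paper as well.
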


\begin{proof} Let $j=0, h=1$ and $c(x)\in \mathrm{GF}(q)[x]/(x^{n}-1)$ be a codeword of Hamming weight $\omega$ in $C_{(n_1,n_2,q)}^{(0,1)}$. 
Take any $r\in W_{k}$ for $1\leq k\leq5$, then $c(x^{r})$ is a codeword of Hamming weight $\omega$ in $C_{(n_1,n_2,q)}^{(i,(0-k)\ \mathrm{mod}\ 6,(1-k)\ \mathrm{mod}\ 6)}$.
It then follows that $d_{(n_1,n_2,q)}^{(0,1)}=d_{(n_1,n_2,q)}^{((0-k)\ \mathrm{mod}\ 6,((1-k)\ \mathrm{mod}\ 6)}.$ Therefore, we have
\begin{align}
 d_{(n_1,n_2,q)}^{(0,1)}=d_{(n_1,n_2,q)}^{(5,3)}=d_{(n_1,n_2,q)}^{(4,5)}=d_{(n_1,n_2,q)}^{(3,4)}=d_{(n_1,n_2,q)}^{(2,3)}=d_{(n_1,n_2,q)}^{(1,2)}.\label{2.3}
\end{align}
Let $j=0, h=2$ and $c(x)\in \mathrm{GF}(q)[x]/(x^{n}-1)$ be a codeword of Hamming weight $\omega$ in $C_{(n_1,n_2,q)}^{(0,2)}$. 
Take any $r\in W_{k}$ for $1\leq k\leq5$, then $c(x^{r})$ is a codeword of Hamming weight $\omega$ in $C_{(n_1,n_2,q)}^{((0-k)\ \mathrm{mod}\ 6,(2-k)\ \mathrm{mod}\ 6)}$.
It then follows that $d_{(n_1,n_2,q)}^{(0,2)}=d_{(n_1,n_2,q)}^{((0-k)\ \mathrm{mod}\ 6,(2-k)\ \mathrm{mod}\ 6)}.$ Therefore, we have
\begin{align}
 d_{(n_1,n_2,q)}^{(0,2)}=d_{(n_1,n_2,q)}^{(5,1)}=d_{(n_1,n_2,q)}^{(4,0)}=d_{(n_1,n_2,q)}^{(3,5)}=d_{(n_1,n_2,q)}^{(2,4)}=d_{(n_1,n_2,q)}^{(1,3)}.\label{2.4}
\end{align}
Further, from (\ref {2.3}) and (\ref {2.4})  for any $r\in W_{3}$, we have that $c(x)c(x^{r})$ is a codeword of $C_{i}$.
Hence, from Theorem \ref{notinw0.1}, we have ${(d_{(n_1,n_2,q)}^{(j,h)})}^{2}\geq d_{(n_1,n_2,q)}=\mathrm{min}(n_{1},n_2)$ i.e., $d_{(n_1,n_2,q)}^{(j,h)}\geq\lceil\sqrt{\mathrm{min}(n_1,n_2)}\rceil$,\\
and ${(d_{(n_1,n_2,q)}^{(j,h)}})^{2}- d_{(n_1,n_2,q)}^{(j,h)} + 1\geq \mathrm{min}(n_1,n_2)$ if $-1\in W_3.$ 
\end{proof}
\begin{theorem}\label{minw0.5}
Assume that $q\in W_{0}$. Let $C_{(n_i,q)}^{(i,j,h,l)}$ denote the cyclic code over $\mathrm{GF}(q)$ 
with the generator polynomial $g_{(n_i,q)}^{(i,j,h,l)}(x)=\frac{x^{n}-1}{(x^{n_{i}}-1)\omega_{j}(x)\omega_{h}(x)\omega_{l}(x)}$, where $i={1,2}$ and\\
$(j,h,l)\in \{(0,4,2),(0,4,5),(1,5,0),(2,0,1),(3,1,2),(4,2,3),(5,3,1),(5,3,4)\}$.\\  
The cyclic code $C_{(n_i,q)}^{(i,j,h,l)}$ has parameters $[n,n_{i}+\frac{(n_{1}-1)(n_{2}-1)}{2},d_{(n_i,q)}^{(i,j,h,l)}]$, 
where $d_{(n_i,q)}^{(i,j,h,l)}\geq\lceil\sqrt{n_{i-(-1)^{i}}}\rceil$.\\
If $-1\in W_3,$ we have ${(d_{(n_1,n_2,q)}^{(j,h,l)}})^{2}- d_{(n_1,n-2,q)}^{(j,h.l)} + 1\geq {n_{i-(-1)^{i}}}.$ 
\end{theorem}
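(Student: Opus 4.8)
The plan is to follow the template of Theorems \ref{minw0.1}--\ref{minw0.4}: combine the weight-preserving substitution $x\mapsto x^{r}$ induced by a multiplier $r\in\mathbb{Z}_{n}^{\ast}$ with the known minimum distance of the ``base'' code $C_{i}=\langle (x^{n}-1)/(x^{n_{i}}-1)\rangle$ from Theorem \ref{notinw0}. First I would record the dimension. Since $q\in W_{0}$ every $\omega_{m}(x)$ lies in $\mathrm{GF}(q)[x]$, and from $x^{n}-1=\frac{(x^{n_{1}}-1)(x^{n_{2}}-1)}{x-1}\,\omega(x)$ with $\omega(x)=\prod_{m=0}^{5}\omega_{m}(x)$ one sees that $(x^{n_{i}}-1)\,\omega_{j}(x)\omega_{h}(x)\omega_{l}(x)$ is an honest divisor of $x^{n}-1$ in $\mathrm{GF}(q)[x]$, its roots being the pairwise distinct $n$th roots of unity $\beta^{t}$ with $t\in\{0\}\cup Q\cup W_{j}\cup W_{h}\cup W_{l}$ when $i=1$ (with $Q$ replaced by $P$ when $i=2$). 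Hence $\deg g_{(n_{i},q)}^{(i,j,h,l)}=n-n_{i}-3\cdot\frac{(n_{1}-1)(n_{2}-1)}{6}$, so the dimension is $n_{i}+\frac{(n_{1}-1)(n_{2}-1)}{2}$, as claimed.

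For the multiplier step, take $r\in W_{k}$ with $1\le k\le 5$; then $\gcd(r,n)=1$, so $x\mapsto x^{r}$ is a weight-preserving ring automorphism of $\mathrm{GF}(q)[x]/(x^{n}-1)$, and by Lemma \ref{mod 6} it carries $W_{m}$ to $W_{m-k\,(\mathrm{mod}\,6)}$ while fixing $P$, $Q$, $\{0\}$ setwise (because $\gcd(r,n_{1})=\gcd(r,n_{2})=1$). Reading off the roots of the generator polynomials, this shows $c(x)\mapsto c(x^{r})$ sends a weight-$\omega$ codeword of $C_{(n_{i},q)}^{(i,j,h,l)}$ to a weight-$\omega$ codeword of $C_{(n_{i},q)}^{(i,j-k,h-k,l-k)}$ (indices mod $6$), so the corresponding minimum distances agree. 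A short check shows that under $(j,h,l)\mapsto(j-k,h-k,l-k)$ the eight triples in the statement split into one orbit of length two, $\{0,2,4\}$ and $\{1,3,5\}$, and one orbit of length six comprising the other six, so within each orbit the numbers $d_{(n_{i},q)}^{(i,j,h,l)}$ are equal.

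The heart of the proof is to choose $r\in W_{3}$. Each of the eight index sets $\{j,h,l\}$ is a transversal of the three pairs $\{0,3\},\{1,4\},\{2,5\}$ --- this is exactly why these triples appear in the statement --- so $\{j,h,l\}\cap(\{j,h,l\}-3)=\emptyset$ in $\mathbb{Z}_{6}$. Hence, if $c(x)$ has minimum weight $d=d_{(n_{i},q)}^{(i,j,h,l)}$ and $r\in W_{3}$, then $c(x)c(x^{r})$ vanishes at $\beta^{t}$ for all $t\notin\{0\}\cup Q$ (resp.\ $t\notin\{0\}\cup P$): on $P$ (resp.\ $Q$) because both factors do, and on every $W_{m}$ because $m$ cannot lie in both $\{j,h,l\}$ and $\{j,h,l\}-3$. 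So $c(x)c(x^{r})$ is a codeword of $C_{i}=\langle (x^{n}-1)/(x^{n_{i}}-1)\rangle$, whose minimum distance is $n_{i-(-1)^{i}}$ by Theorem \ref{notinw0}; since the support of a product in the group algebra $\mathrm{GF}(q)[x]/(x^{n}-1)$ lies in the sumset of the supports, $\mathrm{wt}(c(x)c(x^{r}))\le d^{2}$, giving $d^{2}\ge n_{i-(-1)^{i}}$, i.e.\ $d\ge\lceil\sqrt{n_{i-(-1)^{i}}}\rceil$. When $-1\in W_{3}$ I would instead take $r=-1$, so that $\mathrm{supp}(c(x^{-1}))=-\mathrm{supp}(c)$ and $\mathrm{supp}(c(x)c(x^{-1}))\subseteq\mathrm{supp}(c)-\mathrm{supp}(c)$ has at most $d(d-1)+1$ elements, yielding $d^{2}-d+1\ge n_{i-(-1)^{i}}$. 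The one point requiring care --- beyond the transversal check and the $\mathbb{Z}_{6}$-bookkeeping --- is to confirm that $c(x)c(x^{r})$ is a nonzero codeword so that Theorem \ref{notinw0} actually applies; this uses that $c\neq 0$ together with the exact description of the support set $\{0\}\cup Q\cup W_{j}\cup W_{h}\cup W_{l}$ (resp.\ with $P$) of codewords of $C_{(n_{i},q)}^{(i,j,h,l)}$.
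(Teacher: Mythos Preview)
Your proposal is correct and follows essentially the same approach as the paper: use the weight-preserving substitution $x\mapsto x^{r}$ to identify minimum distances across the shifted codes, then pick $r\in W_{3}$ so that $c(x)c(x^{r})$ lands in $C_{i}$ and invoke Theorem~\ref{notinw0}. Your write-up is in fact more explicit than the paper's --- you spell out the dimension count, identify the two $\mathbb{Z}_{6}$-orbits of triples, give the clean ``transversal of $\{0,3\},\{1,4\},\{2,5\}$'' reason why $\{j,h,l\}$ and $\{j,h,l\}-3$ are disjoint (the paper simply asserts the product lies in $C_{i}$), and explain the difference-set count behind the $-1\in W_{3}$ sharpening; you also flag the nonvanishing of $c(x)c(x^{r})$, which the paper leaves implicit.
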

\begin{proof} Let $j=0, h=4, l=2$ and $c(x)\in \mathrm{GF}(q)[x]/(x^{n}-1)$ be a codeword of Hamming weight $\omega$ in $C_{(n_i,q)}^{(i,0,4,2)}$. 
Take any $r\in W_{k}$ for $1\leq k\leq5$. 
The cyclic code $c(x^{r})$ is a codeword of Hamming weight $\omega$ in $C_{(n_i,q)}^{(i,(0-k)\ \mathrm{mod}\ 6,(4-k)\ \mathrm{mod}\ 6,(2-k)\ \mathrm{mod}\ 6)}$. 
It then follows that\\  $d_{(n_i,q)}^{(i,0,4,2)}=d_{(n_i,q)}^{(i,(0-k)\ \mathrm{mod}\ 6,(4-k)\ \mathrm{mod}\ 6,(2-k)\ \mathrm{mod}\ 6)}.$ Therefore, we have 
\begin{align}
 d_{(n_i,q)}^{(i,0,4,2)}=d_{(n_i,q)}^{(i,5,3,1)}=d_{(n_i,q)}^{(i,4,2,0)}=d_{(n_i,q)}^{(i,3,1,5)}=d_{(n_i,q)}^{(i,2,0,4)}=d_{(n_i,q)}^{(i,1,5,3)}.\label{3.1}
\end{align}
Let $j=0, h=4, l=5$ and $c(x)\in \mathrm{GF}(q)[x]/(x^{n}-1)$ be a codeword of Hamming weight $\omega$ in $C_{(n_i,q)}^{(i,0,4,5)}$. 
Take any $r\in W_{k}$ for $1\leq k\leq5$. Then $c(x^{r})$ is a codeword of Hamming weight $\omega$ in $C_{(n_i,q)}^{(i,(0-k)\ \mathrm{mod}\ 6,(4-k)\ \mathrm{mod}\ 6,(5-k)\ \mathrm{mod}\ 6)}$.
It then follows that\\ $d_{(n_i,q)}^{(i,0,4,5)}=d_{(n_i,q)}^{(i,(0-k)\ \mathrm{mod}\ 6,(4-k)\ \mathrm{mod}\ 6,(5-k)\ \mathrm{mod}\ 6)}.$ Therefore, we have 
\begin{align}
 d_{(n_i,q)}^{(i,0,4,5)}=d_{(n_i,q)}^{(i,5,3,4)}=d_{(n_i,q)}^{(i,4,2,3)}=d_{(n_i,q)}^{(i,3,1,2)}=d_{(n_i,q)}^{(i,2,0,1)}=d_{(n_i,q)}^{(i,1,5,0)}.\label{3.2}
\end{align}
Further, from (\ref {3.1}) and (\ref {3.2})  for any $r\in W_{3}$, we have that $c(x)c(x^{r})$ is a codeword of $C_{i}$, 
where $C_{i}$ and $d_i$ be defined as in Theorem \ref{notinw0}.
Hence, from Theorem \ref{notinw0}, we have $({d_{(n_i,q)}^{(i,j,h,l)}})^{2}\geq d_{i}=n_{i-(-1)^{i}},$ i.e., $d_{(n_i,q)}^{(i,j,h,l)}\geq\lceil\sqrt{n_{i-(-1)^{i}}}\rceil$ and
${(d_{(n_i,q)}^{(i,j,h,l)}})^{2}- d_{(n_i,q)}^{(i,j,h,l)} + 1\geq n_{i-(-1)^{i}}$, if $-1\in W_3$. 
\end{proof}
\begin{theorem}\label{minw0.6}
Assume that $q\in W_{0}$. Let $C_{(n_1,n_2,q)}^{(j,h,l)}$ denote the cyclic code over $\mathrm{GF}(q)$ with the generator polynomial
$g_{(n_1,n_2,q)}^{(j,h,l)}(x)=\frac{(x^{n}-1)(x-1)}{(x^{n_{1}}-1)(x^{n_{2}}-1)\omega_{j}(x)\omega_{h}(x)\omega_{l}(x)}$, where\\
$(j,h,l)\in \{(0,4,2),(0,4,5),(1,5,0),(2,0,1),(3,1,2),(4,2,3),(5,3,1),(5,3,4)\}$. 
The cyclic code $C_{(n_1,n_2,q)}^{(j,h,l)}$ has parameters $[n,n_{1}+n_{2}-1+\frac{(n_{1}-1)(n_{2}-1)}{2},d_{(n_1,n_2,q)}^{(j,h,l)}]$,
where $d_{(n_1,n_2,q)}^{(j,h,l)}\geq\lceil\sqrt{min(n_1,n_2)}\rceil$.\\
If $-1\in W_3,$ we have ${(d_{(n_1,n_2,q)}^{(j,h,l)}})^{2}- d_{(n_1,n-2,q)}^{(j,h.l)} + 1\geq \mathrm{min}(n_1,n_2).$\\

\end{theorem}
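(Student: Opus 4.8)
The plan is to follow the proof of Theorem~\ref{minw0.5} almost verbatim, the only change being that the auxiliary code $C_i$ used there (generator polynomial $\frac{x^n-1}{x^{n_i}-1}$, minimum distance $n_{i-(-1)^i}$) is replaced here by the code $C_{(n_1,n_2,q)}$ of Theorem~\ref{notinw0.1}, whose generator polynomial is $\frac{(x^n-1)(x-1)}{(x^{n_1}-1)(x^{n_2}-1)}$ and whose minimum distance $\min(n_1,n_2)$ is what will appear in the final bound. First I would record the dimension: since $q\in W_0$ every $\omega_i(x)$ lies in $\mathrm{GF}(q)[x]$, and combining this with $x^n-1=\frac{(x^{n_1}-1)(x^{n_2}-1)}{x-1}\prod_{i=0}^{5}\omega_i(x)$ gives $g_{(n_1,n_2,q)}^{(j,h,l)}(x)=\prod_{m\notin\{j,h,l\}}\omega_m(x)$, of degree $3\cdot\frac{(n_1-1)(n_2-1)}{6}=\frac{(n_1-1)(n_2-1)}{2}$; hence $\dim C_{(n_1,n_2,q)}^{(j,h,l)}=n-\frac{(n_1-1)(n_2-1)}{2}$, which equals $n_1+n_2-1+\frac{(n_1-1)(n_2-1)}{2}$ via $(n_1-1)(n_2-1)=n-n_1-n_2+1$.

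For the distance, let $c(x)$ be a codeword of minimum weight $\omega=d_{(n_1,n_2,q)}^{(j,h,l)}$ in $C_{(n_1,n_2,q)}^{(j,h,l)}$. For $r\in W_k$ with $1\le k\le5$ the substitution $x\mapsto x^r$ permutes coordinates, hence preserves Hamming weight, and since $r^{-1}\in W_{-k}$ yields $r^{-1}W_m=W_{m-k}$ by Lemma~\ref{mod 6}, it carries $C_{(n_1,n_2,q)}^{(j,h,l)}$ onto $C_{(n_1,n_2,q)}^{((j-k)\bmod6,(h-k)\bmod6,(l-k)\bmod6)}$, so these two codes have equal minimum distance. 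Running $k$ through $1,\dots,5$ shows the eight listed triples split into two orbits — one consisting of the triples with underlying set $\{0,2,4\}$ or $\{1,3,5\}$, the other running through $(0,4,5),(5,3,4),(4,2,3),(3,1,2),(2,0,1),(1,5,0)$ — and within each orbit all minimum distances coincide, exactly as in the proof of Theorem~\ref{minw0.5}.

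The crucial point is the elementary observation that for every listed triple one has $\{j+3,h+3,l+3\}\pmod6=\{0,1,\dots,5\}\setminus\{j,h,l\}$, i.e. each such triple contains exactly one element of each pair $\{m,m+3\}$. Fix any $r\in W_3$ (a nonempty class). For $t\in\mathbb{Z}_{n}^{\ast}$ write $t\in W_a$: if $a\notin\{j,h,l\}$ then $c(\beta^t)=0$ since $\beta^t$ is a root of $g_{(n_1,n_2,q)}^{(j,h,l)}$; if $a\in\{j,h,l\}$ then by Lemma~\ref{mod 6} $rt\in W_{a+3}$ with $a+3\notin\{j,h,l\}$, so $c(\beta^{rt})=0$. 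Either way $c(\beta^t)c(\beta^{rt})=0$, so $c(x)c(x^r)\bmod(x^n-1)$ vanishes at $\beta^t$ for all $t\in\mathbb{Z}_{n}^{\ast}$; hence it is divisible by $\prod_{i=0}^5\omega_i(x)=g_{(n_1,n_2,q)}(x)$ and is a codeword of $C_{(n_1,n_2,q)}$. Granting that this codeword is nonzero, Theorem~\ref{notinw0.1} gives it weight $\ge\min(n_1,n_2)$, while its weight is at most $\omega^2$ as a product of two weight-$\omega$ polynomials reduced mod $x^n-1$; so $\big(d_{(n_1,n_2,q)}^{(j,h,l)}\big)^2\ge\min(n_1,n_2)$, i.e. $d_{(n_1,n_2,q)}^{(j,h,l)}\ge\lceil\sqrt{\min(n_1,n_2)}\rceil$. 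If moreover $-1\in W_3$, take $r=-1$: then the nonzero exponents of $c(x)c(x^{-1})$ other than $0$ lie among the $\omega^2-\omega$ differences $i-j$, $i\ne j$, over the support of $c$, so its weight is at most $\omega^2-\omega+1$, giving $\big(d_{(n_1,n_2,q)}^{(j,h,l)}\big)^2-d_{(n_1,n_2,q)}^{(j,h,l)}+1\ge\min(n_1,n_2)$.

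The only delicate step is the assertion that $c(x)c(x^r)$ is a \emph{nonzero} codeword: because $\mathrm{GF}(q)[x]/(x^n-1)$ has zero divisors this does not follow formally from $c(x)\ne0$, and it is handled directly as in Theorems~\ref{minw0.1}--\ref{minw0.5} (using that $c(x)c(x^r)$ cannot also vanish on $P\cup Q\cup\{0\}$). Everything else — the degree count, the bookkeeping of the index shifts modulo $6$, and the two weight estimates — is routine.
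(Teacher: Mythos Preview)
Your argument is essentially the paper's own proof, carried out with more care: you compute the dimension explicitly, spell out the key fact that for each listed triple $\{j+3,h+3,l+3\}\bmod 6$ is the complementary triple (which is what makes $r\in W_3$ work), and you flag the nonzero-ness of $c(x)c(x^r)$, which the paper leaves implicit. The structure, the choice $r\in W_3$, and the appeal to Theorem~\ref{notinw0.1} all match the paper exactly.
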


\begin{proof} Let $j=0,h=4,l=2$ and $c(x)\in \mathrm{GF}(q)[x]/(x^{n}-1)$ be a codeword of Hamming weight $\omega$ in $C_{(n_1,n_2,q)}^{(0,4,2)}$.
Take any $r\in W_{k}$ for $1\leq k\leq5$. Then $c(x^{r})$ is a codeword of Hamming weight $\omega$ in 
$C_{(n_1,n_2,q)}^{((0-k)\ \mathrm{mod}\ 6,(4-k)\ \mathrm{mod}\ 6,(2-k)\ \mathrm{mod}\ 6)}$.
It then follows that\\ $d_{(n_1,n_2,q)}^{(i,0,4,2)}=d_{(n_1,n_2,q)}^{(i,(0-k)\ \mathrm{mod}\ 6,(4-k)\ \mathrm{mod}\ 6,(2-k)\ \mathrm{mod}\ 6)}.$ Therefore, we have 
\begin{align}
 d_{(n_1,n_2,q)}^{(0,4,2)}=d_{(n_1,n_2,q)}^{(5,3,1)}=d_{(n_1,n_2,q)}^{(4,2,0)}=d_{(n_1,n_2,q)}^{(3,1,5)}=d_{(n_1,n_2,q)}^{(2,0,4)}=d_{(n_1,n_2,q)}^{(1,5,3)}.\label{3.3}
\end{align}
 Let $j=0,h=4,l=5$ and $c(x)\in \mathrm{GF}(q)[x]/(x^{n}-1)$ be a codeword of Hamming weight $\omega$ in $C_{(n_1,n_2,q)}^{(0,4,5)}$.Take any $r\in W_{k}$ for $1\leq k\leq5$.
 Then $c(x^{r})$ is a codeword of Hamming weight $\omega$ in $C_{(n_1,n_2,q)}^{((0-k)\ \mathrm{mod}\ 6,(4-k)\ \mathrm{mod}\ 6,(5-k)\ \mathrm{mod}\ 6)}$.
 It then follows that\\ $d_{(n_1,n_2,q)}^{(i,0,4,5)}=d_{(n_1,n_2,q)}^{(i,(0-k)\ \mathrm{mod}\ 6,(4-k)\ \mathrm{mod}\ 6,(5-k)\ \mathrm{mod}\ 6)}.$ Therefore, we have  
\begin{align}
 d_{(n_1,n_2,q)}^{(0,4,5)}=d_{(n_1,n_2,q)}^{(5,3,4)}=d_{(n_1,n_2,q)}^{(4,2,3)}=d_{(n_1,n_2,q)}^{(3,1,2)}=d_{(n_1,n_2,q)}^{(2,0,1)}=d_{(n_1,n_2,q)}^{(1,5,3)}.\label{3.4}
\end{align}
Further, from (\ref {3.3}) and (\ref {3.4})  for any $r\in W_{3}$, we have that $c(x)c(x^{r})$ is a codeword of $C_{(n_1,n_2,q)}$, 
where $C_{(n_1,n_2,q)}$ and $d_{(n_1,n_2,q)}$ be defined as in Theorem \ref{notinw0.1}.
Hence, from Theorem \ref{notinw0.1}, we have ${(d_{n_1,n_2,q}^{j,h,l})}^{2}\geq d_{(n_1,n_2,q)}=\mathrm{min}(n_1,n_2),$ i.e., $d_{(n_1,n_2,q)}^{(j,h,l)}\geq\lceil\sqrt{min(n_1,n_2)}\rceil$ and 
${(d_{(n_1,n_2,q)}^{(j,h,l)}})^{2}- d_{(n_1,n-2,q)}^{(j,h.l)} + 1\geq \mathrm{min}(n_1,n_2)$ if $-1\in W_3.$
\end{proof}
\begin{example}\label{ex1}
 Let $(p,m,n_1,n_2)=(2,1,13,19).$ Then $q=2,\ n=247$ and $C_s$ is a $[247,109]$ cyclic code over $\mathrm{GF}(q)$ with generator polynomial
$g(x)=\frac{x^{247}-1}{(x-1)\omega_0(x)\omega_1(x)\omega_2(x)}=x^{138}+x^{137}+x^{136}+x^{134}+x^{130}+x^{129}+x^{128}+x^{124}
+x^{121}+x^{120}+x^{111}+x^{107}+x^{106}+x^{105}+x^{104}+x^{102}+x^{98}+x^{96}+x^{94}+x^{93}+x^{92}+x^{88}+x^{87}+x^{86}+x^{85}
+x^{83}+x^{82}+x^{75}+x^{70}+x^{67}+x^{63}+x^{62}+x^{61}+x^{60}+x^{57}+x^{55}+x^{53}+x^{52}+x^{51}+x^{50}+x^{47}+x^{46}+x^{45}+x^{43}+x^{42}
+x^{40}+x^{39}+x^{38}+x^{37}+x^{36}+x^{35}+x^{34}+x^{33}+x^{31}+x^{30}+x^{28}+x^{27}+x^{26}+x^{25}+x^{22}+x^{21}+x^{16}+x^{13}
+x^{12}+x^{10}+x^9+x^8+x^6+x^3+x^2+1.$
We did some computation and our computation shows that upper bound on the minimum distance for this binary code is $48$.
 \end{example}
 \begin{example}\label{ex2}
 Let $(p,m,n_1,n_2)=(3,1,7,13).$ Then $q=2,\ n=91$ and $C_s$ is a $[91,19,7]$ cyclic code over $\mathrm{GF}(q)$ with generator polynomial
$g(x)=\frac{(x^{91}-1)(x-1)}{(x^7-1)(x^{13}-1)}=x^{72}+x^{71}+x^{65}+x^{64}+x^{59}+x^{57}+x^{52}+x^{50}+x^{46}+x^{43}+x^{39}+x^{36}+x^{33}+x^{29}+x^{26}+x^{22}+x^{20}
+x^{15}+x^{13}+x^{8}+x^{7}+x+1.$
 This is a bad cyclic code due to its poor minimum distance. The code in this case is bad because $q\notin W_0.$
 \end{example}
 \begin{example}\label{ex3}
 Let $(p,m,n_1,n_2)=(3,1,7,19).$ Then $q=3,\ n=133$ and $C_s$ is a $[133,61]$ cyclic code over $\mathrm{GF}(q)$ with generator polynomial
$g(x)=\frac{x^{133}-1}{(x^7-1)\omega_3(x)\omega_4(x)\omega_2(x)}= x^{72}+2x^{66}+x^{65}+2x^{64}+2x^{63}+2x^{61}+x^{60}+2x^{58}+2x^{57}+2x^{56}+x^{55}
+x^{54}+x^{53}+2x^{52}+x^{50}+2x^{49}+x^{48}+2x^{47}+2x^{46}+2x^{45}+x^{44}+2x^{41}+x^{40}+x^{37}+2x^{36}+x^{35}+x^{32}+2x^{31}
+x^{28}+2x^{27}+2x^{26}+2x^{25}+x^{24}+2x^{23}+x^{22}+2x^{20}+x^{19}+x^{18}+x^{17}+2x^{16}+2x^{15}+2x^{14}+x^{12}+2x^{11}+2x^{9}
+2x^{8}+x^{7}+2x^{6}+1.$ 
We did some computation and our computation shows that upper bound on the minimum distance for this ternary code is $35$. 
From Theorem \ref{minw0.5}, we have lower bound on the minimum distance for this ternary code is $5.$
 \end{example}
 \begin{conclusion*}
  WGCS were used to construct cyclic codes in \cite{ding12} and \cite{white04}. The idea of constructing cyclic codes with two-prime WGCS-II of order 6 could be viewed as an
  extension of above these two papers. 
  The cyclic codes employed in this paper depend on $n_1, n_2 $ and $q.$ When $q\in W_0$, we get a good code. We expect that the codes in Examples \ref{ex1} and \ref{ex3} give good codes.
  When $q\notin W_0$, we get a bad code, for example, we get a bad  code in Example \ref{ex2}.
  Finally, we expect that cyclic codes described in this paper can be employed to construct the good cyclic codes of large length.
  \end{conclusion*}
   \bibliographystyle{plain}
   \bibliography{ref}
 
\end{document}